\documentclass[a4paper,UKenglish,cleveref, autoref, thm-restate]{lipics-v2021}
%This is a template for producing LIPIcs articles.
%See lipics-v2021-authors-guidelines.pdf for further information.
%for A4 paper format use option "a4paper", for US-letter use option "letterpaper"
%for british hyphenation rules use option "UKenglish", for american hyphenation rules use option "USenglish"
%for section-numbered lemmas etc., use "numberwithinsect"
%for enabling cleveref support, use "cleveref"
%for enabling autoref support, use "autoref"
%for anonymousing the authors (e.g. for double-blind review), add "anonymous"
%for enabling thm-restate support, use "thm-restate"
%for enabling a two-column layout for the author/affilation part (only applicable for > 6 authors), use "authorcolumns"
%for producing a PDF according the PDF/A standard, add "pdfa"

%\pdfoutput=1 %uncomment to ensure pdflatex processing (mandatatory e.g. to submit to arXiv)
\hideLIPIcs  %uncomment to remove references to LIPIcs series (logo, DOI, ...), e.g. when preparing a pre-final version to be uploaded to arXiv or another public repository

\graphicspath{{./img/}}%helpful if your graphic files are in another directory

\bibliographystyle{plain}% the mandatory bibstyle

\title{Planar Rosa: a family of quasiperiodic substitution discrete plane tilings with $2n$-fold rotational symmetry}

\titlerunning{Planar Rosa}

%% \author{Thomas Fernique}{ Laboratoire d'Informatique de Paris Nord, Université Sorbonne Paris Nord, France }{}{}{}
\author{Jarkko Kari}{University of Turku, Finland}{}{}{}
\author{Victor H. Lutfalla}{ Université Publique, France \\
Université Paris 13, CNRS, LIPN, Villetaneuse, France\\
Aix Marseille Univ, CNRS, LIS, Marseille, France}{victor.lutfalla@lis-lab.fr}{https://orcid.org/0000-0002-1261-0661}{}

\authorrunning{J. Kari and V. H. Lutfalla }

\Copyright{Jarkko Kari and Victor H. Lutfalla} % mandatory, please use full first names. LIPIcs license is "CC-BY";  http://creativecommons.org/licenses/by/3.0/

\ccsdesc[500]{Mathematics of computing~Discrete mathematics}
\ccsdesc[300]{Mathematics of computing~Combinatorics}

\keywords{Tilings, Substitution, Quasiperiodic, Discrete Planes}

\category{} %optional, e.g. invited paper

\relatedversion{} %optional, e.g. full version hosted on arXiv, HAL, or other respository/website
%\relatedversiondetails[linktext={opt. text shown instead of the URL}, cite=DBLP:books/mk/GrayR93]{Classification (e.g. Full Version, Extended Version, Previous Version}{URL to related version} %linktext and cite are optional

%\supplement{}%optional, e.g. related research data, source code, ... hosted on a repository like zenodo, figshare, GitHub, ...
%\supplementdetails[linktext={opt. text shown instead of the URL}, cite=DBLP:books/mk/GrayR93, subcategory={Description, Subcategory}, swhid={Software Heritage Identifier}]{General Classification (e.g. Software, Dataset, Model, ...)}{URL to related version} %linktext, cite, and subcategory are optional

%\funding{(Optional) general funding statement \dots}%optional, to capture a funding statement, which applies to all authors. Please enter author specific funding statements as fifth argument of the \author macro.

\acknowledgements{We wish to thank Thomas Fernique for his help and proofreading.}%optional

\nolinenumbers %uncomment to disable line numbering

%Editor-only macros:: begin (do not touch as author)%%%%%%%%%%%%%%%%%%%%%%%%%%%%%%%%%%
\EventEditors{John Q. Open and Joan R. Access}
\EventNoEds{2}
\EventLongTitle{42nd Conference on Very Important Topics (CVIT 2016)}
\EventShortTitle{CVIT 2016}
\EventAcronym{CVIT}
\EventYear{2016}
\EventDate{December 24--27, 2016}
\EventLocation{Little Whinging, United Kingdom}
\EventLogo{}
\SeriesVolume{42}
\ArticleNo{23}
%%%%%%%%%%%%%%%%%%%%%%%%%%%%%%%%%%%%%%%%%%%%%%%%%%%%%%

\newcommand{\subrosa}{Sub Rosa }
\newcommand{\planarrosa}{Planar Rosa }

%%basic
\renewcommand{\phi}{\varphi}
\renewcommand{\leq}{\leqslant}
\renewcommand{\geq}{\geqslant}
\newcommand{\R}[1]{\mathbb{R}^{#1}}
\newcommand{\Z}[1]{\mathbb{Z}^{#1}}
\newcommand{\N}{\mathbb{N}}
\newcommand{\tuple}[1]{\left(#1\right)}

\newcommand{\modulo}[1]{\left[ #1 \right]}

\newcommand{\half}{\ensuremath{\frac{1}{2}} }
\newcommand{\thalf}{\ensuremath{\tfrac{1}{2}} }
\newcommand{\nhalf}{\ensuremath{\frac{n}{2}} }
\renewcommand{\epsilon}{\varepsilon}

%% matrices and linear algebra
\newcommand{\trans}{^\mathsf{T}}

%%% tilings
\newcommand{\tiling}{\mathcal{T}}
\newcommand{\patch}{\mathbf{P}}

\newcommand{\tileset}{\mathbf{T}}

%% vectors
\newcommand{\vv}[1]{\vec{v}_{#1}}
\newcommand{\ee}[1]{\vec{e}_{#1}}

%% words
\newcommand{\reverse}[1]{\overline{#1}}
\newcommand{\abel}[1]{\left[#1\right]}

%%% substitutions
\newcommand{\boundary}{\partial}

%%% text
\newcommand{\ie}{\emph{i.e.}, }

%% lifting and discrete plane
\newcommand{\lift}[1]{\widehat{#1}}
\newcommand{\tile}[2]{T_{#1,#2}}
\newcommand{\intsquare}[3]{S_{#1,#2,#3}}
\newcommand{\spanning}[1]{\left\langle #1 \right\rangle}

%%% planar nfold
\renewcommand{\star}[1]{S(#1)}

\newcommand{\slope}{\mathcal{E}}
\newcommand{\elementaryMatrix}[2]{EM_{#1}(#2)}
\newcommand{\expansionMatrix}[1]{M_{#1}}
\newcommand{\elementaryEigenvalue}[3]{\lambda_{#1,#2,#3}}
\newcommand{\identity}[1]{\mathrm{Id}_{#1}}
\newcommand{\eigenMatrix}[1]{N_{#1}}
\newcommand{\hyperplane}[2]{H_{#1,#2}}

%%% planar rosa
\newcommand{\prSubs}[1]{\sigma'_{#1}} %% planar rosa substitution
\newcommand{\prExp}[1]{\phi'_{#1}} %% planar rosa expansion function
\newcommand{\prMat}[1]{M'_{#1}} %% planar rosa expansion matrix
 %% candidate for planar rosa expansion
\newcommand{\prWord}[1]{\Sigma_{(#1)}} %% candidate edgeword
\newcommand{\bword}{w }
\newcommand{\prTiling}[1]{\mathcal{T}'_{#1}}
\newcommand{\tgamma}[1]{\ensuremath{\tilde{\gamma}_{#1}}}

%%% complex numbers
\newcommand{\imag}{\mathbf{i}}

 %% real part of a complex

%%%% multigrid
\newcommand{\grid}[2]{H(#1,#2)}
\newcommand{\multigrid}[2]{G_{#1}(#2)}
\newcommand{\dualtiling}[2]{P_{#1}(#2)}

%%% theorems

\begin{document}

\maketitle

\begin{abstract}
We present Planar Rosa,
a family of rhombus tilings with a $2n$-fold rotational symmetry that are generated by a primitive substitution and that are also discrete plane tilings, meaning that
they are obtained as a projection of a higher dimensional discrete plane. The discrete plane condition is a relaxed version of the cut-and-project condition.
We also prove that the Sub Rosa substitution tilings with $2n$-fold rotational symmetry defined by Kari and Rissanen do not satisfy even the weaker discrete plane condition.
We prove these results for all even $n\geq 4$. This completes our previously published results for odd values of $n$.

\end{abstract}

%%%%%%%%%%%%%%%%%%%%%%%%%%%%%%%%%%%%%%%%%%%%%%%%%%%%%%%%%%%%%%%%%%%%%%%%%%%%%%%%%%%%%%%%
%%%%%%%%%%%%%%%%%%%%%%%%%%%%%%%%%%%%%%%%%%%%%%%%%%%%%%%%%%%%%%%%%%%%%%%%%%%%%%%%%%%%%%%%
%%%%%%%%%%%%%%%%%%%%%%%%%%%%%%%%%%%%%%%%%%%%%%%%%%%%%%%%%%%%%%%%%%%%%%%%%%%%%%%%%%%%%%%%
%%%%%%%%%%%%%%%%%%%%%%%%%%%%%%%%%%%%%%%%%%%%%%%%%%%%%%%%%%%%%%%%%%%%%%%%%%%%%%%%%%%%%%%%
%%%%%%%%%%%%%%%%%%%%%%%%%%%%%%%%%%%%%%%%%%%%%%%%%%%%%%%%%%%%%%%%%%%%%%%%%%%%%%%%%%%%%%%%
%%%%%%%%%%%%%%%%%%%%%%%%%%%%%%%%%%%%%%%%%%%%%%%%%%%%%%%%%%%%%%%%%%%%%%%%%%%%%%%%%%%%%%%%
\section{Introduction}

We study non-periodic rhombus tilings having rotational symmetries. A famous example
by R.~Penrose is a tiling of the plane by copies of two rhombuses that is invariant under a global
five-fold rotation~\cite{penrose1974}. The Penrose tiling has good properties: it is generated by a substitution
process, but it can also be obtained as a projection of a discrete plane in the five dimensional space. More precisely the Penrose tiling satisfies both the \emph{discrete plane} condition and the more restrictive \emph{cut-and-project} condition.
The \emph{discrete plane} condition is that the discrete surface of $\mathbb{R}^5$, from which the tiling is projected, approximates a 2D plane of $\mathbb{R}^5$ by staying within bounded distance of it. This 2D plane is called the \emph{slope} of the tiling.
The \emph{cut-and-project} condition additionally requires that the discrete surface, when projected along the slope onto the orthogonal complement, is a compact set which is the closure of its interior.
The discrete plane condition and the cut-and-project condition guarantee that despite its non-periodicity
the tiling has long-range order, a characteristic feature of
quasicrystals~\cite{baake2013, senechal1996}. We want to construct rhombus tilings with both these good properties and with other orders of rotational symmetry than five. Note that cut-and-project rhombus tilings with any order of rotational symmetry (but with no known substitutive structure) can be constructed by the multigrid method \cite{debruijn1981, gahler1986, lutfalla2021multigrid}.

In~\cite{kari2016}, for any $n\geq 3$, a primitive rhombus substitution called \subrosa was presented that generates a tiling $\tiling_n$ with $2n$-fold global rotational symmetry, see Figures \ref{fig:subrosa4_tiling_banner} and \ref{fig:subrosa6_tiling_banner} for examples of \subrosa tilings.
However, it turns out that these tilings do not have the second desired property: for $n\notin \{3,5\}$,
they are not projections of $n$-dimensional discrete planes, not even in a
relaxed sense where the discrete plane is allowed to deviate within
a uniformly bounded distance from an exact plane of $\mathbb{R}^n$.

\begin{figure}[t]
\centering
\begin{subfigure}{\textwidth}
\includegraphics[width=\textwidth]{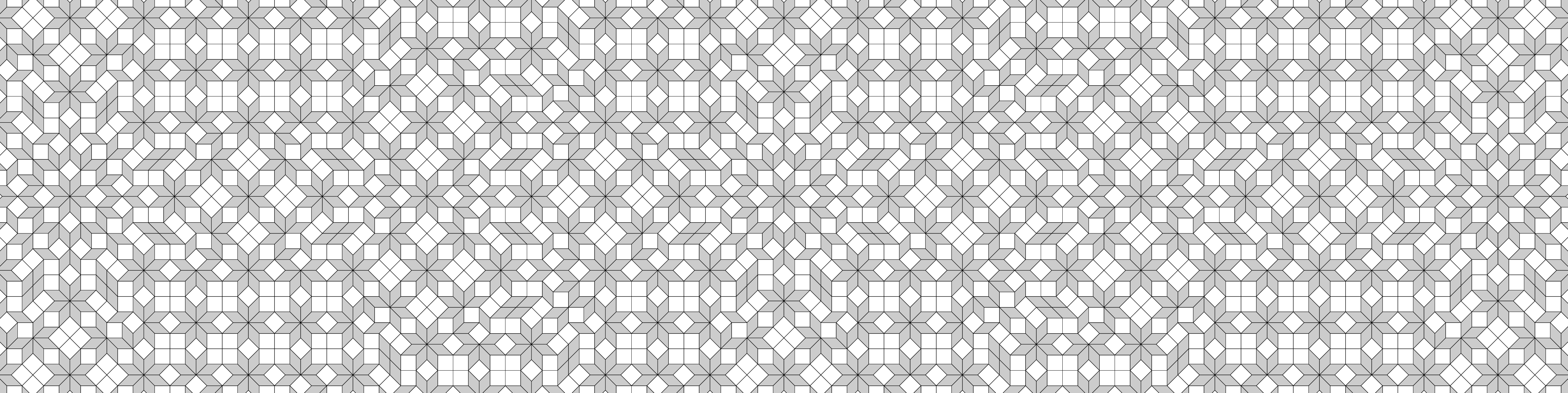}
\caption{$\tiling_4$: the canonical \subrosa $4$ substitution tiling.}
\label{fig:subrosa4_tiling_banner}
\end{subfigure}
\begin{subfigure}{\textwidth}
\includegraphics[width=\textwidth]{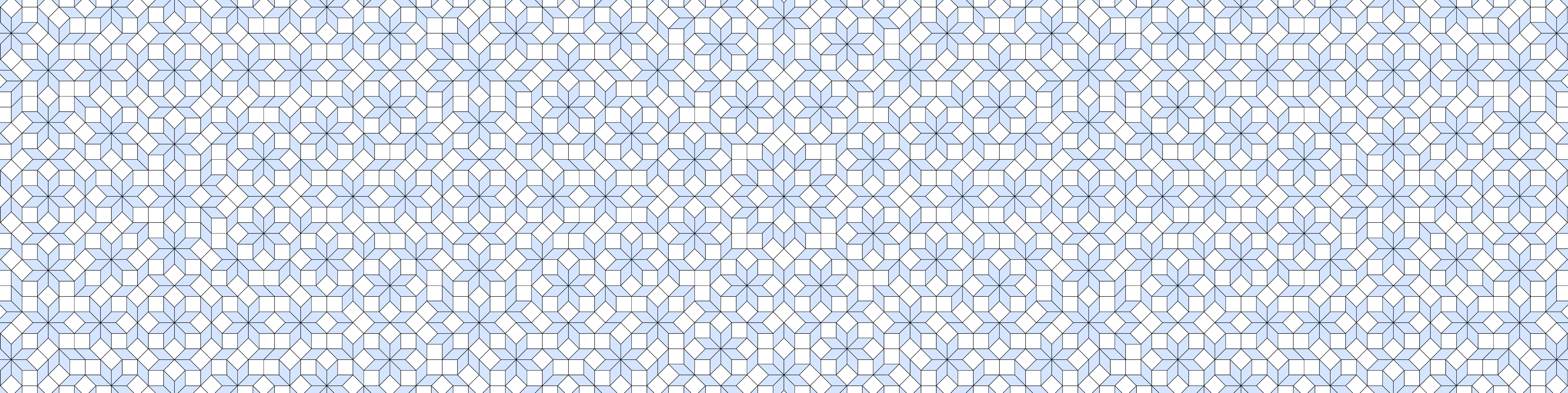}
\caption{$\tiling'_4$: the canonical \planarrosa $4$ substitution discrete plane tiling.}
\label{fig:planarrosa4_tiling_banner}
\end{subfigure}
\caption{Central fragments of canonical \subrosa $4$ and \planarrosa $4$ tilings with $8$-fold rotational symmetry.}
\label{fig:tilings_8fold}
\end{figure}

\begin{theorem}
  \label{th:subrosa_not_planar}
  For any $n\geq 3$ the canonical \subrosa tiling $\tiling_n$ can be lifted to a discrete surface of $\R{n}$ and:
  \begin{enumerate}
  \item The canonical \subrosa tilings $\tiling_3$ and $\tiling_5$ are discrete plane tilings.
  \item For $n=4$ and for any $n\geq 6$ the canonical \subrosa tiling $\tiling_n$ is not a discrete plane tiling.
  \end{enumerate}
\end{theorem}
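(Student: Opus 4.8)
The plan is to reduce the discrete-plane question for $\tiling_n$ to a spectral inequality for the \emph{edge substitution} induced by $\sigma$ on the $n$ edge directions, and then to check that inequality case by case. First I set up the lift and pin down the only possible slope. The canonical \subrosa tiling $\tiling_n$ is the fixed point of a primitive rhombus substitution $\sigma$ with linear expansion $\phi\colon\R2\to\R2$, and since each rhombus lifts consistently to a unit square of $\Z n$ this already yields a discrete surface $\lift{\tiling_n}\subset\R n$, which is the first assertion of the theorem. Under $\sigma$ the unit edge in direction $k$ becomes a fault line, a word $\edgeword k$ over the $2n$ edge directions; identifying antipodal directions with a sign makes its abelianisation a vector of $\Z n$, and since $\edgeword k$ is the $\tfrac{k\pi}{n}$-rotation of $\edgeword 0$, the matrix $M_n$ with these vectors as columns is a skew-circulant. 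Hence $M_n$ is diagonalised by the Fourier modes $v^{(j)}=(\zeta_{2n}^{\,jl})_l$ with $j$ odd (where $\zeta_{2n}:=e^{i\pi/n}$), with eigenvalues $P_n(\zeta_{2n}^{\,j})$, $P_n(x)=\sum_k c_k x^k$ being the generating polynomial of the signed letter counts of $\edgeword 0$. The mode $j=1$ gives exactly the expansion $\phi=P_n(\zeta_{2n})$, of modulus $\lambda_n$; together with $j=-1$ it spans a real $2$-plane $E_n\subset\R n$ onto which the canonical projection $p\colon\R n\to\R2$ restricts to a conjugacy with $\phi$, and a short separation argument --- two planes at bounded distance from the same $2$-dimensionally unbounded, $M_n$-invariant set $\lift{\tiling_n}$ must coincide --- shows that $E_n$ is the \emph{only} candidate slope. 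Writing $\R n=E_n\oplus F_n$ with $F_n$ the complementary $M_n$-invariant subspace spanned by the remaining modes, the eigenvalues of $M_n|_{F_n}$ are exactly the $P_n(\zeta_{2n}^{\,j})$ with $j$ odd and $j\not\equiv\pm1\pmod{2n}$.

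The second step is the criterion: $\lift{\tiling_n}$ lies within bounded distance of $E_n$ if and only if the spectral radius of $M_n|_{F_n}$ is strictly smaller than $\lambda_n$. For the ``if'' direction, unwinding the self-similarity expresses the $F_n$-component of any vertex as a convergent sum of uniformly bounded terms scaled by the successive powers of $\lambda_n^{-1}M_n|_{F_n}$. For the ``only if'' direction, if some eigenvalue $\mu$ of $M_n|_{F_n}$ satisfies $|\mu|\geq\lambda_n$, then the lift of a fault line of direction $k$ substituted $m$ times joins two vertices of $\lift{\tiling_n}$ with displacement $M_n^m\hat e_k$, whose $F_n$-component has norm at least a fixed positive multiple of $|\mu|^m$; hence the distance of $\lift{\tiling_n}$ to $E_n$ is unbounded.

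Finally I evaluate the eigenvalues $P_n(\zeta_{2n}^{\,j})$ for $j$ odd, $j\not\equiv\pm1\pmod{2n}$. They split into the further Galois conjugates of $\lambda_n$ (the indices with $\gcd(j,2n)=1$) and, for composite $n$, values lying in the proper cyclotomic subfields $\mathbb{Q}(\zeta_{2n/d})$ with $d:=\gcd(j,2n)>1$. For $n=3$ and $n=5$ a direct computation shows all of them have modulus below $\lambda_n$ --- for $n=5$ this is the Pisot property underlying the Penrose tiling --- so $\tiling_3$ and $\tiling_5$ are discrete plane tilings. For $n=4$ one checks that $|P_4(\zeta_8^{\,3})|\geq|P_4(\zeta_8)|=\lambda_4$, which comes down to a single sign inequality on the \subrosa $4$ fault line $\edgeword 0$; for every even $n\geq6$, and then for every odd $n\geq7$, one exhibits among these eigenvalues one of modulus at least $\lambda_n$.

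This last point is the main obstacle: carrying out the ``only if'' estimate for general $n$ requires an explicit enough description of the \subrosa fault line $\edgeword 0$ together with uniform trigonometric bounds on its Fourier coefficients, and the cases $n=4$, even $n\geq6$, and odd $n\geq7$ will likely need separate arguments, since the divisor structure of $2n$ governs which eigenvalue is the first to reach modulus $\lambda_n$.
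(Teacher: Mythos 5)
There is a genuine gap, and it is in the step you treat as routine rather than in the one you flag as the main obstacle. Your planarity criterion — that $\lift{\tiling_n}$ stays within bounded distance of the slope $E_n$ if and only if the spectral radius of $M_n$ restricted to the complementary invariant subspace $F_n$ is strictly smaller than $\lambda_n$ — has the wrong threshold. Unwinding the self-similarity writes the $F_n$-component of a vertex at substitution level $m$ as $\sum_{i}\bigl(M_n|_{F_n}\bigr)^{i}u_i$ with the $u_i$ uniformly bounded; there is no $\lambda_n^{-1}$ normalisation, because the discrete-plane condition is about \emph{absolute} distance to the plane, not distance relative to the diameter of the metatile. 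The correct dichotomy is therefore governed by comparing the complementary eigenvalues to $1$: bounded deviation requires all of them to have modulus strictly less than $1$, and any eigenvalue of modulus greater than $1$ on $F_n$ (together with your uniqueness-of-slope observation, or the paper's dimension-$\geq 3$ expanding-subspace criterion) forces non-planarity. This is not a cosmetic issue here, because for the \subrosa substitutions the complementary eigenvalues all lie strictly \emph{between} $1$ and $\lambda_n$: the paper computes in closed form that the eigenvalue on the invariant plane indexed by $k$ equals $1/\sin^2\bigl(\tfrac{(2k+1)\pi}{2n}\bigr)$, e.g.\ for $n=4$ the two values are $\approx 6.83$ and $\approx 1.17$. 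Under your criterion ($<\lambda_n$) Sub Rosa $4$ would come out planar; your attempted rescue, the claim that $|P_4(\zeta_8^3)|\geq|P_4(\zeta_8)|=\lambda_4$, is numerically false ($1.17\not\geq 6.83$). With the correct threshold the conclusion flips back to non-planarity, but for the right reason: \emph{every} eigenvalue exceeds $1$, so the expansion is expanding on all of $\R{n}$.

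The second gap is that the decisive computation is deferred. You correctly identify the skew-circulant (pseudo-circulant) structure and the diagonalisation by the odd Fourier modes $(\zeta_{2n}^{jl})_l$ — this matches the paper — but you leave "exhibiting an eigenvalue of the required size" as an open obstacle requiring case distinctions by the divisor structure of $2n$. In fact no Galois/cyclotomic case analysis is needed: the abelianised \subrosa edgeword is explicitly $\abel{\Sigma(n)}=(n-2i)_{0\leq i<n/2}$, and a short telescoping identity gives $\lambda_{n,k}\sin^2\bigl(\tfrac{(2k+1)\pi}{2n}\bigr)=1$ for every $k$, uniformly in $n$. That single formula settles all even $n\geq 4$ at once, and the analogous computation in the companion paper handles odd $n$, with $n\in\{3,5\}$ being exactly the cases where the non-dominant eigenvalues drop below $1$.
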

See Section~\ref{sec:settings} for details of the terminology that is used in the theorem statement.
We provide a proof of Theorem~\ref{th:subrosa_not_planar} for even $n$,
while the case of odd $n$ has been proved in~\cite{kari2021}.

The natural followup question is: for which numbers $n$ does there exist a primitive rhombus
substitution that generates a tiling with $2n$-fold rotational symmetry such that the tiling is
also a projection of a discrete plane in $\mathbb{R}^n$~? For all $n\geq 3$, we present a construction for such a primitive rhombus substitution which we call \planarrosa. See Figures \ref{fig:planarrosa4_tiling_banner} and \ref{fig:planarrosa6_tiling_banner} for examples of \planarrosa tilings.

\begin{theorem}
  \label{th:planar-rosa}
  For any $n\geq 3$ the canonical \planarrosa tiling $\tiling'_n$ is a substitution discrete plane tiling with global $2n$-fold rotational symmetry.
\end{theorem}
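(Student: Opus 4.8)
The plan is to prove the three assertions in turn: that $\prTiling n$ is produced by a primitive rhombus substitution, that its canonical lift to $\R n$ stays within bounded distance of a $2$-plane, and that it is invariant under the rotation $\rho$ of angle $\pi/n$. I would start from the \planarrosa substitution $\prSubs n$, which — like \subrosa — is coded by a single edge word $\bword=\prWord 0$ over the alphabet of the $2n$ unit edge directions, the edge words $\prWord k$ of the other directions being rotated copies of $\bword$, and the image of each rhombic prototile being obtained by tiling the region cut out by its four boundary words. The only change with respect to \subrosa is the choice of $\bword$ (fixed in Section~\ref{sec:settings}); one still has to check, along the lines of \cite{kari2016}, that $\bword$ read as a broken line in $\R 2$ joins the endpoints of the inflated edge (geometric consistency), that the four boundary words of each prototile actually enclose a region admitting a rhombus tiling (combinatorial consistency), and that the resulting tile-substitution matrix is primitive. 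Granting this, $\prSubs n$ induces on the lifted level the integer expansion matrix $\prMat n$ with $\prMat n\,\ee i=\abel{\prWord i}$, and since the $\prWord i$ are rotated copies of one another, $\prMat n$ commutes with the signed cyclic permutation $R$ ($R\ee i=\ee{i+1}$ for $i<n-1$, $R\ee{n-1}=-\ee 0$); hence $\prMat n=p(R)$ for the integer polynomial $p$ whose coefficient vector is $\abel{\bword}$.

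The core of the argument is then a spectral estimate. Because $R$ is non-derogatory with $R^n=-\mathrm{Id}$, its eigenvalues are the $2n$-th roots of unity with odd exponent; writing $\zeta=e^{\imag\pi/n}$, the eigenvalues of $\prMat n$ are therefore the $n$ numbers $p(\zeta^{j})$ for odd $j$, $1\le j\le 2n-1$. The conjugate pair $p(\zeta),\ p(\zeta^{2n-1})=\conjugate{p(\zeta)}$ is the restriction of $\prMat n$ to the $2$-plane $\slope_n\subseteq\R n$ spanned by $(\cos\tfrac{i\pi}{n})_i$ and $(\sin\tfrac{i\pi}{n})_i$, on which $\prMat n$ acts as the similarity of ratio $\mu_n=|p(\zeta)|$ and which projects onto the physical plane of the tiling; this $\slope_n$ will turn out to be the slope of $\prTiling n$. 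I would then prove
\[
  \mu_n=|p(\zeta)|>1
  \qquad\text{and}\qquad
  |p(\zeta^{j})|<1\quad\text{for every odd }j\text{ with }3\le j\le 2n-3,
\]
\emph{uniformly over all even $n\ge 4$}: that is, $\mu_n$ is of Pisot type — it exceeds $1$ while every other eigenvalue of $\prMat n$ lies strictly inside the unit disk. This is precisely the condition that \subrosa fails (see \autoref{th:subrosa_not_planar}), so the bulk of the work — and the main obstacle — is to show that the new word $\bword$ pushes all $n-2$ ``internal'' moduli $|p(\zeta^{j})|$ below $1$ at once; concretely this reduces to a trigonometric estimate for $|p(e^{\imag j\pi/n})|$ that must be uniform in the internal frequency $j$ and in $n$.

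With the spectral condition established, I would invoke the criterion linking the spectrum of the expansion matrix to the discrete plane property that is recalled in Section~\ref{sec:settings} (in the spirit of \cite{kari2021}): if the expansion matrix of a primitive rhombus substitution has a $2$-dimensional expanding eigenplane $\slope_n$ (eigenvalues $\mu_n,\conjugate{\mu_n}$ of modulus $>1$) and all other eigenvalues of modulus $<1$, then the canonical lift of the fixed-point tiling stays within uniformly bounded distance of $\slope_n$. The underlying estimate is the standard one: tracing a vertex that first appears at generation $k$ back through the nested substitutions writes its lift as $\prMat n^{\,k}(\text{bounded})+\sum_{i=0}^{k-1}\prMat n^{\,i}(\text{bounded})$, and projecting onto the orthogonal complement of $\slope_n$ — on which $\prMat n$ is a strict contraction, since $R$ is orthogonal and the internal eigenvalues have modulus $<1$ — bounds the norm by a convergent geometric series, independently of $k$. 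Hence $\prTiling n$ is a discrete plane tiling of slope $\slope_n$.

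Finally, since $\prWord{i+1}$ is the rotation of $\prWord i$ by $\pi/n$, the substitution $\prSubs n$ is equivariant under $\rho$: $\prSubs n(\rho\cdot t)=\rho\cdot\prSubs n(t)$ for every tile $t$. Taking the canonical seed — a $\rho$-invariant central rosette $\patch_0$ around a vertex fixed by $\rho$, chosen so that $\patch_0\subseteq\prSubs n(\patch_0)$ in matching position (a finite verification on the substituted rosette) — one obtains an increasing sequence $\prSubs n^{k}(\patch_0)$ of $\rho$-invariant patches by equivariance and induction, so $\prTiling n=\bigcup_k\prSubs n^{k}(\patch_0)$ is invariant under $\rho$, hence under the full group of $2n$-fold rotations; primitivity ensures this fixed-point tiling is the well-defined canonical one. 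Assembling the three parts gives the theorem.
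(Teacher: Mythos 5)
Your outline follows the same overall architecture as the paper (pseudo-circulant expansion matrix determined by the abelianized edgeword, eigenplanes indexed by the odd powers of $e^{\imag\pi/n}$, the planarity criterion of Proposition~\ref{prop:planarity_non-planarity}, and a rotation-invariant seed), but it leaves unproved exactly the two points that constitute the actual content of the theorem. First, you never construct the edgeword: you state that ``the bulk of the work is to show that the new word $\bword$ pushes all $n-2$ internal moduli below $1$ at once'' and that this ``reduces to a trigonometric estimate,'' but no fixed trigonometric estimate will do this for an unspecified word --- the eigenvalues are $\lambda\trans=\eigenMatrix{n}\cdot\abel{u}\trans$, so everything hinges on choosing the letter frequencies of $u$. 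The paper's key idea is that $\eigenMatrix{n}\cdot\gamma\trans=(\tfrac n2,0,\dots,0)\trans$ for $\gamma=(\cos\tfrac{i\pi}{n})_i$ (the near-orthogonality of a DCT-type matrix, Lemma~\ref{lemma:eigenMatrix_orthogonal}), so one takes $u$ to be a palindrome built from prefixes of the billiard word of slope $\gamma$; Poincar\'e recurrence then gives infinitely many prefix lengths for which $\abel{u}$ is close enough to $\spanning{\gamma}$ that all internal eigenvalues drop below $1$. Without this (or an equivalent) construction, your ``Pisot-type'' claim is an assumption, not a proof --- and it cannot be discharged ``along the lines of \cite{kari2016},'' since Theorem~\ref{th:subrosa_not_planar} shows the \subrosa edgeword makes \emph{every} eigenvalue exceed $1$.

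Second, you defer ``combinatorial consistency'' --- that the four boundary words enclose a rhombus-tileable region --- to a check. This is the hardest part of the paper: one needs the Kenyon matching criterion reformulated via counting functions (Proposition~\ref{prop:tileability_functions}), the $2$-almost-balancedness of the palindromic billiard prefixes, and the tileability of the infinite cone extracted from the multigrid dual tiling (Lemma~\ref{lemma:infinite_cone}) to show the metatiles are tileable for all sufficiently large prefix lengths, with the corner condition needed later for the seed and for primitivity. Relatedly, your argument implicitly assumes one word works for everything, whereas in the paper tileability and primitivity hold for all sufficiently large prefix lengths while planarity holds only for infinitely many; one must observe that these index sets intersect. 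The rotational-symmetry part of your proposal (equivariance plus a $\rho$-invariant seed contained in its own image) does match the paper's use of the star pattern $\star{n}$, but verifying that the seed is legal and reappears centrally is precisely what the corner condition is for, so it too rests on the missing tileability argument.
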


\begin{figure}[t]
\centering
\begin{subfigure}{\textwidth}
\includegraphics[width=\textwidth]{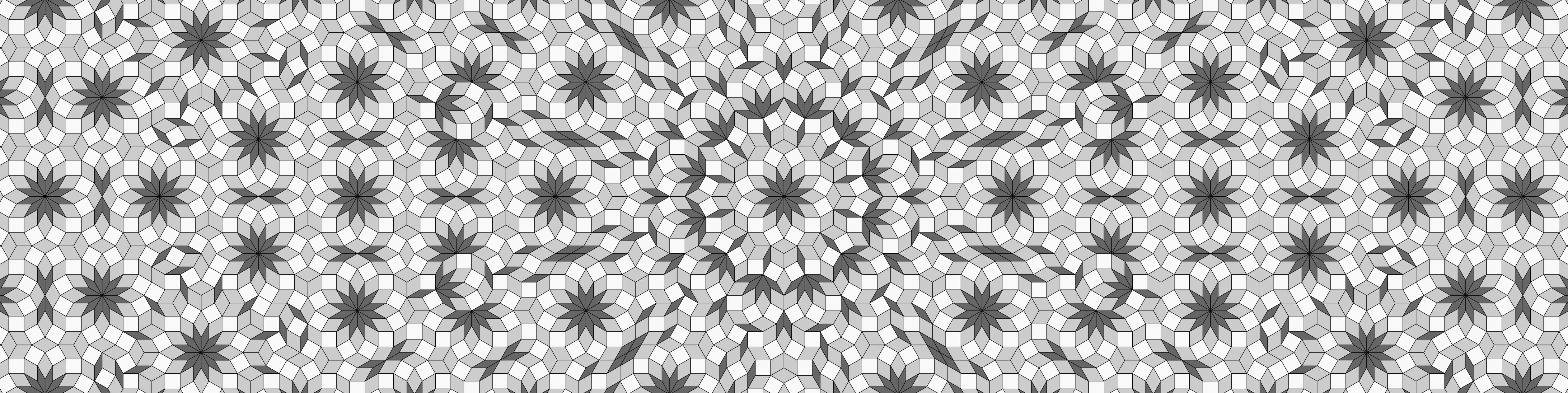}
\caption{$\tiling_6$: the canonical \subrosa $6$ substitution tiling.}
\label{fig:subrosa6_tiling_banner}
\end{subfigure}
\begin{subfigure}{\textwidth}
\includegraphics[width=\textwidth]{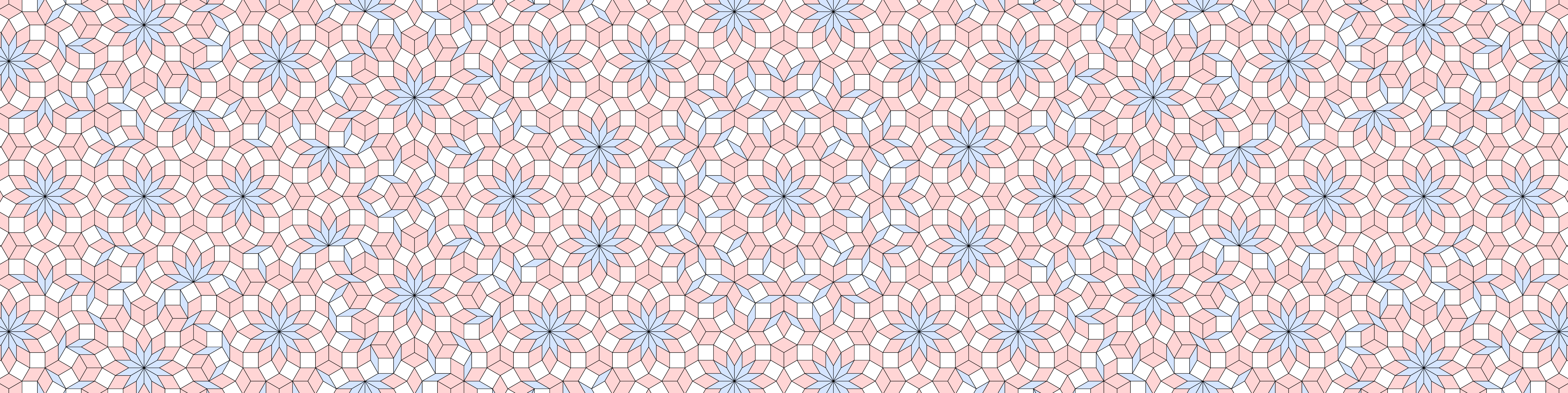}
\caption{$\tiling'_6$: the canonical \planarrosa $6$ substitution discrete plane tiling.}
\label{fig:planarrosa6_tiling_banner}
\end{subfigure}
\caption{Central fragments of canonical \subrosa $6$ and \planarrosa $6$ tilings with $12$-fold rotational symmetry.}
\label{fig:tilings_12fold}
\end{figure}
\noindent
In this paper we give the construction of \planarrosa for all even $n$ and
prove its correctness, see Section \ref{sec:planar-rosa} and Definition \ref{def:planar-rosa}.% at page \pageref{def:planar-rosa}.
The case of odd $n$ was detailed in~\cite{kari2021}. We remark that the tilings
we obtain may not satisfy the cut-and-project condition, the strict condition on the discrete planes
that the Penrose tiling satisfies,
see Section \ref{subsec:discrete_planes} for more details on this.
Our planarity property is weaker, and it remains open whether
our construction provides a cut-and-project tiling, or
whether some other construction exists that provides a cut-and-project solution.

Although the cases of even and odd $n$ in Theorems~\ref{th:subrosa_not_planar} and \ref{th:planar-rosa}
have similarities, there are also significant differences in the proofs, justifying the
present paper.
When $n$ is odd the $n$'th roots of unity (that play a central role in the proofs)
can be lifted into the canonical basis of $\mathbb{R}^n$, while for even $n$
the $n$'th roots of unity come in $\frac{n}{2}$ pairs of opposite numbers and thus
we need to consider $2n$'th roots of unity to get $n$ independent directions. The
\planarrosa substitution is first defined for the edges of tiles, and this edge substitution
is quite different in the odd and even cases of $n$, leading to differences in
the proofs of the primitivity of the substitution and of the tileability of the
interiors of the metatiles.
We remark that also the construction of \subrosa in~\cite{kari2016} was different
in the odd and the even cases of $n$.

The paper is organized as follows. In Section~\ref{sec:settings} we give necessary background and
our notations, including concepts related to rhombus substitutions and discrete planes.
We explain how rhombus tilings and substitutions are lifted into discrete surfaces and surface
substitutions in higher dimensional spaces. We then provide sufficient conditions for the planarity and the non-planarity
of the generated discrete surface, based on the expansion factors of the substitution along its eigenspaces.

In Section~\ref{sec:subrosa} we consider the \subrosa substitutions from~\cite{kari2016}, and prove Theorem~\ref{th:subrosa_not_planar} for even $n$. We recall the construction from~\cite{kari2016},
and using properties of \emph{(pseudo-)circulant} matrices and some trigonometric calculations we show that
the \subrosa substitutions for even $n$ have only expanding eigenspaces and hence the generated tilings
are not discrete planes.

In Section~\ref{sec:planar-rosa} we present the \planarrosa construction for even $n$,
and we prove Theorem~\ref{th:planar-rosa} for even $n$. In all our substitutions
each edge of each rhombus is substituted by the same palindromic sequence of
rhombuses. A suitable distribution of different types of rhombuses on this edge sequence
is the key to guaranteeing the planarity of the generated discrete surface. We
start the section by providing suitable palindromic edge sequences with good
rhombus distributions. The hard part is to prove that with such a rhombus
sequence along the edges, the interior can be tiled with rhombuses as well.
We apply a variant of a tileability criterion by  Kenyon~\cite{kenyon1993},
reformulated using so-called counting functions as in~\cite{kari2021}.

%%%%%%%%%%%%%%%%%%%%%%%%%%%%%%%%%%%%%%%%%%%%%%%%%%%%%%%%%%%%%%%%%%%%%%%%%%%%%%%%%%%%%%%%
%%%%%%%%%%%%%%%%%%%%%%%%%%%%%%%%%%%%%%%%%%%%%%%%%%%%%%%%%%%%%%%%%%%%%%%%%%%%%%%%%%%%%%%%
%%%%%%%%%%%%%%%%%%%%%%%%%%%%%%%%%%%%%%%%%%%%%%%%%%%%%%%%%%%%%%%%%%%%%%%%%%%%%%%%%%%%%%%%
%%%%%%%%%%%%%%%%%%%%%%%%%%%%%%%%%%%%%%%%%%%%%%%%%%%%%%%%%%%%%%%%%%%%%%%%%%%%%%%%%%%%%%%%
%%%%%%%%%%%%%%%%%%%%%%%%%%%%%%%%%%%%%%%%%%%%%%%%%%%%%%%%%%%%%%%%%%%%%%%%%%%%%%%%%%%%%%%%
%%%%%%%%%%%%%%%%%%%%%%%%%%%%%%%%%%%%%%%%%%%%%%%%%%%%%%%%%%%%%%%%%%%%%%%%%%%%%%%%%%%%%%%%
\section{Substitution discrete plane tilings}
\label{sec:settings}
In this section we present the definitions and general ideas for quasiperiodic substitution discrete planes. We start from tilings, then we define quasiperiodicity, substitutions, discrete planes and, finally, lifted substitutions.

\subsection{Rhombus tilings}
A \emph{tiling} is a covering of the Euclidean plane by tiles whose interiors do not overlap. In our case the tiles are unit rhombuses and the tilings are \emph{edge-to-edge} which means that any two distinct tiles that intersect either share a unique common vertex or a full common edge.

The tilings we study are edge-to-edge rhombus tilings with $n$ edge directions for some even $n$. We choose these $n$ directions as $\vv{i}$ for $ i\in \{0,1,2,\dots,n-1\}$ with
\[\vv{i}:= \left(\cos\tfrac{i\pi}{n}, \sin\tfrac{i\pi}{n}\right) = e^{\imag\frac{i\pi}{n}}.\]
Note that we use $\imag$ for the imaginary number, and $i$ for indices.
As we only use the imaginary number in the notations 
$e^{\imag \theta}$ for various $\theta$ there should be no confusion.

We call \emph{prototiles} tiles up to translations. We call \emph{tileset}, denoted by $\tileset$, a set of prototiles. We call $\tileset$-tiling a tiling where all the tiles are translates of the prototiles in $\tileset$.
Here the prototiles we are interested are
\[\tile{i}{j}:= \{ \lambda \vv{i} + \mu\vv{j}, 0\leq \lambda,\mu \leq 1\}\]
for $i,j\in\{0,\dots n-1\}$, $i<j$.
Given a position $x\in\R{2}$ and a prototile $\tile{i}{j}$, we say that $x+\tile{i}{j}$ is the tile of type $i,j$ at position $x$.

\begin{figure}[b]
\includegraphics[width=\textwidth]{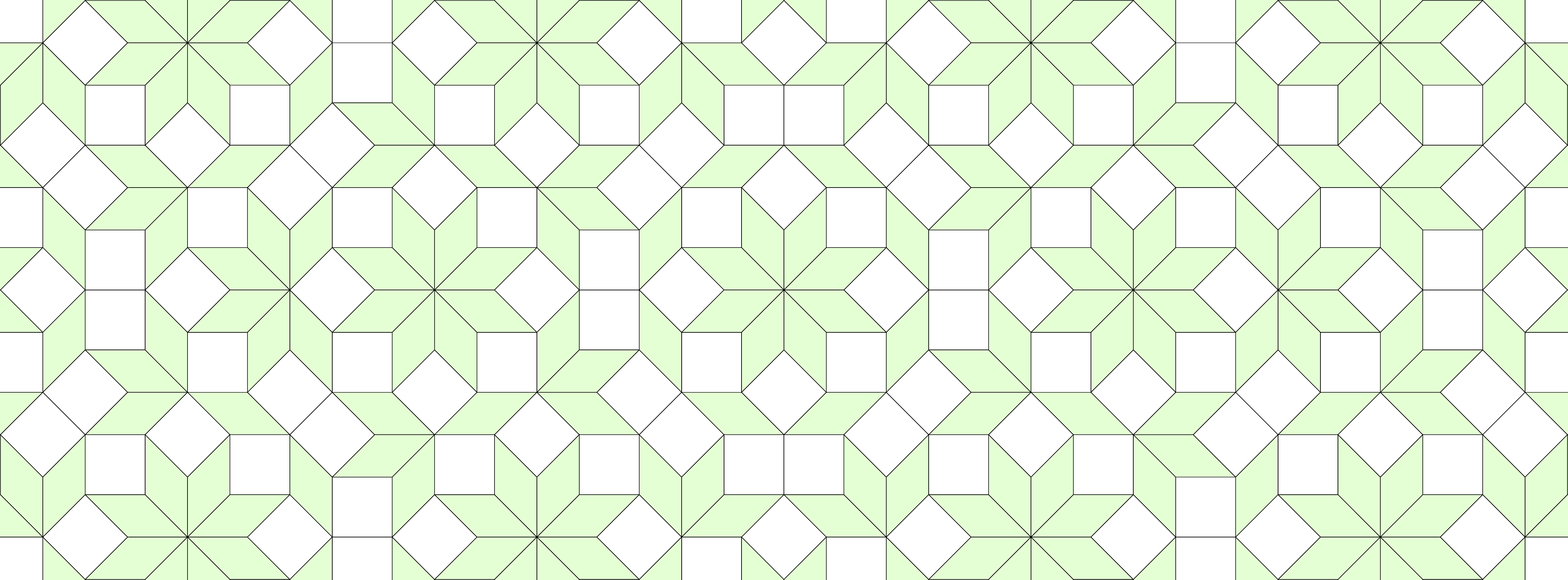}
  \caption{The canonical Ammann-Beenker tiling: an example of edge-to-edge rhombus tiling with $4$ edge directions.}
\end{figure}

We call a \emph{patch} a simply connected set of tiles, and we call a \emph{pattern} a patch up to translations (and possibly rotations). We say that a patch $\patch$ is in the tiling $\tiling$ when $\patch\subset \tiling$, and then we also say that the corresponding pattern appears in the tiling in this position.

A tiling is called \emph{periodic} when there exist a non-trivial translation under which it is invariant. If there is no such translation then it is called \emph{non-periodic}.
%%Note that we used here the term non-periodic and not aperiodic, aperiodic is only for subshifts.
A tiling is called \emph{uniformly recurrent} when for every pattern $\patch$ that appears in the tiling, there exists a radius $r$ such that the pattern $\patch$ appears in any ball of radius $r$ of the tiling. A tiling is called \emph{quasiperiodic} \cite{senechal1996} when it is both non-periodic and uniformly recurrent.

A tiling is said to have \emph{global $n$-fold rotational symmetry} when there exist a rotation of angle $\tfrac{2\pi}{n}$ under which the tiling is invariant. A tiling is said to have \emph{local $n$-fold rotational symmetry} when its set of patterns is invariant under rotation of angle $\tfrac{2\pi}{n}$, \ie the image of any pattern of the tiling by a rotation of angle $\tfrac{2\pi}{n}$ is also a pattern of the tiling. For example the famous Ammann-Beenker tiling has local and global 8-fold rotational symmetry \cite{beenker1982},
while the canonical Penrose rhombus tiling has global 5-fold symmetry and local 10-fold symmetry.

%%%%%%%%%%%%%%%%%%%%%%%%%%%%%%%%%%%%%%%%%%%%%%%%%%%%%%%%%%%%%%%%%%%%%%%%%%%%%%%%%%%%%%%%%%%%%%%%%%%%%%%%%%%%%%%%%%%%%%%%%
\subsection{Substitutions}

A \emph{combinatorial substitution} on a tileset $\mathbf{T}$ is a pair of functions $(\sigma,\partial\sigma)$ where $\sigma$, called the \emph{substitution}, is a function that associates to each prototile $t$  a finite patch of tiles $\sigma(t)$, and $\partial\sigma$, called the \emph{boundary of the substitution}, is a function that to each pair $(t,e)$, where $t$ is a prototile and $e$ is an edge of $t$, associates a set of external edges and/or tiles of $\sigma(t)$.
The substitution $\sigma$ is extended to a function on patches of tiles by applying the substitution separately to each tile and gluing the obtained patches in such a way that it preserves the combinatorial structure, \ie if a patch P is exactly two tiles $t_0$ and $t_1$ adjacent along an edge $e$ then $\partial\sigma(t_0,e)$ and $\partial\sigma(t_1,e)$ have to be equal and the patch $\sigma(P)$ is obtained from gluing the two patches $\sigma(t_0)$ and $\sigma(t_1)$ along the set of edges and/or tiles $\partial\sigma(t_0,e)=\partial\sigma(t_1,e)$ (see Figure \ref{fig:combinatorialsubstitution}), note that the absence of compatibility issues in this gluing process is a strong condition.

\begin{figure}[!b]
  \includegraphics[width=\textwidth]{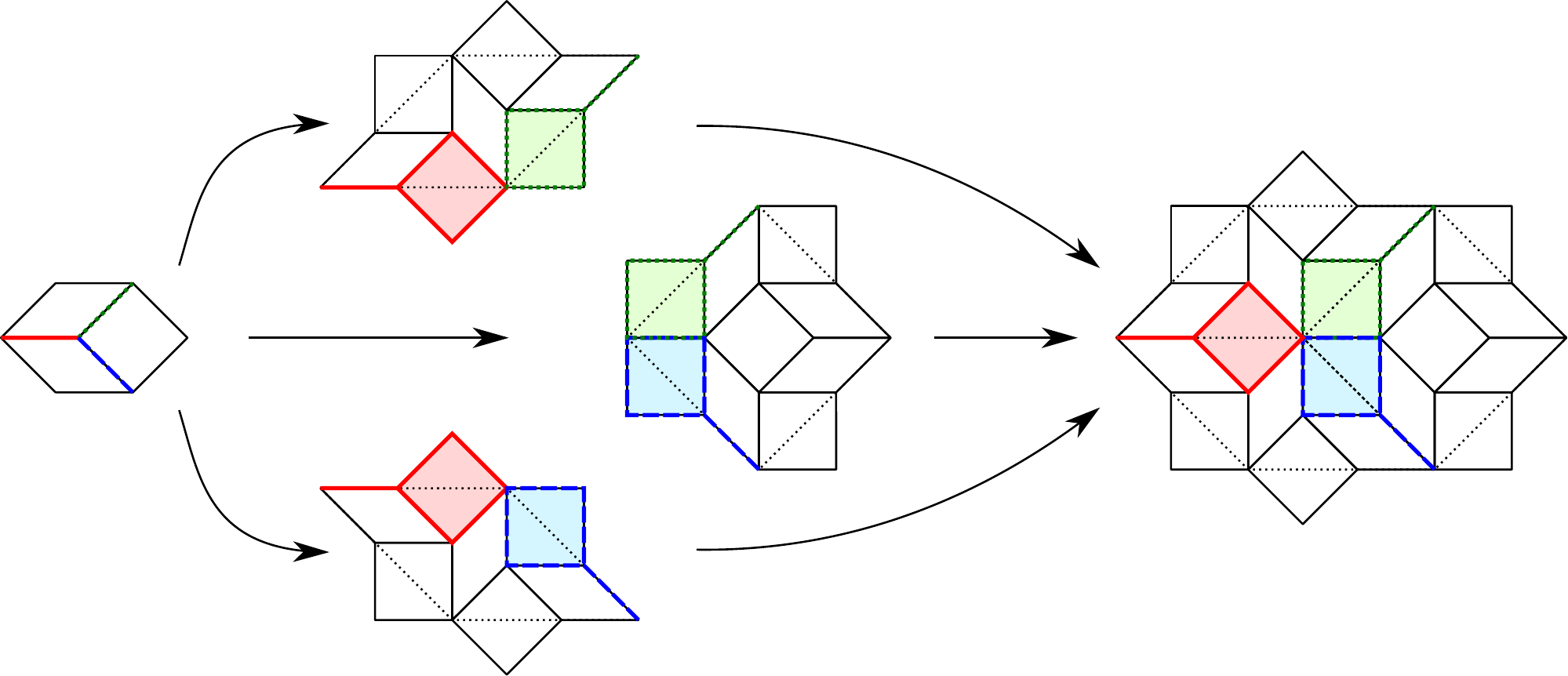}
  \caption{An example of a combinatorial substitution: on the left the initial patch with the three internal edges coloured, in the middle the images of the three initial tiles with the image of the internal edges coloured also, and on the right the patch obtained by gluing the image of the three tiles along the image of the shared edges.}
  \label{fig:combinatorialsubstitution}
\end{figure}

We call \emph{metatiles of order $k$} of $\sigma$ the patterns $\sigma^k(t)$ for $t\in\mathbf{T}$. We usually say \emph{metatiles} for the first order metatiles and we call \emph{sequence of metatiles} the sequence $(\{\sigma^k(t), t \in \tileset\})_{k \in \mathbb{N}}$

We say that a combinatorial substitution $(\sigma,\partial\sigma)$ is \emph{well-defined} when the substitution $\sigma$ can be extended to its whole sequence of metatiles, \ie there is no metatile $\sigma^k(t)$ on which $(\sigma,\partial\sigma)$ cannot be applied in a way that respects the combinatorial structure. In the following we always assume that the substitutions are well-defined and we usually omit the boundary $\partial\sigma$.

A finite pattern is called \emph{legal} for substitution $\sigma$ if it appears in some $\sigma^k(t)$ with $t\in\tileset$ and $k\in\mathbb{N}$. A tiling $\tiling$ is called \emph{legal} for $\sigma$ if every finite pattern of $\tiling$ is legal for $\sigma$. We also then say that $\tiling$ is a $\sigma$-tiling.

A \emph{vertex-hierarchic substitution} is a combinatorial substitution $(\sigma,\partial\sigma)$ such that there exists an expansion $\phi$ (orientation preserving expanding similitude of the plane) such that for any tile $t$ and any edge $e$ of $t$,
the vertices of the expanded edge $\phi(e)$ are vertices of the boundary $\boundary(t,e)$
and
the area spanned by $\sigma(t)$ is equal to the area of the expanded tile $\phi(t)$ where boundary tiles (tiles that are in some $\partial\sigma(t,e)$) count only for half in the computation of the area, \ie
\[ Area\left(\phi(t)\right) = Area\left(\bigcup\limits_{t' \in \mathring{\sigma}(t)} t'\right) + \frac{1}{2}Area\left(\bigcup\limits_{t' \in \partial\sigma(t)} t' \right) \qquad V\left(\phi(e)\right)\subset V\left(\bigcup\limits_{x \in \partial\sigma(t,e)} x\right). \]
See for example the Ammann-Beenker substitution in Figures \ref{fig:combinatorialsubstitution} and \ref{fig:ammann-beenker_subst} where 
the expanded tiles are represented in dotted lines.
\begin{figure}[t]
\includegraphics[width=\textwidth]{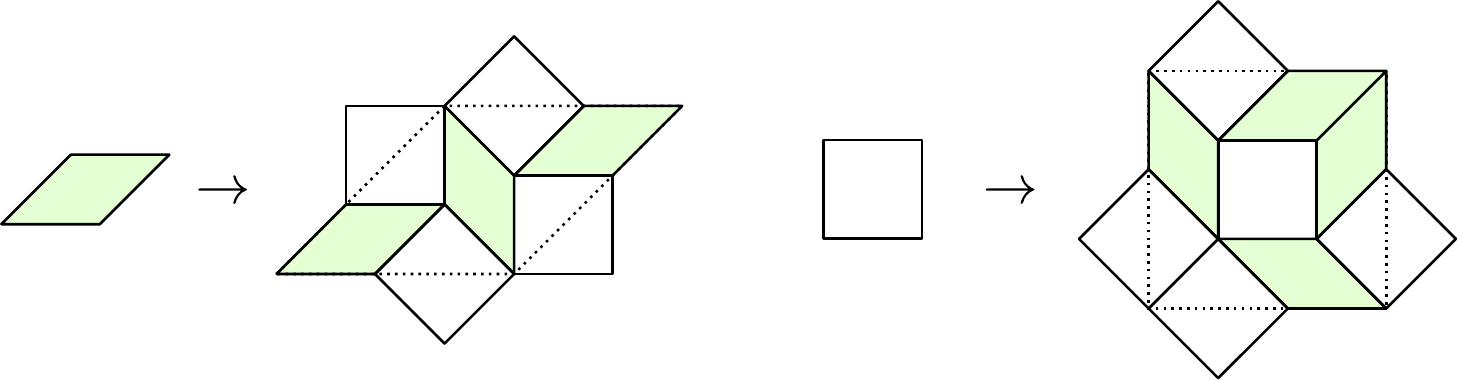}
  \caption{The Ammann-Beenker substitution, an example of vertex-hierarchic substitution on two rhombus tiles up to translation and rotation.}
  \label{fig:ammann-beenker_subst}
\end{figure}

The idea of this family of substitutions is that it is an inflation-subdivision process \cite{grunbaum1987, baake2013} meaning that the substitution first inflates a prototile $t$ to $\phi(t)$ (which is not a tile) and then subdivides $\phi(t)$ to obtain a patch of tiles. The subdivision can differ from $\phi(t)$ but the vertices of $\phi(t)$ must be boundary vertices of $\sigma(t)$ and the area of $\phi(t)$ and $\sigma(t)$ must be equal.

Note that substitution tilings are also sometimes called inflation tilings, or self-similar tilings, though often only \emph{edge-hierarchic} substitutions are considered in these cases \cite{baake2013,solomyak1998}, \ie substitutions where, for any tile $t$, the union of the tiles in $\sigma(t)$ is exactly the expanded tile $\phi(t)$.

\begin{figure}[b]
\includegraphics[width=\textwidth]{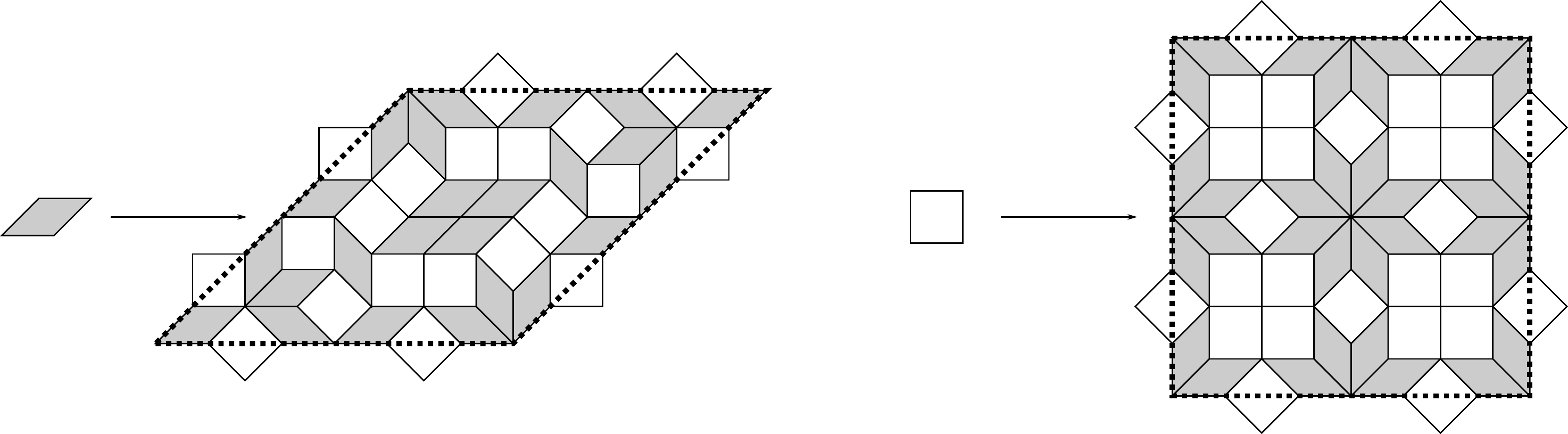}
  \caption{The Sub Rosa 4 substitution, an example of vertex-hierarchic substitution where the image of all edges is the same up to rotation and translation.}
  \label{fig:subst_subrosa4}
\end{figure}

In this paper we study vertex-hierarchic substitutions with the additional property that
the images $\partial\sigma(t,e)$ of all edges $e$ of all tiles $t$ are the same up to rotation and translation.
For an example, see the Sub Rosa 4 substitution in Figure \ref{fig:subst_subrosa4}.
On the other hand, the well-known Penrose substitution and Ammann-Beenker substitution are not in this class.
This restriction reduces the possibilities for the shape of the boundary, but it simplifies conditions on the substitution, ensures that the substitutions are well-defined and makes it easy to lift the substitution in $\R{n}$. In this case we call \emph{edgeword} the sequence of rhombuses and/or edges in the image of an edge up to translation and rotation. Note that we call this sequence of rhombuses and/or edges a word because we consider it as a sequence of symbols/letters rather than a sequence of geometrical shapes.

A substitution is called \emph{primitive of order} $k$ with $k\in \mathbb{N}$ when for any prototile $t$, the metatile $\sigma^k(t)$ of order $k$ 
contains all the different prototiles of the tileset. The substitution is \emph{primitive} if it is primitive of order $k$ for some $k$.
\begin{lemma}
  If a tiling $\tiling$ is legal for some primitive substitution $\sigma$ then $\tiling$ is uniformly recurrent.
\end{lemma}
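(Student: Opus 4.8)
**Proof plan for: If a tiling $\tiling$ is legal for a primitive substitution $\sigma$, then $\tiling$ is uniformly recurrent.**

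The plan is to show that every pattern $\patch$ appearing in $\tiling$ is contained in a translate of a single metatile $\sigma^k(t)$, and that metatiles of a fixed higher order $\sigma^{k+N}(t')$ contain a translate of every order-$k$ metatile; since a ball of sufficiently large radius in $\tiling$ must meet a full copy of some $\sigma^{k+N}(t')$, the pattern recurs within bounded distance. Concretely, fix a pattern $\patch \subset \tiling$. By legality, $\patch$ appears inside $\sigma^k(t)$ for some prototile $t$ and some $k \in \N$. By primitivity of order $N$, for every prototile $t'$ the metatile $\sigma^N(t')$ contains a copy of $t$, hence $\sigma^{k+N}(t') = \sigma^k(\sigma^N(t'))$ contains a copy of $\sigma^k(t)$ and therefore a copy of $\patch$. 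So $\patch$ appears inside every order-$(k+N)$ metatile.

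Next I would control geometry. Because $\sigma$ is vertex-hierarchic with expansion $\phi$ (an expanding similitude), the order-$m$ metatiles are, up to translation and rotation, finitely many patterns whose diameters grow like $\|\phi\|^m$ and whose inradii also grow without bound; in particular there is a radius $\rho$ such that every order-$(k+N)$ metatile contains a ball of radius $r := \mathrm{diam}(\patch) + \rho$... more carefully, I want a bound the other way: let $D$ be the maximal diameter of an order-$(k+N)$ metatile, and set $r = D + \mathrm{diam}(\patch)$. The key structural fact I need is that $\tiling$, being legal for $\sigma$, is itself composed hierarchically: $\tiling$ decomposes into order-$(k+N)$ supertiles, i.e. there is a partition of $\tiling$ into translated copies of the patterns $\{\sigma^{k+N}(t') : t' \in \tileset\}$. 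This follows from the standard ``unique composition / recognizability'' reasoning for primitive substitutions, or can be taken as part of what it means for $\tiling$ to be a $\sigma$-tiling built by iterating $\sigma$; alternatively one argues by compactness that any legal tiling is a limit of patches $\sigma^m(t)$ and inherits the supertile decomposition.

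Granting the supertile decomposition: any ball $B$ of radius $r$ in $\tiling$ meets some order-$(k+N)$ supertile $S$ (a translate of some $\sigma^{k+N}(t')$); since $r \geq D$, the ball actually contains a point of $S$ at distance $\geq \mathrm{diam}(\patch)$ from $\partial S$... I would instead just observe $B$ contains an entire supertile $S$ when $r$ is taken as $D$ plus the maximal supertile diameter, and that $S$ contains a translate of $\patch$ by the first paragraph; hence $B$ contains a copy of the pattern $\patch$. As $\patch$ was arbitrary, $\tiling$ is uniformly recurrent. The main obstacle is the step asserting that a legal tiling admits a global decomposition into order-$(k+N)$ supertiles — i.e. that legality transfers the hierarchical structure of the metatiles to the infinite tiling; I expect this to require either a recognizability argument or a compactness argument realizing $\tiling$ as a limit of $\sigma^m(t)$'s, and the precise formulation depends on conventions set in Section~\ref{sec:settings}, so I would state it as a short lemma and prove it by the limit argument, using that the metatile sequence is nested and that $\phi$ is expanding so that supertile diameters tend to infinity.
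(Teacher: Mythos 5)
The paper does not actually prove this lemma: it is stated as a known result with a citation to Baake--Grimm (\cite[\S 5]{baake2013}), so there is no internal proof to compare against. Your outline is the standard argument and is essentially sound: legality puts $\patch$ inside some $\sigma^k(t)$, primitivity of order $N$ puts a copy of $t$ (hence of $\sigma^k(t)$, hence of $\patch$) inside every $\sigma^{k+N}(t')$, and then one needs every sufficiently large ball of $\tiling$ to contain a full order-$(k+N)$ supertile. You correctly identify the only real gap, namely that a legal tiling decomposes globally into order-$(k+N)$ supertiles; this is not automatic from the definition of legality and does require the compactness/limit argument you sketch (note that mere existence of such a decomposition suffices, so no recognizability or uniqueness is needed, and the limit argument works because the patches $\sigma^{m}(t_m)$ furnished by legality agree with $\tiling$ on larger and larger balls, after which a diagonal extraction stabilizes the level-$(k+N)$ supertile covering each fixed ball).

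You can, however, bypass that global decomposition entirely, which makes the proof shorter and self-contained. Given a ball $B$ of radius $r$ in $\tiling$, let $Q$ be the patch of all tiles of $\tiling$ contained in $B$; $Q$ is itself a legal pattern, so it sits inside some $\sigma^m(u)$ with $m\geq k+N$ (enlarging $m$ by primitivity if necessary). The metatile $\sigma^m(u)=\sigma^{k+N}\bigl(\sigma^{m-k-N}(u)\bigr)$ comes, by construction of the iterated substitution, with a covering by order-$(k+N)$ supertiles of diameter at most $D$. If $r\geq D+c$ for a constant $c$ absorbing tile diameters, the supertile containing the centre of the copy of $Q$ lies entirely inside that copy, hence a translate of it lies inside $B$; by the first paragraph it contains a copy of $\patch$. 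This uses only the hierarchical structure of the finite metatiles, never of the infinite tiling, and so avoids the step you flagged as the main obstacle.
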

This result is well-known and can be found in \cite[\S 5]{baake2013}.

Given a substitution $\sigma$ we call \emph{seed} (for $\sigma$) a patch $S$ of tiles such that $S$ is at the centre of $\sigma(S)$ and such that the circle inscribed in $\sigma(S)$ strictly contains the circle circumscribed to $S$. A seed $S$ defines a sequence of increasing (for inclusion) patches $(\sigma^k(S))_{k\in \mathbb{N}}$ and a limit tiling $\tiling:= \lim\limits_{k\to\infty} \sigma^k(S)$, this limit tiling is also called fixpoint of $\sigma$ from seed $S$. A seed $S$ is called legal if it is a legal patch for $\sigma$, \ie if it appears in $\sigma^k(t)$ for some tile $t$ and integer $k$.

\begin{lemma}
  Let $\sigma$ be a vertex-hierarchic substitution. Let $S$ be a legal seed for $\sigma$.
  The limit tiling $\tiling:= \lim\limits_{k\to\infty} \sigma^k(S)$ is legal for $\sigma$. %, is quasiperiodic.
  Furthermore if $S$ has global $n$-fold rotational symmetry around its centre vertex then so does $\tiling$.
\end{lemma}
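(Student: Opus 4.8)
The plan is to first establish that the limit tiling $\tiling$ is well-defined as a tiling of the whole plane, and then verify the two claimed properties (legality and inheritance of rotational symmetry) separately.

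First I would check that $\tiling := \lim_{k\to\infty}\sigma^k(S)$ is genuinely a tiling of $\R{2}$. By the definition of a seed, $S$ sits at the centre of $\sigma(S)$ and the inscribed circle of $\sigma(S)$ strictly contains the circumscribed circle of $S$; since $\sigma$ is a vertex-hierarchic substitution with expansion $\phi$, applying $\sigma$ again and using that $\phi$ is an expanding similitude shows that $\sigma^k(S)$ contains a ball of radius $\rho_k$ around the centre with $\rho_k\to\infty$. Moreover $\sigma^k(S)\subseteq \sigma^{k+1}(S)$ as patches (this is exactly the seed condition, propagated by one more application of $\sigma$, which is monotone on patches since it acts tile-by-tile), so the sequence is increasing and its union covers the plane. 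The union of an increasing sequence of patches whose supports exhaust $\R{2}$ is a tiling, so $\tiling$ is a well-defined edge-to-edge rhombus tiling.

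Next, legality. Let $\patch$ be any finite pattern of $\tiling$. Since $\patch$ is finite and the $\sigma^k(S)$ exhaust the plane, there is some $k$ with $\patch\subseteq \sigma^k(S)$ (up to translation). Because $S$ is a legal seed, $S$ appears in $\sigma^m(t_0)$ for some prototile $t_0$ and some $m\in\N$; applying $\sigma^k$ and using that $\sigma$ preserves the combinatorial structure (so $\sigma^k$ of a sub-patch of $\sigma^m(t_0)$ is a sub-patch of $\sigma^{k+m}(t_0)$), we get $\sigma^k(S)\subseteq \sigma^{k+m}(t_0)$, hence $\patch$ appears in $\sigma^{k+m}(t_0)$ and is legal for $\sigma$. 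As $\patch$ was arbitrary, $\tiling$ is legal.

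Finally, the symmetry statement. Suppose $S$ is invariant under the rotation $r$ of angle $\tfrac{2\pi}{n}$ about its centre vertex $v$. Place $v$ at the origin. I would argue by induction that each $\sigma^k(S)$ is invariant under $r$: the key point is that $\sigma$ commutes with the rotation $r$ in the following sense — since all edgewords are the same up to rotation and translation and $\sigma$ is vertex-hierarchic, $\sigma(r(\patch))$ and $r(\sigma(\patch))$ are the same patch for any patch $\patch$ centred appropriately, because $r$ simply permutes the edge directions $\vv{i}$ and the substitution rule is equivariant under this permutation. Hence if $\sigma^k(S)=r(\sigma^k(S))$ then $\sigma^{k+1}(S)=\sigma(r(\sigma^k(S)))=r(\sigma^{k+1}(S))$. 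Passing to the limit, $\tiling=r(\tiling)$, so $\tiling$ has global $n$-fold rotational symmetry about $v$.

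The main obstacle is making precise the equivariance of $\sigma$ under the rotation $r$ used in the last paragraph: one must check that the expansion $\phi$ can be chosen to commute with $r$ (both are similitudes fixing the origin, so this is a statement about $\phi$ being a pure scaling or having a compatible rotational part), and that the combinatorial gluing data $\partial\sigma$ is invariant under relabeling edge directions by $r$. This follows from the standing assumption that the image of every edge is the same edgeword up to rotation and translation, but it deserves a careful statement; everything else is routine bookkeeping about increasing unions of patches.
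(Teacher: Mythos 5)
Your proposal is correct and follows essentially the same route as the paper's own (very terse) proof: the limit exists and fills the plane because the patches $\sigma^k(S)$ increase and their inscribed circles exhaust $\R{2}$; legality follows from $S\subseteq\sigma^{k_0}(t_0)$ so that any finite pattern lies in some $\sigma^{k_1}(S)\subseteq\sigma^{k_0+k_1}(t_0)$; and the rotational symmetry of $S$ propagates to every $\sigma^k(S)$ and hence to the limit. Your extra care about the rotation-equivariance of $\sigma$ (which the paper simply asserts) is a reasonable filling-in of detail and is justified by the paper's standing assumption that all edge images coincide up to rotation and translation.
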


\begin{proof}
  This lemma is quite straightforward but let us give the main elements.

  The limit tiling exists because it is the limit of an increasing sequence (for inclusion) of patches, and the limit is a full tiling of the plane because the sequence of circles inscribed in $\sigma^k(S)$ covers $\R{2}$ due to the condition that $\sigma$ expands $S$ in all directions.

  The limit tiling is legal for $\sigma$. Indeed, $S$ is legal so it is in some $\sigma^{k_0}(t_0)$ with $k_0\geq 0$ and $t_0\in\tileset$, and any patch of $\tiling$ is included in some $\sigma^{k_1}(S)\subset \sigma^{k_0+k_1}(t_0)$.

  If $S$ has global $n$-fold rotational symmetry, then so does $\sigma^k(S)$ for any $k$, and therefore so does $\tiling$.
\end{proof}

\subsection{Discrete planes}
\label{subsec:discrete_planes}
For any positive integer $n$, let $(\ee{i})_{0\leq i < n}$ 
be the canonical basis of $\R{n}$.
For $i<j$, we define $\intsquare{x}{i}{j}$, the integer square with position $x\in\Z{n}$ and type $\{i,j\}$, by
\[ \intsquare{x}{i}{j}:= x + \{ \lambda \ee{i} + \mu \ee{j}, 0\leq \lambda,\mu\leq 1\}.\]

Any edge-to-edge tiling with unit rhombus tiles and $n$ edge directions can be lifted to a discrete surface of $\R{n}$ by choosing an origin vertex which is sent to the origin of $\R{n}$ and mapping each edge direction to a vector of the canonical basis \cite{levitov1988}. In this lifting operation unit rhombus tiles are mapped to unit integer squares of $\R{n}$.

We call such a tiling a \emph{discrete plane tiling} when the lifted discrete surface stays within bounded distance of some two-dimensional plane $\slope$, which is called the \emph{slope} of the tiling.
Note that \emph{discrete plane tilings} are sometimes called \emph{planar tilings} \cite{bedaride2015}, and that in the scope of tilings with local matching conditions they are called \emph{weak local rules tilings} \cite{levitov1988} or \emph{weak matching rules tiling} \cite{socolar1990}.

The reason why any edge-to-edge rhombus tiling with $n$ edge directions can be lifted is that in an edge-to-edge rhombus tiling any two edge paths from the origin to a vertex $x$ are identical up to reordering and cancellation of edges (\ie $\vv{i} -\vv{i} = 0$).
In particular, there is a unique abelianized path from the origin to $x$~\cite{levitov1988}. The lifted vertex $\lift{x}$ is characterized uniquely by
\[  \lift{x} = (k_0,\dots k_{n-1}) \in \Z{n} \Leftrightarrow x=\sum\limits_{0\leq i < n} k_i \vv{i}, \]
where $\sum\limits k_i \vv{i}$ is the abelianized edge path from 0 to $x$ in the tiling.

Note that the lift of the tile $\tile{i}{j}$ at position $x$ is the square $\intsquare{\lift{x}}{i}{j}$, \ie
\[\lift{x+\tile{i}{j}}:= \intsquare{\lift{x}}{i}{j}.\]

We mostly use the same names and symbols for the objects in the Euclidean plane and their lifted counterparts in $\R{n}$. However, when we want to emphasize the difference we denote $\lift{x}$ for the lifted version of an object $x$.

%%%%%%%%%%%%%%%%%%%%%%%%%%%%%%%%%%%%%%%%%%%%%%%%%%%%%%%%%%%%%%%%%%%%
\subsection{Lifted substitutions}
\label{subsec:lifted_substitutions}
In the previous section we presented tilings, substitutions and discrete planes. Since we study substitution discrete plane tilings, or more generally substitution tilings lifted in $\R{n}$, we now present how substitutions and expansions are lifted in $\R{n}$.

Let us take a vertex-hierarchic substitution $\sigma$ and its expansion $\phi$ such that the images by the substitution of any two parallel edges $e$ and $e'$ are the same up to a translation, \ie for any two tiles $t$ and $t'$, and for any edge $e$ of $t$ and any edge $e'$ of $t'$ if $e$ and $e'$ are parallel then $\partial\sigma(t,e)$ and $\partial\sigma(t',e')$ are equal up to translation.
%%\[ e \parallel e' \Rightarrow \partial\sigma(t,e) \equiv \partial\sigma(t',e').\]
\begin{figure}[b]
    \center
    \includegraphics[width=0.6\textwidth]{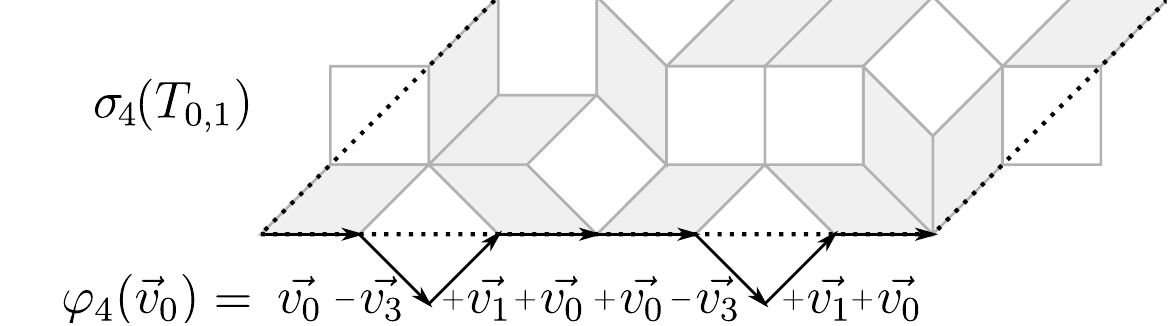}
    \caption{$\phi_4(\vv{0})$ is the abelianized path from the origin to the $\vv{0}$ corner in the metatile $\sigma_4(\tile{0}{1})$, \ie $\phi_4(\vv{0}) = 4\vv{0} + 2\vv{1} - 2\vv{3}$.}
    \label{fig:phi_vi}
\end{figure}

We define the lifted substitution $\lift{\sigma}$ from discrete surfaces (of $\R{n}$) to discrete surfaces (of $\R{n}$), and the lifted expansion $\hat{\phi}$ from $\R{n}$ to $\R{n}$ as follows.
\begin{itemize}
  \item we define the lifted substitution $\hat{\sigma}$ on the prototiles as
    \[\lift{\sigma}(\hat{r}):=\lift{\sigma(r)},\]
    for any rhombus prototile $r$. Note that, by definition, the prototiles contain a vertex on the origin, since the substitution $\sigma$ is vertex-hierarchic this origin vertex has an image in $\sigma(r)$ which is used as the origin to lift $\sigma(r)$. Since the prototiles are the tiles $\tile{j}{k}$, this defines $\lift{\sigma}$ for the integer squares $\intsquare{0}{j}{k}$.
  \item we define the lifted expansion $\lift{\phi}$ so that we can define the lifted substitution on all integer squares, at all positions.
    We define $\lift{\phi}$ as a linear map of $\R{n}$ by its image on the canonical basis of $\R{n}$.
    For $i\in{0,..,n-1}$ we define $\lift{\phi}(\ee{i})$ as
    \[ \lift{\phi}(\ee{i}):= \lift{\phi(\vv{i})},\]
    where $\lift{\phi(\vv{i})}$ is the lifted version of the abelianized path from the origin to the $\vv{i}$ corner vertex in the patch $\sigma(\tile{i}{j})$ for some $j\neq i$, see Figure \ref{fig:phi_vi}. Note that $\phi(\vv{i})$ is uniquely defined due to the condition that the image by the substitution of any two parallel edges is the same up to a translation.
  \item finally, we define the lifted substitution $\lift{\sigma}$ on any integer square as
    \[ \lift{\sigma}(\intsquare{x}{i}{j}):= \lift{\phi}(x) + \lift{\sigma}(\intsquare{0}{i}{j}) = \lift{\phi}(x) + \lift{\sigma(\tile{i}{j})},\]
    and extend it on any discrete surface by linearity.
\end{itemize}

Note that we quite often consider the expansion of a lifted tile that is a unit integer square, \ie a unit square of $\mathbb{R}^n$ with vertices in $\mathbb{Z}^n$. The image of the lifted tile by the expansion is an integer rhombus, \ie a rhombus of $\mathbb{R}^n$ with vertices in $\mathbb{Z}^n$.
Remark that the image by $\lift{\sigma}$ of any integer square $\intsquare{0}{i}{j}$ has the same shape as $\lift{\phi}(\intsquare{0}{i}{j})$ in the sense that the vertices of $\lift{\phi}(\intsquare{0}{i}{j})$ are extremal boundary vertices of $\lift{\sigma}(\intsquare{0}{i}{j})$.

Since $\sigma$ and $\lift{\sigma}$ have exactly the same behaviour, we write $\sigma$ for both, and the context makes it clear if we are considering the $\R{n}$ versions or the $\R{2}$ versions.

To study the behaviour of the linear map $\phi$ in $\R{n}$ we consider its matrix,
which we denote by $\expansionMatrix{\phi}$ or simply $\expansionMatrix{}$. The matrix operates on column vectors from the right, \ie $\phi(\vec{r})\trans = \expansionMatrix{\phi}\cdot \vec{r}\trans$, where $\vec{r}$ is a line vector and $\vec{r}\trans$ a column vector.
We use the notation $\phi$ when we apply the expansion as a function, and the notation $\expansionMatrix{\phi}$ or $\expansionMatrix{}$ when we study it as a matrix.

\begin{proposition}[Sufficient conditions for planarity and for non-planarity \cite{kari2021}]
\label{prop:planarity_non-planarity}
Let $\sigma$ be a primitive vertex-hierarchic substitution such that the image of any two parallel edges is the same up to translation, and let $\phi$ be its expansion.
\begin{enumerate}
\item If there exists a plane $\slope$ of $\mathbb{R}^n$ such that $\phi$ is expanding along plane $\slope$ and strictly contracting along its orthogonal complement, then any tiling $\tiling$ that is legal for $\sigma$ is a discrete plane of slope $\slope$.
%% We say that $\sigma$ is planar of slope $\slope$.
\item If there exists a subspace $\mathcal{V}$ of $\mathbb{R}^n$ of dimension at least 3 such that $\phi$ is expanding along $\mathcal{V}$, then tilings admissible for $\sigma$ are not discrete planes.
%% We say that $\sigma$ is not planar.
\end{enumerate}
\end{proposition}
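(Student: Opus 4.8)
The plan is to prove both parts by controlling the orthogonal projection $\pi^{\perp}\colon\R{n}\to\slope^{\perp}$ of the lifted surface, using that $\lift{\tiling}$ is a discrete plane of slope $\slope$ exactly when the set $\{\pi^{\perp}(\lift{v})\ :\ \lift{v}\text{ a vertex of }\lift{\tiling}\}$ is bounded, since every point of the surface lies within $\sqrt{2}$ of such a vertex. For part~1 I would first observe that the hypotheses make both $\slope$ and $\slope^{\perp}$ invariant under $\expansionMatrix{}$, so $\pi^{\perp}$ commutes with $\expansionMatrix{}$, and that strict contraction on $\slope^{\perp}$ yields constants $C_{0}\geq1$ and $\rho<1$ with $\|\expansionMatrix{}^{k}v\|\leq C_{0}\rho^{k}\|v\|$ for $v\in\slope^{\perp}$. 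Then I would bound $\pi^{\perp}$ on metatiles uniformly in the order. From the definition of the lifted substitution one checks by induction that $\sigma^{k}(\intsquare{x}{i}{j})=\lift{\phi}^{k}(x)+\sigma^{k}(\intsquare{0}{i}{j})$ for $x\in\Z{n}$. Writing $\sigma(\intsquare{0}{i}{j})$ as a union of unit squares $\intsquare{x_{\alpha}}{i_{\alpha}}{j_{\alpha}}$ with $\|x_{\alpha}\|\leq R$ for a constant $R$ independent of $i,j$, every vertex of $\sigma^{k}(\intsquare{0}{i}{j})=\sigma^{k-1}\!\bigl(\sigma(\intsquare{0}{i}{j})\bigr)$ has the form $\lift{\phi}^{k-1}(x_{\alpha})+\lift{w}$ for a vertex $\lift{w}$ of $\sigma^{k-1}(\intsquare{0}{i_{\alpha}}{j_{\alpha}})$, so if $V_{k}$ denotes the largest value of $\|\pi^{\perp}(\lift{v})\|$ over the vertices of all order-$k$ lifted metatiles, then
\[ V_{k}\ \leq\ \bigl\|\pi^{\perp}\bigl(\expansionMatrix{}^{k-1}x_{\alpha}\bigr)\bigr\|+V_{k-1}\ \leq\ C_{0}\rho^{k-1}R+V_{k-1}, \]
whence $V_{k}\leq V_{0}+C_{0}R\sum_{j\geq0}\rho^{j}=:C$ for all $k$, with $V_{0}\leq\sqrt{2}$.

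To pass to an arbitrary tiling $\tiling$ legal for $\sigma$, take vertices $\lift{o},\lift{x}$ of $\lift{\tiling}$ with $\lift{o}$ at the origin; a finite patch of $\tiling$ containing an edge path from $o$ to $x$ is legal, hence occurs up to translation inside some metatile $\sigma^{k}(\tile{i}{j})$, and since the lift of a vertex is determined by the abelianized edge path reaching it, $\lift{x}-\lift{o}=\lift{v_{2}}-\lift{v_{1}}$ for two vertices of $\sigma^{k}(\intsquare{0}{i}{j})$, so $\|\pi^{\perp}(\lift{x})\|\leq 2C$. Hence $\lift{\tiling}$ stays within distance $2C+\sqrt{2}$ of $\slope$ and $\tiling$ is a discrete plane of slope $\slope$; the expansion of $\expansionMatrix{}$ along $\slope$ is what guarantees that $\slope$ is the genuine two-dimensional slope and that the surface does not in fact approximate a proper subspace of it.

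For part~2, suppose for contradiction that some $\sigma$-legal tiling $\tiling$ is a discrete plane of slope $\slope$, so $\lift{\tiling}$ lies within some distance $B$ of $\slope$. By primitivity the family of $\sigma$-legal tilings is minimal under translation \cite{baake2013}, so $\tiling$ contains a translated copy of the metatile $\sigma^{k}(\tile{i}{j})$ for every $i<j$ and $k\in\N$; recentering such a copy, its lift lies within $B$ of a translate of $\slope$, and among its boundary vertices are $0$ and the images $\expansionMatrix{}^{k}\ee{i}$, $\expansionMatrix{}^{k}\ee{j}$ of the corners of $\lift{\phi}^{k}(\intsquare{0}{i}{j})$. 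Therefore $\|\pi^{\perp}(\expansionMatrix{}^{k}\ee{i})\|\leq 2B$ for every $i$ and $k$, and by the triangle inequality $\|\pi^{\perp}(\expansionMatrix{}^{k}v)\|\leq 2B\sqrt{n}\,\|v\|$ for all $v\in\R{n}$ and $k\geq0$. Now let $\mathcal{V}$ be the given subspace, so $\mathcal{V}$ is $\expansionMatrix{}$-invariant and $\expansionMatrix{}|_{\mathcal{V}}$ is expanding, whence $\|(\expansionMatrix{}|_{\mathcal{V}})^{-k}\|\to0$. For $v\in\mathcal{V}$ put $u_{k}=(\expansionMatrix{}|_{\mathcal{V}})^{-k}v\in\mathcal{V}$; then $\expansionMatrix{}^{k}u_{k}=v$ and $\|\pi^{\perp}(v)\|=\|\pi^{\perp}(\expansionMatrix{}^{k}u_{k})\|\leq 2B\sqrt{n}\,\|(\expansionMatrix{}|_{\mathcal{V}})^{-k}v\|\to0$, so $\pi^{\perp}(v)=0$, \ie $v\in\slope$. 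As $v$ was an arbitrary element of $\mathcal{V}$ this forces $\mathcal{V}\subseteq\slope$, contradicting $3\leq\dim\mathcal{V}\leq\dim\slope=2$; hence no $\sigma$-legal tiling is a discrete plane.

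The routine work is the induction $\sigma^{k}(\intsquare{x}{i}{j})=\lift{\phi}^{k}(x)+\sigma^{k}(\intsquare{0}{i}{j})$, the bookkeeping that congruent legal patches lift to translates of one another (so metatile estimates govern all legal tilings), and checking that the $\expansionMatrix{}$-invariances let the contraction and expansion rates be pulled through $\pi^{\perp}$. The one genuinely substantive point is the pull-back step in part~2: a bound on $\|\pi^{\perp}(\expansionMatrix{}^{k}\ee{i})\|$ that is only uniform in $k$ gets converted, by precomposing with large negative powers of $\expansionMatrix{}|_{\mathcal{V}}$, into the rigid inclusion $\mathcal{V}\subseteq\slope$, which then collides with the dimension count; the invariance hypotheses are exactly what makes this conversion legitimate.
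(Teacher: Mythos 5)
Your proof is correct and follows essentially the same route as the paper's argument (which is only sketched here, with the full version deferred to \cite{kari2021}): in part~1 you bound the $\slope^{\perp}$-deviation of metatiles by a geometric series coming from the contraction and transfer this to arbitrary legal tilings via legality of patches, and in part~2 you make precise the sketch's claim that metatiles grow unboundedly along $\mathcal{V}$ by pulling the uniform bound on $\pi^{\perp}\circ\expansionMatrix{}^{k}$ back through $(\expansionMatrix{}|_{\mathcal{V}})^{-k}$ to force $\mathcal{V}\subseteq\slope$ and contradict the dimension count. No gaps; the appeals to minimality of the subshift of a primitive substitution and to the vertex-hierarchy property (corners of $\phi^{k}(t)$ being boundary vertices of $\sigma^{k}(t)$) are exactly the standard facts the paper also relies on.
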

For a full proof see \cite{kari2021} (or \cite[\S 3.3]{lutfalla2021these} for more details and figures). A similar proof of the first item and all the elements for the proof of the second item can be found in \cite{arnoux2011, arnoux2001higher}. For completeness we give a quick proof sketch.
\begin{proof}[Proof sketch.]
The idea of the proof is quite simple.
In the first case, the metatiles of high-order of $\sigma$ are very flat along $\slope$ because the expansion $\sigma$ tends to expand along $\slope$ and contract along $\slope^\bot$. As any finite patch of a $\sigma$-tiling appears in a metatile, it means that the full tiling is also very flat along $\slope$.

In the second case, since the set of edges of the prototiles lift to the canonical basis of $\mathbb{R}^n$ we know that for any decomposition $\mathbb{R}^n = \mathcal{V}\oplus \mathcal{W}$ the projection of the edges of prototiles onto $\mathcal{V}$ parallel to $\mathcal{W}$ form a generating family of $\mathcal{V}$, so the metatiles of the substitution are non-trivial and bigger and bigger along $\mathcal{V}$. So the metatiles are not flat or planar along any slope and nor are the $\sigma$-tilings.
\end{proof}
In this work we are interested only in these two cases, the case in between where the expansion might have a eigenvalue $1$ on some subspace is not considered here.

%%%%%%%%%%%%%%%%%%%%%%%%%%%%%%%%%%%%%%%%%%%%%%%%%%%%%%%%%%%%%%%%%%%%%%%%%%%%%%%%%%%%%%%%
%%%%%%%%%%%%%%%%%%%%%%%%%%%%%%%%%%%%%%%%%%%%%%%%%%%%%%%%%%%%%%%%%%%%%%%%%%%%%%%%%%%%%%%%
%%%%%%%%%%%%%%%%%%%%%%%%%%%%%%%%%%%%%%%%%%%%%%%%%%%%%%%%%%%%%%%%%%%%%%%%%%%%%%%%%%%%%%%%
%%%%%%%%%%%%%%%%%%%%%%%%%%%%%%%%%%%%%%%%%%%%%%%%%%%%%%%%%%%%%%%%%%%%%%%%%%%%%%%%%%%%%%%%
%%%%%%%%%%%%%%%%%%%%%%%%%%%%%%%%%%%%%%%%%%%%%%%%%%%%%%%%%%%%%%%%%%%%%%%%%%%%%%%%%%%%%%%%
%%%%%%%%%%%%%%%%%%%%%%%%%%%%%%%%%%%%%%%%%%%%%%%%%%%%%%%%%%%%%%%%%%%%%%%%%%%%%%%%%%%%%%%%
\section{Sub Rosa substitution tilings}
\label{sec:subrosa}
In this section we briefly present the construction for the Sub Rosa substitution tilings  for even $n$, as defined in \cite{kari2016}. We then present how to lift the Sub Rosa substitutions in $\R{n}$ for even $n$, and we compute the eigenvalues of the Sub Rosa expansions to prove Theorem \ref{th:subrosa_not_planar} for even values of $n$. Recall that the proof of Theorem \ref{th:subrosa_not_planar} for odd $n$ is presented in \cite{kari2021}.

%%%%%%%%%%%%%%%%%%%%%%%%%%%%%%%%%%%%%%%%%%%%%%%%%%%%%%%%%%%%%%%%%%%%%%%%%%%%%%%%%%%%%%%%
\subsection{Construction}
The Sub Rosa tilings form a family of substitution rhombus tilings with a global $2n$-fold rotational symmetry \cite{kari2016}. We only consider the case for even $n$ since the two constructions for odd and even $n$ are somewhat different.
In the Sub Rosa construction the substitution rule is given by the edgeword of the substitution, \ie the sequence of edges and rhombuses along the edge of the substitution.
The interior is then tiled using a variant of the Kenyon criterion for the tileability of a polygon with parallelograms \cite{kenyon1993}.

The edgeword $\Sigma(n)$, for even $n$, is given by
\[\Sigma(n):= s(n)\cdot \reverse{s(2)} \cdot \reverse{s(4)} \cdot \dots \cdot \reverse{s(n-4)} \cdot \reverse{s(n-2)}\ |\
s(n-2) \cdot s(n-4) \cdot \dots \cdot s(4) \cdot s(2) \cdot \reverse{s(n)},\]
where $s(k)$ is the word of even numbers from $0$ to $k-2$, \ie $s(k):=024\dots (k-2)$, $\reverse{w}$ is the mirror image of word $w$, $\cdot$ is the word concatenation, the symbol $|$ represents the middle of the word,  the letter $0$ encodes a unit edge and the even non-null integer $k$ encodes a rhombus of angle $\tfrac{k\pi}{n}$ (bisected through its angle $\tfrac{k\pi}{n}$).
For example the edgeword $\Sigma(4)=020020$ means that along the edges of the substitution $\sigma_4$  we find a unit edge, a bisected rhombus of angle $\tfrac{\pi}{2}$, two unit edges, a bisected rhombus of angle $\tfrac{\pi}{2}$, and a unit edge (see Figure \ref{fig:subst_subrosa4}).
 %at page \pageref{fig:subst_subrosa4}).

\begin{table}
\begin{center}
    \begin{tabular}{l c}
      $\Sigma(0)$ & | \\
    $\Sigma(2)$ & 0|0 \\
    $\Sigma(4)$ & 020|020\\
    $\Sigma(6)$ & 024020|020420\\
    $\Sigma(8)$ & 0246020420|0240206420 \\
    $\Sigma(10)$ & 024680204206420|024602402086420 \\
  \end{tabular}
  \end{center}
\caption{Table of $\Sigma(n)$ for small $n$.}
\label{table:sigma_n_even}
\end{table}
\begin{figure}[b]
\includegraphics[width=\textwidth]{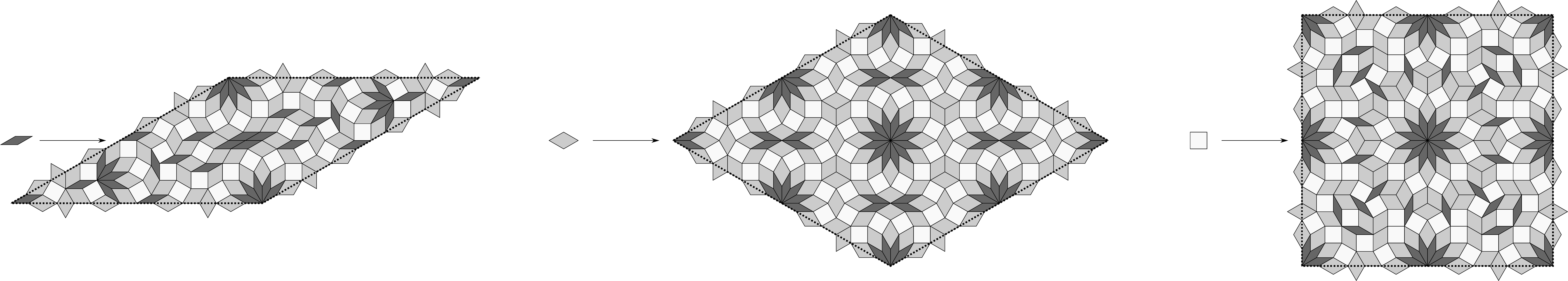}
\caption{The Sub Rosa 6 substitution.}
\end{figure}

%%%%%%%%%%%%%%%%%%%%%%%%%%%%%%%%%%%%%%%%%%%%%%%%%%%%%%%%%%%%%%%%%%%%%%%%%%%%%%%%%%%%%%%%
\subsection{Lifting to $\R{n}$}
\label{subsec:srLifting}
In the following we consider a vertex-hierarchic substitution $\sigma$ on rhombus tiles with the $2n$'th roots of unity
\[ \vv{i}:= \left(\cos\tfrac{i\pi}{n}, \sin\tfrac{i\pi}{n}\right) = e^{\imag \frac{i\pi}{n}}, \text{ for } i \in \{0,1,\dots,n-1\}.\]
Note that we use the first $n$ of the
$2n$'th roots of unity because $n$ is even and we want $n$ different directions.
%% This is different from the case of odd $n$ where we can simply use the $n$'th roots of unity~\cite{kari2021}.
%% Otherwise we have only $\tfrac{n}{2}$ edge directions because $e^{\imag\frac{i\pi}{n}} = -e^{\imag\frac{(\frac{n}{2}+i)\pi}{n}}$.
Our choice of $\vv{i}$ means the in the positive rotation ordering in the plane we have
\[ \vv{0} \prec \vv{1} \prec \vv{2} \dots \prec \vv{n-1} \prec -\vv{0} \prec -\vv{1} \dots \prec -\vv{n-1} \prec \vv{0},\] see for example Figure \ref{fig:edge-directions_6} for the vectors $\vv{i}$ in the Sub Rosa 6 tilings.
\begin{figure}
\center
\includegraphics[width=0.8\textwidth]{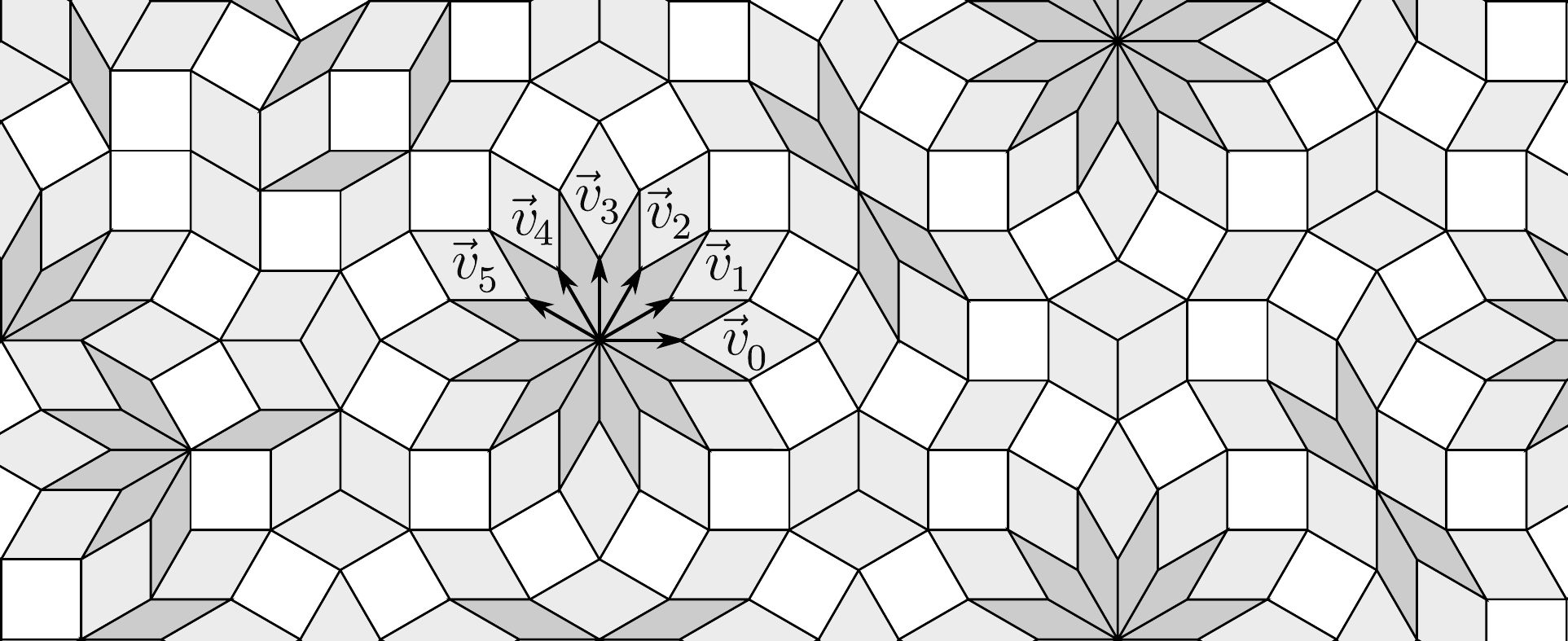}
\caption{Edge directions in Sub Rosa 6.}
\label{fig:edge-directions_6}
\end{figure}

We then lift the rhombus tilings to $\R{n}$, as described in Subsection \ref{subsec:discrete_planes}. We decompose $\R{n}$ in $\tfrac{n}{2}$ planes:
We define the plane $\slope_n^k$ for $0\leq k <\tfrac{n}{2}$ by its two generating vectors
\[\slope_n^k:= \spanning{ \left(\cos\tfrac{(2k+1)i\pi}{n}\right)_{0\leq i < n}, \left(\sin\tfrac{(2k+1)i\pi}{n}\right)_{0\leq i < n}}_{\R{}} \]
which we also write as a single complex generating vector for simplicity
\[ \slope_n^k:= \spanning{\left(e^{\imag\frac{(2k+1)i\pi}{n}}\right)_{0\leq i < n}}.\]
Note that the planes $\slope_n^k$ are pairwise orthogonal and that $\R{n}$ is their direct sum. For more details on this, see Lemma \ref{lemma:decomposition_Rn} in the Appendices.

We lift the Sub Rosa substitution $\sigma_n$ and its associated expansion $\phi_n$ in $\R{n}$ as described in Subsection \ref{subsec:lifted_substitutions}. Note that for the Sub Rosa substitutions we also assume that the image of any edge by the substitution is the same up to rotation and translation.
Given the way we lift $\vv{i}=e^{\imag \frac{i\pi}{n}}$ to $\ee{i}$, this means that the lifted expansion $\phi_n$ and its associated matrix $\expansionMatrix{}$ are \emph{pseudo-circulant}, as defined below.

We call a matrix $(m_{i,j})_{0\leq i,j < n}$   \emph{pseudo-circulant} if
 $m_{i,j} = m_{i',j'}$ when $i-j = i'-j'$ and $m_{i,j} = - m_{i',j'}$ when $i-j = i'-j' \pm n$. For example, see the matrices $\expansionMatrix{4}$ and $\expansionMatrix{6}$ below (expansion matrices of the Sub Rosa expansions $\phi_4$ and $\phi_6$).

\begin{align} \expansionMatrix{4} = \begin{pmatrix} 4 & 2 & 0 & - 2 \\ 2 & 4 & 2 & 0 \\ 0 & 2 & 4 & 2 \\ -2 & 0 & 2 & 4\end{pmatrix} \qquad \qquad
  \expansionMatrix{6} = \begin{pmatrix} 6 & 4 & 2 & 0 & -2 & -4 \\ 4 & 6 & 4 & 2 & 0 & -2 \\ 2 & 4 & 6 & 4 & 2 & 0 \\ 0 & 2 & 4 & 6 & 4 & 2 \\ -2 & 0 & 2 & 4 & 6 & 4\\ -4 & -2 & 0 & 2 & 4 & 6 \end{pmatrix}.
  \label{eq:m4_m6}
  \end{align}
The fact that the Sub Rosa expansion $\phi_n$ is a pseudo-circulant linear map is simply due to the way we order the set of edge vectors $\{\pm \vv{k}, 0 \leq k <n\}$ together with the definition of $\sigma_n$ by its edgeword $\Sigma(n)$, \ie a unique sequence of edges and rhombuses which appears on each edge of each metatile. The definition of $\sigma_n$ by its edgeword implies that the image of $\vv{0}$ and $\vv{1}$ are the same up to a permutation (and sign inversion) of the edge vectors $(\vv{k})_{0\leq k < n}$, this permutation is determined by the way we order the set of edge vectors, since in the positive rotation ordering we have $\vv{0} \prec \vv{1} \dots \prec \vv{n} \prec -\vv{0} \dots$ we indeed obtain pseudo-circulant linear maps. See Figures \ref{fig:edge-directions_6} and \ref{fig:lifting_sigma6}.

\begin{figure}[b]
\center
\includegraphics[width=0.8\textwidth]{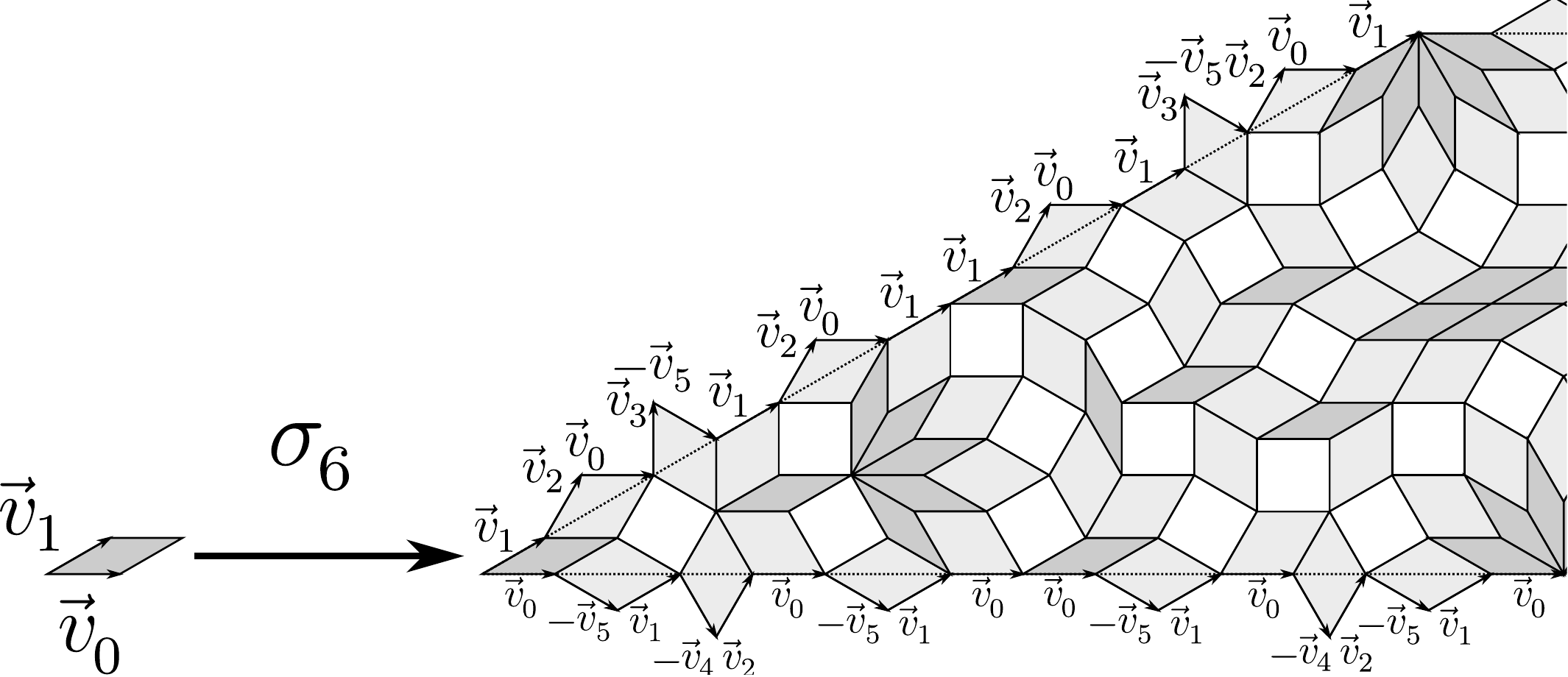}
    \caption{Lifting $\sigma_6$.}
    \label{fig:lifting_sigma6}
\end{figure}

Pseudo-circulant matrices are quite similar to circulant matrices, \ie constant diagonal matrices with
$m_{i,j} = m_{i',j'}$ whenever $i-j \equiv i'-j' \modulo{n}$.
The eigenvalues and eigenvectors of circulant matrices are well-known~\cite{Davis1979}, and this directly provides
the following result on pseudo-circulant matrices.
\begin{proposition}
\label{prop:preudocirculant}
Let $\zeta\in\mathbb{C}$ be such that $\zeta^n=-1$, \emph{i.e.}, a $2n$'th root of unity that is not an $n$'th root of unity.
Let $M$ be an $n\times n$ pseudo-circulant matrix whose first row is $(m_0,m_1,\dots, m_{n-1})$. Then $(\zeta^k)_{0\leq k <n}$ is an eigenvector of $M$ with eigenvalue $\sum_{j=0}^{n-1} m_j\zeta^j$, \ie
\[
M\cdot (1,\zeta,\dots,\zeta^{n-1})\trans
=  \left(\sum_{j=0}^{n-1} m_j\zeta^j \right)\cdot (1,\zeta,\dots,\zeta^{n-1})\trans.
\]
%and
%$$
%(1,\zeta,\dots,\zeta^{n-1})\cdot M =  \left(\sum_{j=0}^{n-1} m_j\zeta^{-j} \right)\cdot (1,\zeta,\dots,\zeta^{n-1}).
%$$
\end{proposition}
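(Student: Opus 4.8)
The plan is a direct verification: compute the matrix--vector product $M\cdot(1,\zeta,\dots,\zeta^{n-1})\trans$ coordinate by coordinate and check that its $i$-th coordinate equals $\zeta^{i}\cdot\bigl(\sum_{j=0}^{n-1}m_j\zeta^{j}\bigr)$, which is exactly the claimed eigenvector/eigenvalue relation. The only ingredient beyond index bookkeeping is the defining relation $\zeta^{n}=-1$, which also gives $\zeta^{-n}=-1$.

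First I would rewrite every entry of $M$ in terms of its first row using the pseudo-circulant condition. Splitting on the sign of $i-j$, one gets $m_{i,j}=m_{j-i}$ when $i\leq j$ (the class of $i-j$ under the first relation is represented by $(0,j-i)$), and $m_{i,j}=-m_{n-i+j}$ when $i>j$ (the class of $i-j$ is then represented by $(0,n-i+j)$, which triggers the sign rule). These two formulas can be sanity-checked against the matrices $\expansionMatrix{4}$ and $\expansionMatrix{6}$ displayed above.

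Next, for each $i\in\{0,\dots,n-1\}$ I would split the $i$-th coordinate of the product according to whether $j\geq i$ or $j<i$:
\[
\bigl(M\cdot(1,\zeta,\dots,\zeta^{n-1})\trans\bigr)_i
= \sum_{j=i}^{n-1} m_{j-i}\,\zeta^{j}\;-\;\sum_{j=0}^{i-1} m_{n-i+j}\,\zeta^{j}.
\]
In the first sum substitute $k=j-i$, so that $\zeta^{j}=\zeta^{i}\zeta^{k}$ and $k$ ranges over $\{0,\dots,n-1-i\}$; in the second sum substitute $k=n-i+j$, so that $\zeta^{j}=\zeta^{k}\zeta^{i}\zeta^{-n}=-\zeta^{i}\zeta^{k}$ and $k$ ranges over $\{n-i,\dots,n-1\}$. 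The factor $\zeta^{-n}=-1$ cancels the minus sign in front of the second sum, and the two index ranges partition $\{0,\dots,n-1\}$, so the two contributions combine into $\zeta^{i}\sum_{k=0}^{n-1}m_k\zeta^{k}$. Since this holds for every $i$, the vector $(1,\zeta,\dots,\zeta^{n-1})\trans$ is an eigenvector of $M$ with eigenvalue $\sum_{k=0}^{n-1}m_k\zeta^{k}$.

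I do not expect a genuine obstacle here; the one delicate point is the interaction between reduction of indices modulo $n$ and the sign flip in the pseudo-circulant rule, \ie checking that exactly the ``wrapped-around'' terms pick up the compensating factor $\zeta^{-n}=-1$. An essentially equivalent alternative would be to conjugate $M$ by the diagonal matrix $\mathrm{diag}(1,\eta,\dots,\eta^{n-1})$ for a fixed $2n$-th root of unity $\eta$ with $\eta^{n}=-1$: a short computation shows this turns $M$ into an ordinary circulant matrix, and then the classical description of the eigenvalues and eigenvectors of circulant matrices~\cite{Davis1979} applies, with the $n$-th roots of unity replaced by their $\eta$-multiples.
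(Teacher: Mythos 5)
Your proof is correct, but it takes a different route from the paper. You verify the eigenvector relation directly: you express every entry of $M$ through its first row ($m_{i,j}=m_{j-i}$ for $j\geq i$, $m_{i,j}=-m_{n-i+j}$ for $j<i$), split the $i$-th coordinate of $M\cdot(1,\zeta,\dots,\zeta^{n-1})\trans$ accordingly, and check that the wrapped-around terms pick up exactly the compensating factor $\zeta^{-n}=-1$ — this index bookkeeping is done correctly and the two ranges do recombine into $\zeta^{i}\sum_{k}m_k\zeta^{k}$. The paper instead embeds $M$ into the $2n\times 2n$ block matrix $\left(\begin{smallmatrix} M & -M\\ -M & M\end{smallmatrix}\right)$, observes that this is an honest circulant matrix with first row $(m_0,\dots,m_{n-1},-m_0,\dots,-m_{n-1})$, applies the classical circulant eigenvector result from \cite{Davis1979} to the doubled vector $(1,\zeta,\dots,\zeta^{2n-1})$, and reads off the claim from the block structure. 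Your direct computation is more elementary and fully self-contained (no appeal to the circulant eigenvalue formula), at the cost of some index gymnastics; the paper's block-doubling argument is shorter once the circulant theory is cited and makes the structural link between pseudo-circulant and circulant matrices explicit. Your closing alternative — conjugating $M$ by $\mathrm{diag}(1,\eta,\dots,\eta^{n-1})$ with $\eta^{n}=-1$ to turn it into a genuine circulant matrix — is also valid and is closer in spirit to the paper's reduction, though still distinct from the doubling trick.
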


\begin{proof}
Denote $\vec{v}=(1,\zeta,\dots,\zeta^{n-1})$ and $\vec{v}_0=(\vec{v} | -\vec{v})=(1,\zeta,\dots,\zeta^{2n-1})$.
The $2n\times 2n$ matrix
\[
M_0=\left(
\begin{array}{r|r}
M & -M\\
\hline
-M & M
\end{array}
\right)
\]
is a circulant matrix with the first row $(m_0,m_1,\dots,m_{n-1},-m_0,-m_1,\dots,-m_{n-1})$.  The vector $(1,\zeta,\dots,\zeta^{2n-1})\trans$ is a right eigenvector of $M_0$~\cite{Davis1979}:
\[
M_0\cdot \vec{v}_0 \trans =  \left(\sum_{j=0}^{n-1} m_j\zeta^j+\sum_{j=0}^{n-1} (-m_j)\zeta^{j+n} \right)\cdot \vec{v}_0\trans
= 2\left(\sum_{j=0}^{n-1} m_j\zeta^j\right)\cdot \vec{v}_0\trans.
\]
%Similarly,
%$$
%\vec{v}_0\cdot M_0 =  2\left(\sum_{j=0}^{n-1} m_j\zeta^{-j}\right)\cdot \vec{v}_0.
%$$
Writing $M_0$ and $\vec{v}_0$ in the block form using $M$ and $\vec{v}$ gives the result.

\end{proof}
In particular, note that all $n\times n$ pseudo-circulant matrices have the
common complex eigenvectors $(1,\zeta,\dots,\zeta^{n-1})\trans$ for all $\zeta\in\mathbb{C}$ that satisfy $\zeta^n=-1$.

The Sub Rosa expansion $\phi_n$ is determined by its edgeword, \ie  the sequence of edges and rhombuses on
the boundaries of the metatiles. In fact, the order of the letters in the edgeword is irrelevant:
the abelianized edge word determines $\phi_n$.

\begin{definition}[Abelianized edgeword $\abel{u}$]
Given an edgeword $u$ on the alphabet $A_n = \{0,2,4,6\dots n-2\}$, we define the abelianized edgeword $\abel{u}$ as the vector of number of occurrences of each symbol in the edgeword $u$, \ie
\[ \abel{u} := (|u|_a)_{a\in A_n} \]
\end{definition}
Remark that since the alphabet $A_n$ is the set of even numbers between 0 and $n-2$, this means that $\abel{u}$ is a vector on $n/2$ coordinates and that the $i$th coordinate of $[u]$ is the number of occurrences of the symbol $2i$, \ie $[u]_i = |u|_{2i}$.

From the definition of the Sub Rosa edgewords we get an exact formula for the abelianized Sub Rosa edge words $\abel{\Sigma(n)}$ for even $n$:
\[ \abel{\Sigma(n)} = \tuple{n-2i}_{0\leq i < \frac{n}{2}} = 2\tuple{\tfrac{n}{2} - i}_{0\leq i < \frac{n}{2}}. \]
We have, for example, $\abel{\Sigma(4)} = (4,2)$ as $\Sigma(4)=020020$, and we have $\abel{\Sigma(6)} = (6,4,2)$ as $\Sigma(6)=024020020420$.

\begin{proposition}[Expansion matrices]
  \label{prop:expansion_subrosa_even}
  For even $n$, the expansion matrix $\expansionMatrix{\phi}$ of a substitution of an edgeword $u$ is a pseudo-circulant matrix of the form

  \[\expansionMatrix{\phi} =  \begin{pmatrix} m_0 && -m_{n-1} && \dots && -m_1 \\ m_1 && m_0 && \dots && -m_2 \\ \vdots && \ddots && \ddots && \vdots \\ m_{n-1} && m_{n-2} && \dots && m_0 \end{pmatrix}\]
  with $m_0:= \abel{u}_0$, $m_{n/2}:= 0$,
  and $\forall 1\leq i < \tfrac{n}{2},\ m_{i} = \abel{u}_i,\ m_{ n - i } = -\abel{u}_i$.
\end{proposition}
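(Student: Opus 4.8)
The plan is to read the first column of $\expansionMatrix{\phi}$ directly off the edgeword $u$, identify it with the lift $\lift{\phi(\vv{0})}$, and then recover the whole matrix — and in particular its pseudo-circulant shape — from the fact that every edge of every tile is subdivided by the same edgeword up to rotation.

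Since $\expansionMatrix{\phi}$ acts on column vectors by $\phi(\vec r)\trans = \expansionMatrix{\phi}\cdot\vec r\trans$, its $j$-th column is $\phi(\ee{j})\trans = \lift{\phi(\vv{j})}\trans$, the lift of the abelianized edge path from the origin to the $\vv{j}$-corner of $\sigma(\tile{j}{k})$ (this path is well defined since $\sigma$ is vertex-hierarchic). I would compute the first column, $\lift{\phi(\vv{0})}$, by summing the contributions of the letters of $u$ read along the edge in direction $\vv{0}$. A letter $0$ is a unit edge parallel to $\vv{0}$ and contributes $\vv{0}$. A letter $2i$ with $1\le i<\nhalf$ is a rhombus of angle $\tfrac{2i\pi}{n}$ placed with the diagonal bisecting that angle along $\vv{0}$; its two boundary edges on the two sides of that diagonal point in the directions $e^{\pm\imag i\pi/n}$, that is $\vv{i}$ and $-\vv{n-i}$ (using $e^{-\imag i\pi/n} = -e^{\imag(n-i)\pi/n} = -\vv{n-i}$), so traversing this piece from one diagonal endpoint to the other contributes $\vv{i}-\vv{n-i}$ to the abelianized path, irrespective of the side chosen. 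Summing over all letters,
\[ \phi(\vv{0}) = \abel{u}_0\,\vv{0} + \sum_{i=1}^{\nhalf-1}\abel{u}_i\,(\vv{i}-\vv{n-i}); \]
note that the index $\nhalf$ never occurs, as the alphabet contains no letter $n$ (a rhombus of angle $\pi$ being degenerate). Lifting, $\lift{\phi(\vv{0})}$ has coordinates $m_0=\abel{u}_0$, then $m_i=\abel{u}_i$ and $m_{n-i}=-\abel{u}_i$ for $1\le i<\nhalf$, and $m_{n/2}=0$ — exactly the claimed first column.

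For the remaining columns, the hypothesis that every edge carries the same edgeword up to rotation gives $\phi(\vv{j}) = e^{\imag j\pi/n}\phi(\vv{0})$, hence $\phi(\vv{j}) = \sum_{k=0}^{n-1} m_k\vv{k+j}$ with the convention $\vv{k+j}=-\vv{k+j-n}$ when $k+j\ge n$. Reading off coordinates, the $(\ell,j)$ entry of $\expansionMatrix{\phi}$ equals $m_{\ell-j}$ when $\ell\ge j$ and $-m_{\ell-j+n}$ when $\ell<j$: it depends only on $\ell-j$, and the two admissible differences that are congruent modulo $n$ (one lying in $\{0,\dots,n-1\}$, the other in $\{-(n-1),\dots,-1\}$) yield opposite values. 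This is precisely the pseudo-circulant condition, and rewriting it in matrix form produces the displayed matrix — for instance its first row is $(m_0,-m_{n-1},\dots,-m_1)$.

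The one step that requires genuine care is the geometric identification in the second paragraph: turning the phrase ``rhombus of angle $\tfrac{k\pi}{n}$ bisected through its angle'' into the edge vectors $\vv{k/2}$ and $-\vv{n-k/2}$, so that its abelianized contribution is $\vv{k/2}-\vv{n-k/2}$. This is where the parity of $n$ and the passage to $2n$'th roots of unity really enter (the analogous computation for odd $n$ is different), and it is also what forces the vanishing entry $m_{n/2}=0$. Everything else is bookkeeping, which can be checked against $\expansionMatrix{4}$ and $\expansionMatrix{6}$ in~\eqref{eq:m4_m6} using $\abel{\Sigma(4)}=(4,2)$ and $\abel{\Sigma(6)}=(6,4,2)$.
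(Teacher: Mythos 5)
Your proof is correct and follows essentially the same route as the paper: the key step in both is reading the image of the edge $\vv{0}$ letter-by-letter from the edgeword, with a unit edge contributing $\vv{0}$ and a letter $2i$ (bisected rhombus of angle $\tfrac{2i\pi}{n}$) contributing $\vv{i}-\vv{n-i}$, which gives $m_i=\abel{u}_i$, $m_{n-i}=-\abel{u}_i$ and $m_{n/2}=0$. The paper handles the pseudo-circulant shape more briefly by appealing to its earlier discussion of the ordering of the edge vectors $\vv{0}\prec\dots\prec\vv{n-1}\prec-\vv{0}\prec\dots$, which is exactly the rotation-and-sign-wrap argument ($\phi(\vv{j})=e^{\imag j\pi/n}\phi(\vv{0})$, $\vv{k+j}=-\vv{k+j-n}$ for $k+j\geq n$) that you spell out explicitly.
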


\begin{proof}
  Let $\sigma$ be a substitution (on rhombus tilings with $n$ edge directions) of the edgeword $u$. In particular recall that it means that in the substitution the image of any two edges is the same up to translation and rotation.
  The edge of the metatiles of $\sigma$ is a succession of edges and rhombuses which are bisected tiles along an even angle (see Figure \ref{fig:subrosa_edge_vectors_even}) It means that if we take the edge $\vv{0}$ (or in $\mathbb{R}^n$ the edge $\ee{0}$) its image $\boundary\sigma(\vv{0})$ is a succession of edges $\vv{0}$ and rhombuses where the rhombus of angle $\tfrac{2\pi}{n}$ has edges $\vv{1}$ and $-\vv{n-1}$, and similarly the rhombus of angle $\tfrac{2k\pi}{n}$ has edges $\vv{k}$ and $-\vv{n-k}$ as in Figure \ref{fig:subrosa_edge_vectors_even}. In particular the number of vector $\vv{k}$ in $\boundary\sigma(\vv{0})$ is exactly the number of rhombuses of angle $\tfrac{2k\pi}{n}$ in the edgeword, which means that $m_k:= \abel{u}_k$. Remark that on the edge of the metatiles we have only rhombuses with even angles but they appear in both possible orientations.
\end{proof}
In particular, Proposition~\ref{prop:expansion_subrosa_even} 
applies to Sub Rosa substitutions and expansions with $u=\Sigma(n)$.
\begin{figure}
    \center
    \includegraphics[width=0.6\textwidth]{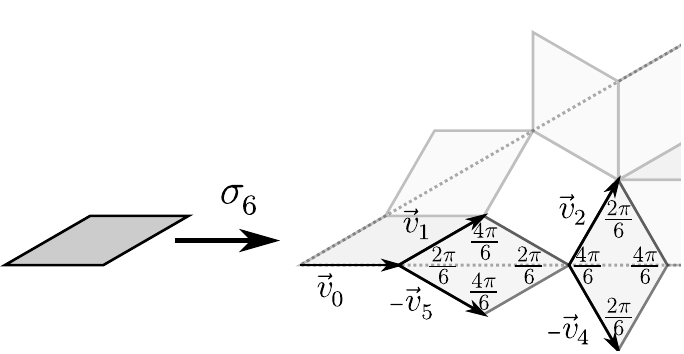}
    \caption{The edges and rhombuses on the edge of the metatile for even $n$.}
    \label{fig:subrosa_edge_vectors_even}
\end{figure}

%%%%%%%%%%%%%%%%%%%%%%%%%%%%%%%%%%%%%%%%%%%%%%%%%%%%%%%%%%%%%%%%%%%%%%%%%%%%%%%%%%%%%%%%
\subsection{Decomposition of the expansion}

Propositions~\ref{prop:preudocirculant} and \ref{prop:expansion_subrosa_even} provide eigenvectors and eigenvalues
for the expansion of a substitution with a given edgeword $u$. To simplify the calculations of eigenvectors and eigenvalues we first decompose the expansion matrix into elementary components that correspond to single letter edgewords.

\begin{definition}[Elementary matrix $\elementaryMatrix{n}{i}$]
    \label{def:elementarymatrices}
    We define the elementary matrices $\elementaryMatrix{n}{i}$ of size $n\times n$ for $0\leq i< \frac{n}{2}$ by
    \begin{itemize}
        \item $\elementaryMatrix{n}{0}:= \identity{n}$,
        \item $\elementaryMatrix{n}{i}$ is the pseudo-circulant matrix with $m_i = 1$  and $m_{n-i} = -1$.
    \end{itemize}
\end{definition}
For example for $n=6$, there are three elementary matrices:
\begin{align}
%% \elementaryMatrix{4}{0} = \begin{pmatrix} 1 & 0 & 0 & 0 \\ 0 & 1 & 0 & 0 \\ 0 & 0 & 1 & 0 \\ 0 & 0 & 0 & 1 \end{pmatrix}
%% \qquad
%% \elementaryMatrix{4}{1} = \begin{pmatrix} 0 & 1 & 0 & -1 \\ 1 & 0 & 1 & 0 \\ 0 & 1 & 0 & 1 \\ -1 & 0 & 1 & 0 \end{pmatrix} \\
\elementaryMatrix{6}{0} &:=
\begin{pmatrix}
    1 & 0 & 0 & 0 & 0 & 0 \\
    0 & 1 & 0 & 0 & 0 & 0 \\
    0 & 0 & 1 & 0 & 0 & 0 \\
    0 & 0 & 0 & 1 & 0 & 0 \\
    0 & 0 & 0 & 0 & 1 & 0 \\
    0 & 0 & 0 & 0 & 0 & 1
\end{pmatrix} \qquad
\elementaryMatrix{6}{1}:=
\begin{pmatrix}
    0 & 1 & 0 & 0 & 0 & -1 \\
    1 & 0 & 1 & 0 & 0 & 0 \\
    0 & 1 & 0 & 1 & 0 & 0 \\
    0 & 0 & 1 & 0 & 1 & 0 \\
    0 & 0 & 0 & 1 & 0 & 1 \\
    -1 & 0 & 0 & 0 & 1 & 0 \\
\end{pmatrix} \\
\elementaryMatrix{6}{2} &:=
\begin{pmatrix}
   0 & 0 & 1 & 0 & -1 & 0 \\
   0 & 0 & 0 & 1 & 0 & -1 \\
   1 & 0 & 0 & 0 & 1 & 0 \\
   0 & 1 & 0 & 0 & 0 & 1 \\
   -1 & 0 & 1 & 0 & 0 & 0 \\
   0 & -1 & 0 & 1 & 0 & 0 \\
\end{pmatrix}
\end{align}

\begin{proposition}[Decomposition of the expansion matrix]
    \label{prop:expansiondecomposition}
    Let $\expansionMatrix{\phi}$ be the expansion matrix of a substitution of edgeword $u$.
    We can decompose $\expansionMatrix{\phi}$ in a linear combination of the elementary matrices:
    \[ \expansionMatrix{\phi} = \sum\limits_{0\leq i < \frac{n}{2}} \abel{u}_i \elementaryMatrix{n}{i}.\]
\end{proposition}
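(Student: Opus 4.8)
The plan is to reduce the claimed identity to a comparison of the single strip of data that determines a pseudo-circulant matrix. First I would record the elementary but crucial observation that the set of pseudo-circulant $n\times n$ matrices is a linear subspace of the space of all $n\times n$ matrices: the defining relations $m_{i,j}=m_{i',j'}$ when $i-j=i'-j'$ and $m_{i,j}=-m_{i',j'}$ when $i-j=i'-j'\pm n$ are homogeneous linear constraints, stable under sums and scalar multiples. In particular, every pseudo-circulant matrix is uniquely determined by the entries of its first column $(m_0,m_1,\dots,m_{n-1})\trans$, since each entry $m_{i,j}$ equals $m_{(i-j)\bmod n}$ up to the sign dictated by whether $i\geq j$ or $i<j$.

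Next I would note that both sides of the claimed equality are pseudo-circulant: the left side $\expansionMatrix{\phi}$ by Proposition~\ref{prop:expansion_subrosa_even}, and the right side as a linear combination of the matrices $\elementaryMatrix{n}{i}$, each of which is pseudo-circulant by Definition~\ref{def:elementarymatrices}. Hence, by the previous paragraph, it suffices to check that the two sides have the same first column.

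Then I would simply read off and match the first columns. By Proposition~\ref{prop:expansion_subrosa_even}, the first column of $\expansionMatrix{\phi}$ is the sequence $(m_k)_{0\leq k<n}$ with $m_0=\abel{u}_0$, $m_{n/2}=0$, and $m_i=\abel{u}_i$, $m_{n-i}=-\abel{u}_i$ for $1\leq i<\tfrac n2$. On the other side, $\elementaryMatrix{n}{0}=\identity{n}$ contributes $\abel{u}_0$ in position $0$ and nothing elsewhere, while for $1\le i<\tfrac n2$ the matrix $\abel{u}_i\,\elementaryMatrix{n}{i}$ contributes $\abel{u}_i$ in position $i$, the value $-\abel{u}_i$ in position $n-i$, and nothing elsewhere. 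Summing these contributions gives exactly the sequence $(m_k)_{0\leq k<n}$ above, so the two first columns agree and the two pseudo-circulant matrices are equal.

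The point to be careful about — rather than a genuine obstacle — is the index bookkeeping in the last step: one must check that the pairs $\{i,n-i\}$ for $1\le i<\tfrac n2$ together with the indices $0$ and $\tfrac n2$ partition $\{0,1,\dots,n-1\}$, which uses that $n$ is even. This guarantees that no position of the first column receives two contributions and that position $\tfrac n2$ correctly receives coefficient $0$, matching $m_{n/2}=0$ in Proposition~\ref{prop:expansion_subrosa_even}.
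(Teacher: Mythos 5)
Your proof is correct and takes essentially the same route as the paper, which simply declares the identity a direct consequence of Proposition~\ref{prop:expansion_subrosa_even} and Definition~\ref{def:elementarymatrices}; you merely spell out the entry-matching (pseudo-circulant matrices form a linear space determined by their first column, and the indices $0$, $\tfrac n2$, and the pairs $\{i,n-i\}$ partition $\{0,\dots,n-1\}$) that the paper leaves implicit.
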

\begin{proof}
    This is a direct consequence of Proposition \ref{prop:expansion_subrosa_even} and Definition \ref{def:elementarymatrices}.

\end{proof}
This proposition applies to Sub Rosa expansion matrices with the abelianized edge-word $\abel{\Sigma(n)}$. For example, for $n=4$ we have
\[ \expansionMatrix{4}= 4\elementaryMatrix{4}{0} + 2\elementaryMatrix{4}{1},\]
and for $n=6$ we have
\[ \expansionMatrix{6}= 6\elementaryMatrix{6}{0} + 4\elementaryMatrix{6}{1} + 2\elementaryMatrix{6}{2}.\]
Proposition~\ref{prop:preudocirculant} immediately provides the following.

\begin{proposition}[Eigenspaces and eigenvalues of the elementary matrices]
\label{prop:elementaryeigenvalue}
Let $n$ be an even integer greater than 2, and let $0\leq i < \frac{n}{2}$.

For every $0\leq k < \tfrac{n}{2}$, the $i$'th elementary matrix $\elementaryMatrix{n}{i}$ of Definition \ref{def:elementarymatrices} admits the space $\slope_n^k$ real two-dimensional eigenspace
with the eigenvalue
\[ \elementaryEigenvalue{n}{i}{k}:= \begin{cases} 1 \qquad \qquad \qquad \qquad \text{if } i = 0, \\     2\cos(\frac{i(2k+1)\pi}{n}) \qquad \text{ if } 0 < i < \tfrac{n}{2}. \end{cases}\]

\end{proposition}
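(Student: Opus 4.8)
The plan is to read the eigenvalue directly off Proposition~\ref{prop:preudocirculant} by plugging in a well-chosen $2n$'th root of unity, and then to descend from the complex eigenvector to the real plane $\slope_n^k$.

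First I would fix $k$ with $0\leq k<\nhalf$ and set $\zeta := e^{\imag\frac{(2k+1)\pi}{n}}$. One checks immediately that $\zeta^n = e^{\imag(2k+1)\pi} = -1$ since $2k+1$ is odd, so $\zeta$ is a $2n$'th root of unity that is not an $n$'th root of unity, and Proposition~\ref{prop:preudocirculant} applies: the complex vector $\vec{v} := (1,\zeta,\dots,\zeta^{n-1})\trans$ is a right eigenvector of every $n\times n$ pseudo-circulant matrix, in particular of $\elementaryMatrix{n}{i}$, with eigenvalue $\sum_{j=0}^{n-1} m_j \zeta^j$, where $(m_0,\dots,m_{n-1})$ is the first row of $\elementaryMatrix{n}{i}$.

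Next I would evaluate this eigenvalue using Definition~\ref{def:elementarymatrices}. For $i=0$ we have $\elementaryMatrix{n}{0} = \identity{n}$, whose eigenvalue is $1$ on all of $\R{n}$, hence on $\slope_n^k$. For $0<i<\nhalf$ the first row of $\elementaryMatrix{n}{i}$ has $m_i = 1$, $m_{n-i} = -1$ and every other entry $0$, so the eigenvalue equals $\zeta^i - \zeta^{n-i}$; using $\zeta^n = -1$ we get $\zeta^{n-i} = \zeta^n\zeta^{-i} = -\zeta^{-i}$, hence the eigenvalue is $\zeta^i + \zeta^{-i} = 2\cos\!\left(\frac{i(2k+1)\pi}{n}\right) = \elementaryEigenvalue{n}{i}{k}$, which is in particular real.

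Finally I would pass to the real eigenspace. Since $\elementaryMatrix{n}{i}$ is a real matrix and $\elementaryEigenvalue{n}{i}{k}$ is real, splitting the identity $\elementaryMatrix{n}{i}\cdot\vec{v}\trans = \elementaryEigenvalue{n}{i}{k}\,\vec{v}\trans$ into its real and imaginary parts shows that both $\real{\vec{v}} = \left(\cos\frac{(2k+1)m\pi}{n}\right)_{0\leq m<n}$ and $\mathrm{Im}\!\left(\vec{v}\right) = \left(\sin\frac{(2k+1)m\pi}{n}\right)_{0\leq m<n}$ are real eigenvectors of $\elementaryMatrix{n}{i}$ for the eigenvalue $\elementaryEigenvalue{n}{i}{k}$. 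By definition these two vectors generate $\slope_n^k$, and $\slope_n^k$ is genuinely two-dimensional (Lemma~\ref{lemma:decomposition_Rn}), so $\slope_n^k$ is a real two-dimensional eigenspace of $\elementaryMatrix{n}{i}$ with eigenvalue $\elementaryEigenvalue{n}{i}{k}$, as claimed. There is no serious obstacle here; the only points requiring care are the index collision between the matrix index $i$ and the coordinate index of the vectors (which I resolve by writing the coordinate index as $m$), and the fact that the two-dimensionality of $\slope_n^k$ is an input cited from Lemma~\ref{lemma:decomposition_Rn} rather than something reproved on the spot.
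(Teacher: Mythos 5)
Your proposal is correct and follows essentially the same route as the paper: apply Proposition~\ref{prop:preudocirculant} with $\zeta = e^{\imag\frac{(2k+1)\pi}{n}}$ (so $\zeta^n=-1$), handle $i=0$ trivially, and compute $\zeta^i-\zeta^{n-i}=2\cos\bigl(\tfrac{i(2k+1)\pi}{n}\bigr)$. The only difference is that you make explicit the passage from the complex eigenvector to the real plane $\slope_n^k$ via real and imaginary parts, a step the paper absorbs into its notation $\slope_n^k=\spanning{\vec{v}}$.
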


\begin{proof}
Let us first remark that for $i=0$ the statement is trivial because $\elementaryMatrix{n}{0} =\identity{n}$ and the identity matrix admits any space as an eigenspace with the eigenvalue 1.

Let us now take $0< i < \tfrac{n}{2}$ and $0\leq k < \tfrac{n}{2}$, and denote $\zeta=e^{\imag\frac{(2k+1)\pi}{n}}$.
Then $\zeta^n=-1$ and $\slope_n^k=\spanning{\vec{v}}$ with
$\vec{v}=(1,\zeta,\dots,\zeta^{n-1})$. By Proposition~\ref{prop:preudocirculant}
the vector $\vec{v}$ is an eigenvector of the pseudo-circulant $\elementaryMatrix{n}{i}$ with the
corresponding eigenvalue
\[
\zeta^i-\zeta^{n-i}= e^{\imag \frac{i(2k+1)\pi}{n}} - e^{\imag \frac{(n-i)(2k+1)\pi}{n}} = 2\cos\left(\frac{i(2k+1)\pi}{n}\right).
\]
\end{proof}

%%%%%%%%%%%%%%%%%%%%%%%%%%%%%%%%%%%%%%%%%%%%%%%%%%%%%%%%%%%%%%%%%%%%%%%%%%%%%%%%%%%%%%%%
\subsection{Eigenvalues of the expansion}

\begin{definition}[The eigenvalue matrix]
\label{def:eigenvalue_matrix}
Let $n$ be an even integer, let us define the eigenvalue matrix $\eigenMatrix{n}$ by
\[ \eigenMatrix{n}:= \left(\elementaryEigenvalue{n}{j}{i} \right)_{0\leq i,j < \frac{n}{2}} = \left(\eta_j \cos \tfrac{(2i+1)j\pi}{n}\right)_{0\leq i,j < \frac{n}{2}},\]
with $\eta_0:=1$ and $\eta_j:= 2$ for $0<j<\tfrac{n}{2}$.
\end{definition}

\begin{proposition}[Eigenvalues]
\label{prop:eigenvalue}
Let $n$ be an even integer.
Let $\sigma$ be a substitution, on rhombus tiles with $n$ edge directions, such that the image of any two edges by $\sigma$ are the same up to rotation and translation. Let $u$ be the edgeword of $\sigma$, and let $\expansionMatrix{\phi}$ be its expansion matrix.

For every $0\leq k < \frac{n}{2}$, the matrix $\expansionMatrix{\phi}$ admits the space $\slope_n^k$ as an eigenspace.
Letting $\lambda_{k}$ be the corresponding eigenvalue, and denoting $\lambda:= (\lambda_{k})_{0\leq k < \frac{n}{2}}$
for the vector of these eigenvalues, we have
\[ \lambda\trans = \eigenMatrix{n}\cdot \abel{u}\trans.\]
\end{proposition}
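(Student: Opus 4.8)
The plan is to bootstrap directly off the two structural results already in place: the additive decomposition of the expansion matrix into elementary matrices (Proposition~\ref{prop:expansiondecomposition}) and the eigenvalue computation for each elementary matrix (Proposition~\ref{prop:elementaryeigenvalue}). The point is that the planes $\slope_n^k$ were designed so that \emph{every} elementary matrix has them as eigenspaces simultaneously, so they are automatically eigenspaces of any linear combination of elementary matrices, and eigenvalues simply add up coordinatewise.

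Concretely, I would fix $0\leq k < \frac{n}{2}$ and start from $\expansionMatrix{\phi} = \sum_{0\leq i < \frac{n}{2}} \abel{u}_i \elementaryMatrix{n}{i}$ (Proposition~\ref{prop:expansiondecomposition}). By Proposition~\ref{prop:elementaryeigenvalue}, each $\elementaryMatrix{n}{i}$ restricted to the real two-dimensional subspace $\slope_n^k$ acts as multiplication by the scalar $\elementaryEigenvalue{n}{i}{k}$. Since scalar multiples of the identity commute with everything and scalars add, the restriction of $\expansionMatrix{\phi}$ to $\slope_n^k$ is multiplication by $\sum_{0\leq i < \frac{n}{2}} \abel{u}_i\, \elementaryEigenvalue{n}{i}{k}$; in particular $\slope_n^k$ is an eigenspace of $\expansionMatrix{\phi}$, and this scalar is the eigenvalue $\lambda_k$. (If one wants the stronger statement that $\expansionMatrix{\phi}$ is block-diagonalized by these planes, one invokes the appendix lemma that the $\slope_n^k$, $0\leq k<\frac n2$, are pairwise orthogonal and sum to $\R{n}$; but the proposition as stated only needs the common-eigenspace observation.)

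It then remains to identify $\lambda_k = \sum_j \abel{u}_j\, \elementaryEigenvalue{n}{j}{k}$ with the $k$-th coordinate of $\eigenMatrix{n}\cdot\abel{u}\trans$. By Definition~\ref{def:eigenvalue_matrix} the $(k,j)$ entry of $\eigenMatrix{n}$ is precisely $\elementaryEigenvalue{n}{j}{k}$, so $(\eigenMatrix{n}\cdot\abel{u}\trans)_k = \sum_j \elementaryEigenvalue{n}{j}{k}\,\abel{u}_j = \lambda_k$, which is the claimed identity $\lambda\trans = \eigenMatrix{n}\cdot\abel{u}\trans$. There is no real analytic obstacle here, since the trigonometric content was already discharged in Proposition~\ref{prop:elementaryeigenvalue}; the only point requiring care is the index bookkeeping, namely checking that the summation index of $\abel{u}$ matches the column index $j$ of $\eigenMatrix{n}$ and hence the first (not the second) argument of $\elementaryEigenvalue{n}{j}{k}$, which is exactly how $\eigenMatrix{n}$ was defined.
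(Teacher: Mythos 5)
Your proposal is correct and follows essentially the same route as the paper: decompose $\expansionMatrix{\phi}$ via Proposition~\ref{prop:expansiondecomposition}, note that every elementary matrix has $\slope_n^k$ as a common eigenspace by Proposition~\ref{prop:elementaryeigenvalue}, sum the eigenvalues, and recognize the result as the $k$-th coordinate of $\eigenMatrix{n}\cdot\abel{u}\trans$. The index bookkeeping you flag is exactly how Definition~\ref{def:eigenvalue_matrix} is set up, so nothing further is needed.
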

\begin{proof}
This proposition follows from Propositions \ref{prop:expansiondecomposition} and \ref{prop:elementaryeigenvalue}.
Let us take $n$ even, and a substitution $\sigma$ of edge word $u$ and expansion matrix $\expansionMatrix{\phi}$.
We decompose $\expansionMatrix{\phi}$ as
\[ \expansionMatrix{\phi} = \sum\limits_{0\leq i < \frac{n}{2}} \abel{u}_i \elementaryMatrix{n}{i}.\]
Now take $0\leq k < \tfrac{n}{2}$.
All the elementary matrices admit $\slope_n^k$ as eigenspace. So $\expansionMatrix{\phi}$ also admits $\slope_n^k$ as eigenspace. Let us denote by $\lambda_{k}$ the eigenvalue of $\expansionMatrix{\phi}$ on $\slope_n^k$. We have
\[ \lambda_{k} = \sum\limits_{0\leq i < \frac{n}{2}} \abel{u}_i \elementaryEigenvalue{n}{i}{k}.\]
Overall we can reformulate this as
\[ \lambda\trans=\eigenMatrix{n}\cdot \abel{u}\trans.\]

\end{proof}

\begin{proposition}[Eigenvalues of the Sub Rosa expansion]
\label{prop:eigenvalues_subrosa}
Let $n$ be an even integer. Let $\expansionMatrix{n}$ be the expansion matrix of Sub Rosa substitution $\sigma_n$.
For all $0\leq k < \tfrac{n}{2}$:
\begin{enumerate}
\item $\expansionMatrix{n}$ admits $\slope_n^k$ as an eigenspace with some eigenvalue which we denote by $\lambda_{n,k}$,

\item $\lambda_{n,k} = \frac{1}{\sin^2\left(\frac{(2k+1)\pi}{2n}\right)} $,

\item $\lambda_{n,k} > 1$.
\end{enumerate}
\end{proposition}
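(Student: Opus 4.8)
Item~(1) requires no new argument. The Sub Rosa substitution $\sigma_n$ is vertex-hierarchic with all edge images equal up to rotation and translation, so Proposition~\ref{prop:eigenvalue} applies verbatim: $\expansionMatrix{n}$ admits each $\slope_n^k$ as a real two-dimensional eigenspace, and the vector $\lambda=(\lambda_{n,k})_{0\leq k<\nhalf}$ of the corresponding eigenvalues satisfies $\lambda\trans=\eigenMatrix{n}\cdot\abel{\Sigma(n)}\trans$. For item~(2) the plan is to substitute the explicit abelianized edgeword $\abel{\Sigma(n)}=\tuple{n-2i}_{0\leq i<\nhalf}$ and the elementary eigenvalues of Proposition~\ref{prop:elementaryeigenvalue} into this identity, which turns the $k$-th eigenvalue into the trigonometric sum
\[ \lambda_{n,k}=n+2\sum_{i=1}^{\nhalf-1}(n-2i)\cos\!\left(\tfrac{i(2k+1)\pi}{n}\right), \]
and then to evaluate this sum in closed form.

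The evaluation I have in mind is the following. Put $m=\nhalf$ and $\theta=\frac{(2k+1)\pi}{n}$, and rewrite $n-2i=2(m-i)$, so that $\lambda_{n,k}=2\bigl(m+2\sum_{i=1}^{m-1}(m-i)\cos(i\theta)\bigr)$. The inner quantity is the Dirichlet--Fej\'er kernel: with $z=e^{\imag\theta}$ one expands $\left|\sum_{i=0}^{m-1}z^i\right|^2=\sum_{i,j=0}^{m-1}z^{i-j}=m+2\sum_{i=1}^{m-1}(m-i)\cos(i\theta)$, while summing the geometric series gives $\left|\sum_{i=0}^{m-1}z^i\right|^2=\left|\tfrac{z^m-1}{z-1}\right|^2=\frac{\sin^2(m\theta/2)}{\sin^2(\theta/2)}$, where $\theta/2\in(0,\tfrac{\pi}{2})$ guarantees $\sin(\theta/2)\neq 0$. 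Hence $\lambda_{n,k}=2\,\frac{\sin^2(m\theta/2)}{\sin^2(\theta/2)}$. The last step is the arithmetic simplification $m\theta=\nhalf\cdot\frac{(2k+1)\pi}{n}=\frac{(2k+1)\pi}{2}$, so that $m\theta/2$ is an odd multiple of $\frac{\pi}{4}$ and therefore $\sin^2(m\theta/2)=\sin^2\!\left(\tfrac{(2k+1)\pi}{4}\right)=\half$ for every integer $k$. Substituting gives exactly $\lambda_{n,k}=\frac{1}{\sin^2(\theta/2)}=\frac{1}{\sin^2\left(\frac{(2k+1)\pi}{2n}\right)}$, which is item~(2).

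Item~(3) is then immediate from the formula: for $0\leq k<\nhalf$ the angle $\frac{(2k+1)\pi}{2n}$ lies strictly between $0$ and $\frac{\pi}{2}$ (it could equal $\frac{\pi}{2}$ only if $2k+1=n$, which is impossible since $n$ is even), so $0<\sin\!\left(\frac{(2k+1)\pi}{2n}\right)<1$ and hence $\lambda_{n,k}=1/\sin^2\!\left(\frac{(2k+1)\pi}{2n}\right)>1$.

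I do not expect a genuine obstacle here. The only mildly delicate point is the evaluation of the cosine sum: the economical route is to recognize it as (twice) a Fej\'er kernel, i.e. to factor it as $\left|\sum_{i=0}^{m-1}z^i\right|^2$, rather than to push through two rounds of Abel summation, and then to notice the collapse $\sin^2\!\left(\frac{(2k+1)\pi}{4}\right)=\half$, which is precisely what makes the final expression so clean.
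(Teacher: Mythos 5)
Your proposal is correct, and its skeleton matches the paper's: item (1) by invoking Proposition~\ref{prop:eigenvalue}, item (2) by substituting $\abel{\Sigma(n)}=\tuple{n-2i}_{0\leq i<\nhalf}$ and the elementary eigenvalues of Proposition~\ref{prop:elementaryeigenvalue} to reduce everything to the same cosine sum, and item (3) read off from the closed form. The only real divergence is how that sum is evaluated. The paper multiplies $\lambda_{n,k}$ by $\sin^2\theta$ (with $\theta=\tfrac{(2k+1)\pi}{2n}$), converts via $\sin^2\theta=\tfrac{1-\cos 2\theta}{2}$ and product-to-sum identities, and lets the weighted cosines telescope down to $1-\cos\bigl(\tfrac{(2k+1)\pi}{2}\bigr)=1$. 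You instead recognize $m+2\sum_{i=1}^{m-1}(m-i)\cos(i\theta')$ (with $m=\nhalf$, $\theta'=\tfrac{(2k+1)\pi}{n}$) as the Fej\'er kernel $\bigl|\sum_{i=0}^{m-1}e^{\imag i\theta'}\bigr|^2=\tfrac{\sin^2(m\theta'/2)}{\sin^2(\theta'/2)}$, and then use the collapse $\sin^2\bigl(\tfrac{(2k+1)\pi}{4}\bigr)=\tfrac12$; your checks that $e^{\imag\theta'}\neq 1$ and that $2k+1<n$ keeps the angle strictly inside $(0,\tfrac{\pi}{2})$ are exactly the points that need care, and you handle both. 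Your route is arguably cleaner: it avoids the two-step telescoping, makes the positivity of the eigenvalue structurally evident (it is twice a squared modulus over $\sin^2$), and explains where the $1/\sin^2$ shape comes from; the paper's computation is more pedestrian but entirely self-contained trigonometry. Either way the final formula and the deduction of $\lambda_{n,k}>1$ agree with the paper.
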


\begin{proof}
Fix $0\leq k < \tfrac{n}{2}$ and  denote $\theta:= \tfrac{(2k+1)\pi}{2n}$. Note that $0<\theta < \tfrac{\pi}{2}$.
\medskip

\noindent
1. The first item is a direct consequence of Proposition \ref{prop:eigenvalue} with the fact that $\sigma_n$ is defined by an edgeword $\Sigma(n)$.
\medskip

\noindent
2. Because $\abel{\Sigma(n)} = 2\tuple{\tfrac{n}{2} - i}_{0\leq i < \frac{n}{2}}$, we get from
Proposition \ref{prop:eigenvalue} that
\[ \lambda_{n,k} = \sum\limits_{0\leq i < \frac{n}{2}} \abel{\Sigma(n)}_i \lambda_{n,k,i} = \sum\limits_{0\leq i < \frac{n}{2}} 2(\tfrac{n}{2}-i) \elementaryEigenvalue{n}{i}{k}. \]
Replacing $\elementaryEigenvalue{n}{i}{k}$ with its value from Definition~\ref{def:eigenvalue_matrix} we get
\[ \lambda_{n,k} = n + 4\sum\limits_{1\leq i < \frac{n}{2}} \left(\tfrac{n}{2}-i\right)\cos\left(2i\theta\right).\]
We can calculate

\begin{align}
&\lambda_{n,k}\sin^2(\theta)\\
&= \left(  n + 4\sum\limits_{1\leq i < \frac{n}{2}} \left(\tfrac{n}{2}-i\right)\cos\left(2i\theta\right) \right) \sin^2(\theta) \\
&= \left(  n + 4\sum\limits_{1\leq i < \frac{n}{2}} \left(\tfrac{n}{2}-i\right)\cos\left(2i\theta\right) \right)\left(\frac{1-\cos(2\theta)}{2}\right)\\
&= \tfrac{n}{2}(1-\cos(2\theta)) + \sum\limits_{1\leq i < \frac{n}{2}}(\tfrac{n}{2}-i)\Big( 2\cos(2i\theta) - \cos(2(i+1)\theta) - \cos(2(i-1)\theta)\Big)\\
&= \tfrac{n}{2}(1-\cos(2\theta)) + \tfrac{n}{2}\cos(2\theta)-\cos\left(n\theta\right) - (\tfrac{n}{2} -1) \\
&= 1 - \cos\left(\frac{(2k+1)\pi}{2}\right) = 1 - \cos\left(k\pi+\frac{\pi}{2}\right)\\
&= 1 \end{align}
From this we get the expected formula
\[ \lambda_{n,k} = \frac{1}{\sin^2(\theta)} = \frac{1}{\sin^2\left(\frac{(2k+1)\pi}{2n}\right)}.\]
\medskip

\noindent
3. The third item now follows directly from the second item: we have $0<\theta < \tfrac{\pi}{2}$, so that $0< \sin^2\theta < 1$.
This implies that $\lambda_{n,k} > 1$.
\end{proof}

\begin{table}
\qquad
    \begin{tabular}{l| c c c c c c}
      n & $\lambda_{0}(n)$ & $\lambda_{1}(n)$ & $\lambda_{2}(n)$ & $\lambda_{3}(n)$ & $\lambda_{4}(n)$ & $\lambda_{6}(n)$\\
      \hline
      4 & 6.83 & 1.17 & - & - & - & -\\
      6 & 14.93 & 2 & 1.07 & - & - & - \\
      8 & 26.27 & 3.24 & 1.45 & 1.04 & - & - \\
      10 & 40.86 & 4.85  & 2  & 1.26 & 1.03 & -\\
      12 &  58.70 & 6.83 & 2.70  & 1.59 & 1.17 & 1.02 \\
      \end{tabular}

\caption{Approximated eigenvalues of the Sub Rosa expansion $\phi_n$ for small $n$.}
\label{table:approximate_eigenvalues}
\end{table}

Approximate values of the eigenvalues of the Sub Rosa expansions $\phi_n$ for small $n$ can be found in Table \ref{table:approximate_eigenvalues}.

In particular when we combine Proposition \ref{prop:eigenvalues_subrosa} and Proposition \ref{prop:planarity_non-planarity} we obtain Theorem \ref{th:subrosa_not_planar} for even $n\geq 4$. Indeed, the Sub Rosa expansion $\phi_n$ for even $n\geq 4$ has all eigenvalues strictly greater than 1 so the Sub Rosa substitution $\sigma_n$ is not planar, \emph{i.e.}, it does not admit discrete plane tilings. So the Sub Rosa $n$ tilings are not discrete planes.

%%%%%%%%%%%%%%%%%%%%%%%%%%%%%%%%%%%%%%%%%%%%%%%%%%%%%%%%%%%%%%%%%%%%%%%%%%%%%%%%%%%%%%%%
%%%%%%%%%%%%%%%%%%%%%%%%%%%%%%%%%%%%%%%%%%%%%%%%%%%%%%%%%%%%%%%%%%%%%%%%%%%%%%%%%%%%%%%%
%%%%%%%%%%%%%%%%%%%%%%%%%%%%%%%%%%%%%%%%%%%%%%%%%%%%%%%%%%%%%%%%%%%%%%%%%%%%%%%%%%%%%%%%
%%%%%%%%%%%%%%%%%%%%%%%%%%%%%%%%%%%%%%%%%%%%%%%%%%%%%%%%%%%%%%%%%%%%%%%%%%%%%%%%%%%%%%%%
%%%%%%%%%%%%%%%%%%%%%%%%%%%%%%%%%%%%%%%%%%%%%%%%%%%%%%%%%%%%%%%%%%%%%%%%%%%%%%%%%%%%%%%%
%%%%%%%%%%%%%%%%%%%%%%%%%%%%%%%%%%%%%%%%%%%%%%%%%%%%%%%%%%%%%%%%%%%%%%%%%%%%%%%%%%%%%%%%
\section{Planar Rosa substitution discrete plane tilings}
\label{sec:planar-rosa}
In this section we present the Planar Rosa construction for even $n$ and we prove Theorem \ref{th:planar-rosa} for even $n$. Note that the name Planar Rosa was chosen because this construction is adapted from the Sub Rosa construction to obtain the additional discrete plane condition.
Recall that the Planar Rosa construction and proof of Theorem \ref{th:planar-rosa} for odd $n$ can be found in \cite{kari2021}.

Let us fix an even integer $n\geq 4$ throughout this section.
We consider here tilings with unit rhombus tiles on $n$ edge directions as described in Section \ref{sec:settings}, and we consider vertex hierarchic substitutions $\sigma$ such that the image of any two edges by $\boundary\sigma$ are the same up to rotation and translation. This means that our substitutions $\sigma$ are defined by a palindromic
edgeword $u$ together with a way to tile the interior of the metatiles given their boundary.

\subsection{Construction}
\label{subsec:prConstruction}
\begin{definition}[Optimal frequency vector $\gamma$]
Let us define a vector $\gamma\in\R{\nhalf}$, which we call optimal frequency vector, as
\[ \gamma:= \tuple{\cos(\tfrac{i\pi}{n})}_{0\leq i < \frac{n}{2}}.\]
\label{def:gamma}
\end{definition}

We call $\gamma$ the optimal frequency vector because, as is be detailed in Subsection \ref{subsec:prPlanarity}, an edgeword with a proportion of each type of rhombus close enough to $\gamma$ guarantees the planarity of the corresponding substitution. Next we define a particular infinite word in which the letters indicating rhombus types appear balanced in the ratio given by the optimal frequency vector $\gamma$.

\begin{definition}[Billiard word $\bword$]
Let $\Gamma_{\half}\subseteq \R{\nhalf}$ be the line $\spanning{\gamma}+\tuple{\thalf, \dots \thalf}$ where
\[ \spanning{\gamma}+\tuple{\thalf, \dots \thalf}:= \left\{ t\gamma + \tuple{\thalf, \dots \thalf}, t \in \mathbb{R} \right\}.\]

Let $\bword$  be the one-way-infinite billiard word of line $\Gamma_\half$ starting at $\tuple{\thalf,\dots \thalf}$, meaning that $\bword$ is built by travelling along the line and adding a letter $2i$ to the word each time it crosses a hyperplane $\hyperplane{i}{k}$ with $\ee{i}$ a vector of the canonical basis of $\R{\nhalf}$, $k\in\mathbb{N}$ and
\[ \hyperplane{i}{k}:= \{ x \in \R{\nhalf}| x\cdot \ee{i} = k \}. \]
\end{definition}
Remark that though the optimal frequency vector is not always totally irrational (for $n=6$, for example, it is not totally irrational) the billiard word $\bword$ is well-defined , see Lemma \ref{lemma:billiard_word_well_defined} in the Appendices for a proof of that.
Note that in our definition, when crossing an hyperplane $\hyperplane{i}{k}$ we add a letter $2i$ to the word (instead of a letter $i$ as is usual with billiard words), this is because the word $\bword$ is thought of as an infinite edgeword and the letters represent rhombuses and edges so they are even integers between $0$ and $n-2$.

Next we use this (infinite) billiard word to define arbitrarily long palindromes.

\begin{definition}[Candidate edgeword $\prWord{i}$ and candidate substitution $\prSubs{i,n}$]
~\\
Let us define for all $i\in\N$ the  candidate edgeword
\[ \prWord{i}:= pref_i(\bword)\overline{pref_i(\bword)},\]
where $pref_i(\bword)$ is the prefix of length $i$ of the billiard word $\bword$ defined above, and $\overline{pref_i(\bword)}$ is its mirror image.

If the metatiles of angles $(\tfrac{k\pi}{n}, \tfrac{(n-k)\pi}{n})$ with edgeword $\prWord{i}$ are tileable for all $1\leq k \leq \nhalf$ then we define the candidate substitution $\prSubs{i,n}$ as a substitution with edgeword $\prWord{i}$.
\label{def:candidate}
\end{definition}

\begin{figure}[t]
\includegraphics[width=\textwidth]{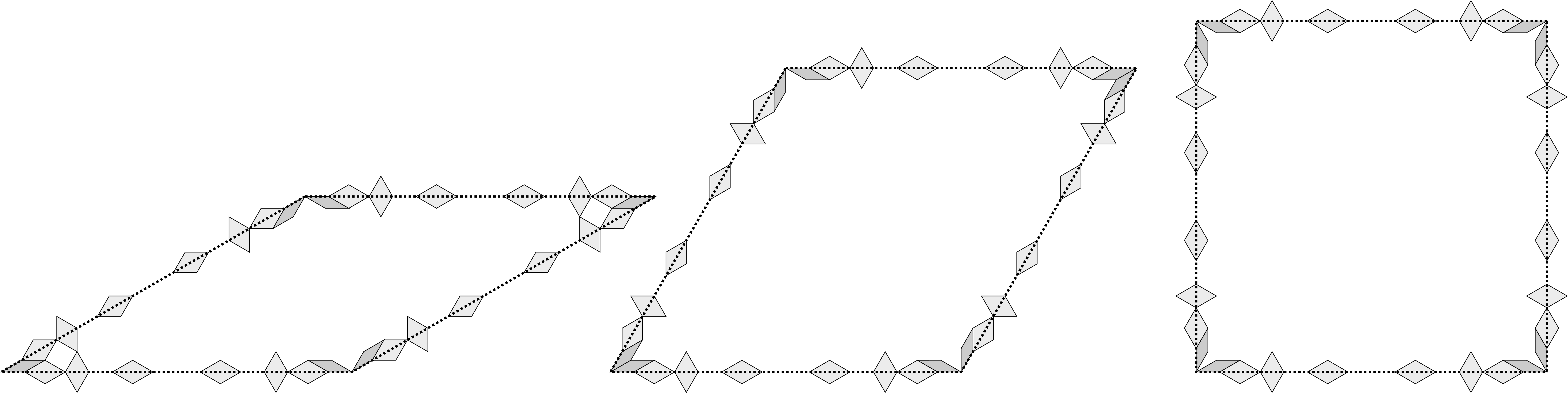}
\caption{The corner condition for $n=6$: in every corner of the three metatiles there is a $\tfrac{\pi}{6}$ rhombus on both sides.}
\label{fig:corner-condition}
\end{figure}

\begin{definition}[Corner condition on the Planar Rosa metatiles]
\label{def:corner-condition}
We say that a substitution $\sigma$ satisfies the corner condition when in every corner of every metatile of $\sigma$ there is a narrow rhombus (rhombus of angle $\tfrac{\pi}{n}$) on both sides, see Figure \ref{fig:corner-condition}.
\end{definition}

\begin{figure}[b]
\center
\includegraphics[width=0.5\textwidth]{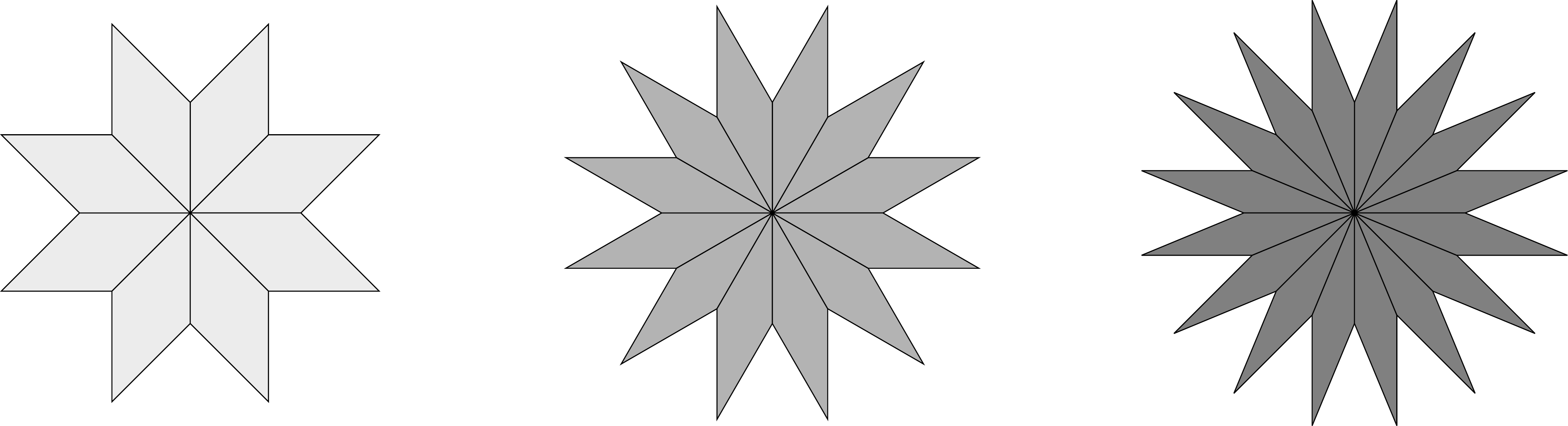}
\caption{The star pattern $\star{n}$ for $n\in\{4,6,8\}$.}
\label{fig:star}
\end{figure}

\begin{definition}[Star pattern $\star{n}$]
Let $n$ be an even integer.
We call star pattern denoted by $\star{n}$ the pattern consisting of $2n$ rhombuses of angle $\tfrac{\pi}{n}$ around a single vertex, see Figure \ref{fig:star}.
\label{def:star}
\end{definition}

\begin{definition}[Planar Rosa substitution $\prSubs{n}$ and canonical tiling $\prTiling{n}$]
\label{def:planar-rosa}
We define the Planar Rosa substitution $\prSubs{n}$ as the candidate substitution $\prSubs{i,n}$ with the minimal $i$ such that it is a a primitive substitution that satisfies the corner condition and it is a planar substitution of slope $\slope_n^0$.

We define the canonical Planar Rosa tiling $\prTiling{n}$ as the fixpoint of $\prSubs{n}$ from the star pattern $\star{n}$.
\end{definition}
Remark that the definition of the Planar Rosa substitution $\prSubs{n}$ requires the existence of an integer $i$ such that the candidate substitution $\prSubs{i,n}$ exists (\ie the metatiles defined by the candidate edgeword $\prWord{i}$ are tileable), satisfies the corner condition, is planar of slope $\slope_n^0$ and is primitive. Additionally, the definition of the canonical Planar Rosa tiling $\prTiling{n}$ requires the fact that the star pattern $\star{n}$ is a legal seed for the Planar Rosa substitution $\prSubs{n}$. The rest of the paper concerns proving these facts.

In Section \ref{subsec:prTileability} we prove that for all sufficiently large $i$ the metatiles defined by the candidate edgeword $\prWord{i}$ are tileable in a way that satisfies the corner condition, \ie the candidate substitution $\prSubs{i,n}$ exists and satisfiest the corner condition.

In Section \ref{subsec:prPlanarity} we prove that for infinitely many $i$, provided that the candidate substitution $\prSubs{i,n}$ exists, the candidate substitution is planar of slope $\slope_n^0$, \ie it generates tilings that are discrete plane tilings of slope $\slope_n^0$.

In Section \ref{subsec:prPrimitivity} we prove that for all sufficiently large $i$, provided that the candidate substitution $\prSubs{i,n}$ exists, the candidate substitution is primitive.

In Section \ref{subsec:prSeed} we prove that for all $i$, provided that the candidate substitution $\prSubs{i,n}$ exists and satisfies the corner condition, the star pattern $\star{n}$ is a legal seed for the candidate substitution $\prSubs{i,n}$.

In Section \ref{subsec:prConclusion} we combine these results to prove that Planar Rosa substitution $\prSubs{n}$ as defined above exists, and thus we
prove Theorem \ref{th:planar-rosa}.

Note that the Planar Rosa substitution $\prSubs{n}$ defines a set of tilings called a \emph{subshift}. Since the substitution is primitive, all these tilings are locally indifferentiable but, as in the case of Penrose tilings, there are uncountably many tilings in the subshift. To define a single tiling that represents this whole subshift, we use a construction of fixpoint of the substitution from a legal seed and we call it the \emph{canonical} tiling.

\subsection{Tileability of the Planar Rosa metatiles}
\label{subsec:prTileability}

In this section we present the proof of the tileability of the metatiles defined by the candidate edgewords.

The main result of this section is the following proposition.
\begin{proposition}[Tileability]
\label{prop:prTileability}
There exists $K\in\mathbb{N}$ such that for any $i\geq K$, the metatiles defined by the candidate edgeword $\prWord{i}$ are tileable in a way that satisfies the corner condition.
\end{proposition}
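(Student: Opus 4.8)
The plan is to reduce the tileability of each metatile to an application of Kenyon's tileability criterion~\cite{kenyon1993}, reformulated via counting functions as in~\cite{kari2021}. Recall that the metatile of angles $(\tfrac{k\pi}{n},\tfrac{(n-k)\pi}{n})$ with edgeword $\prWord{i}$ is a polygon $P_{k,i}$ whose boundary is the closed polygonal curve obtained by laying out, along each of the four sides of the inflated rhombus $\phi(\tile{i}{j})$, the sequence of unit edges and bisected rhombuses prescribed by the palindromic edgeword $\prWord{i} = pref_i(\bword)\overline{pref_i(\bword)}$. Kenyon's criterion states that such a polygon, with all sides in the $n$ admissible directions, can be tiled by unit rhombuses if and only if it satisfies a balancing condition on each family of ``de Bruijn'' lines / sec:  for each edge direction, the signed count of boundary edges must be compatible with a non-negative filling. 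Following~\cite{kari2021}, the plan is to encode this by counting functions $c_k:\R{2}\to\mathbb{Z}$ (one per direction class $k$) defined on the complement of the boundary, and to show (i) $c_k\geq 0$ everywhere inside $P_{k,i}$, and (ii) $c_k$ is bounded. Once the counting functions are non-negative and the polygon is simply connected with balanced boundary, Kenyon's theorem yields a rhombus tiling, and by choosing at each step a corner-compatible rhombus (a narrow $\tfrac{\pi}{n}$ rhombus in each of the four corners, which is forced once $\prWord{i}$ begins and ends with the letter $0$ followed/preceded by an appropriate narrow rhombus — guaranteed for large $i$ by the structure of the billiard word $\bword$, whose first letter corresponds to the direction nearest $\gamma$), we additionally get the corner condition.

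The key steps, in order, are: (1)~Make precise the boundary curve of $P_{k,i}$ from $\prWord{i}$ and verify it is a simple closed curve enclosing a simply connected region — this uses that $\prWord{i}$ is a palindrome, so opposite sides of the inflated rhombus carry reversed edgewords and the curve closes up, together with the fact (from Proposition~\ref{prop:expansion_subrosa_even}-type reasoning) that the abelianized edgeword matches the displacement $\phi(\vv{k})$. (2)~Show the boundary is ``balanced'': for the metatile to be tileable the total turning and the per-direction edge counts must be those of a genuine rhombus decomposition; this follows because $\abel{\prWord{i}}$ is the abelianization that defines $\phi$, hence the boundary of $P_{k,i}$ is exactly $\partial(\phi(\tile{k}{j}))$ up to the prescribed wiggling, and a rhombus of directions $k,j$ inflated by $\phi$ has a boundary of the correct homology class. (3)~Introduce the counting functions and prove non-negativity: this is where the choice of edgeword matters. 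Since the letters of $\prWord{i}$ appear with frequencies close to the optimal frequency vector $\gamma$ (the billiard word $\bword$ being Sturmian-like with slope $\gamma$, its prefixes have balanced letter statistics), the boundary wiggle stays within bounded distance of the straight inflated edges, and for $i$ large the counting function, which measures a discrepancy between the actual boundary and the ``ideal'' filling, stays non-negative — intuitively, the boundary never dips ``inward'' enough to create a deficit. (4)~Conclude via Kenyon that a tiling exists, and refine the construction to respect the corner condition.

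The main obstacle I expect is Step~(3): proving non-negativity of the counting functions. The difficulty is that $\prWord{i}$ is a \emph{prefix-palindrome} of a billiard word, not the billiard word itself, so the balancedness one gets for free from Sturmian/Christoffel theory applies to $pref_i(\bword)$ but the palindromic doubling and the concatenation around four corners can create localized imbalances near the corners and near the midpoints of the sides. One must show these imbalances are $O(1)$ (independent of $i$) and then absorb them by the linear growth of the ``slack'' coming from the bulk of the metatile, which forces the threshold $K$. Handling the even-$n$ case is genuinely different from the odd case of~\cite{kari2021} because the rhombus of angle $\tfrac{n\pi}{2n}=\tfrac{\pi}{2}$ (the ``square'' rhombus, letter $n-2$ when... ) — more precisely because for even $n$ the direction vectors come in $\tfrac{n}{2}$ antipodal pairs, so the edgeword alphabet and the geometry of which rhombus has which pair of edges $(\vv{m},-\vv{n-m})$ must be tracked carefully when verifying that the boundary of $P_{k,i}$ has the right direction multiset; this bookkeeping, rather than any deep new idea, is the technical heart of the proof.
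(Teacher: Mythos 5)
Your overall framing (reduce to Kenyon's criterion via counting functions and use the near-$\gamma$ letter statistics of $\prWord{i}$) points in the right general direction, but there are genuine gaps at the two places where the actual work happens. First, you misstate what must be verified: Kenyon's criterion is not a balance condition on per-direction edge counts plus simple connectivity (balance is necessary but far from sufficient); it is the existence of a matching of boundary edges satisfying the non-crossing, visibility and peripheral-monotonicity conditions K1--K4. The paper encodes this, for metatiles given by an edgeword $u$, as the inequality $f_{|j-2|}^{-1}\circ f_{j}(x+1) < f_{|u_x-2|}^{-1}\circ f_{u_x}(x+1)$ on the letter-counting functions of $u$ (Proposition~\ref{prop:tileability_functions}), together with $2$-almost-balancedness of $\prWord{i}$. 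Your ``counting functions $c_k$ on the plane whose non-negativity gives tileability'' is a different (and unproved) criterion, and your argument for it is only the heuristic that the boundary wiggle stays close to the straight inflated edges. Second, and more importantly, your plan for handling the problematic positions is the wrong mechanism: you propose to show the corner/midpoint imbalances are $O(1)$ and absorb them by ``linear slack from the bulk''. But near the corners (positions $x$ at the start of the word) there is no linear slack to absorb anything; the frequency estimate $f_j(x)\approx x\tgamma{j}$ with discrepancy $\delta$ only yields the criterion for $x$ beyond a threshold $K$, and it can never certify the small-$x$ positions of a fixed metatile. The paper closes exactly this hole with a structural result, Lemma~\ref{lemma:infinite_cone}: the infinite cone of angle $\tfrac{\pi}{n}$ with the billiard word $\bword$ on both sides is tileable, proved via the regular De Bruijn multigrid dual tiling $\dualtiling{n}{\tfrac{1}{2}}$, and the validity of the criterion at positions $x\leq i$ in $\prWord{i}$ is inherited from the cone. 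Nothing in your proposal supplies this ingredient.

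The corner condition is also not ``forced'' the way you claim. The edgeword begins $02$ (a unit edge then a $\tfrac{2\pi}{n}$-rhombus); the narrow $\tfrac{\pi}{n}$-rhombus required at each corner is not a boundary letter but arises in the interior precisely when the Kenyon chains issuing from that first edge and first rhombus cross. Whether they cross depends on the corner angle $\tfrac{k\pi}{n}$: the paper needs a four-case analysis ($1\leq k<\tfrac{n}{2}$, $k=\tfrac{n}{2}$, $k=\tfrac{n}{2}+1$, $k>\tfrac{n}{2}+1$), and in the last case the crossing only holds for sufficiently large $i$, via the quantitative inequality $f_0^{-1}(f_{2k'}(2i)+1) > f_2^{-1}(f_{2k'+2}(2i)+1)$ and a cosine estimate. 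Asserting that a corner-compatible rhombus can simply be ``chosen at each step'' skips this entire argument, which is part of what determines the threshold $K$ in the statement.
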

We first prove the tileability of the metatiles, without taking into account the corner condition.
In order to prove the tileability we need to introduce the Kenyon criterion and the counting functions induced by an edgeword, their link with tileability and results on the multigrid dual tilings. The actual proof of Proposition \ref{prop:prTileability} is decomposed in two parts: first without the corner condition, and then adding the corner condition.
 %pages \pageref{proof:prTileability_1} and \pageref{proof:prTileability_2}.

The interior of the metatile defined by a candidate edgeword $\prWord{i}$ is a polygon (with unit length edges, see Figure \ref{fig:kenyon_criterion}) and our goal is now to tile it with parallelograms (rhombuses in our case).

\begin{figure}[b]
\includegraphics[width=\textwidth]{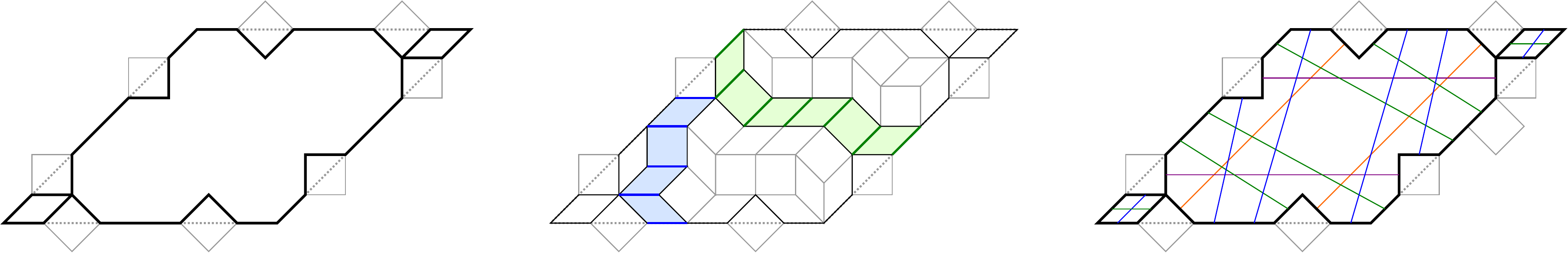}
\caption{The Kenyon criterion for the candidate edgeword $\prWord{3}=020020$ for $n=4$: on the left the polygon to tile, in the middle the chains of rhombus in a tiled metatile, on the right the Kenyon matching.}
\label{fig:kenyon_criterion}
\end{figure}

The first step of the Kenyon method is to fix a starting vertex on the polygon to tile. Let us write $(\vec{a}_1,\dots,\vec{a}_m)$ for the sequence of oriented edges when going around the boundary of the polygon counterclockwise from the starting point and back. All $\vec{a}_j$ are in the set of directions of the tiling, \ie, $\vec{a}_j\in \{\pm \vec{v}_k\ |\ k=1,\dots ,n\}$.
We say that $\vec{v}_k$ and $-\vec{v}_k$ have the same \emph{edge type} but opposite directions.
We denote $\vec{a}_j^\bot$ for the unit vector orthogonal to $\vec{a}_j$ in the counterclockwise direction.

Suppose the interior of the polygon is actually tiled. As seen in Figure \ref{fig:kenyon_criterion}, from each edge of the polygon starts a chain of rhombuses (highlighted in Figure \ref{fig:kenyon_criterion}) that share the same edge type. At the two ends of this chain there
are two edges of the polygon with the same edge types but opposite directions.
These chains define a matching of pairs of edges of the polygon with the following properties.
\begin{enumerate}
  \item[K1.] Two edges that are matched have the same edge type but opposite orientations.
  \item[K2.] Two matched pairs of edges of the same edge types cannot cross each other with respect to the cyclic ordering of the edges. Indeed, two chains with the same edge type cannot cross, as such a crossing would create a ``flat'' rhombus.
  \item[K3.] Two matched edges must ``see'' each other in the parallelogram, \ie there is a path in the interior of the polygon from one edge to the other that is monotonically increasing in the direction $\vec{a}_j^\bot$.
  Indeed, otherwise it would mean that the corresponding chain of rhombuses, linking the two edges, circles back on itself.
  \item[K4.] The matching is peripherally monotonous: for any two matched pairs $\{\vec{a},\vec{a}'\}$ and $\{\vec{b},\vec{b}'\}$ such that in the cyclic ordering of the edges $\vec{a}<\vec{b}<\vec{a}'<\vec{b}'$, we have $\vec{a}^\bot \cdot \vec{b}\trans > 0$. That ensures that at the crossing of the two chains a real rhombus actually exists.
\end{enumerate}
We call \emph{Kenyon matching} a matching on the oriented edges that follow these properties.
\begin{theorem}[Kenyon criterion \cite{kenyon1993}]
\label{thm:kenyon}
A polygon is tileable by parallelograms if and only if a Kenyon matching exists.
\end{theorem}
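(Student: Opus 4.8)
This is the classical criterion of Kenyon~\cite{kenyon1993}; here is how I would organise a self-contained argument. Write $(\vec a_1,\dots,\vec a_m)$ for the cyclic sequence of boundary edges of the polygon $P$, each of unit length and parallel to some $\vv{k}$, and for a parallelogram tiling call a \emph{de Bruijn line of type $k$} a maximal sequence of tiles $p_1,p_2,\dots$ in which consecutive tiles share an edge parallel to $\vv{k}$ and each $p_\ell$ is entered and left through its two opposite $\vv{k}$-parallel edges. For the \emph{necessity} direction (a tiling yields a Kenyon matching), given a tiling, every boundary edge $\vec a_j$ lies on a unique tile and hence on a unique de Bruijn line $L_j$ of its own type. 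Along $L_j$ the coordinate $x\mapsto x\cdot\vec a_j^{\bot}$ is strictly monotone (each tile has positive extent in that direction), so $L_j$ never revisits a tile and, the tiling being finite, must exit through another boundary edge $\vec a_{j'}$ of the same type; an orientation count along $\partial P$ forces $\vec a_{j'}=-\vec a_j$, which is K1, and the monotone path along $L_j$ is the witness required by K3. Matching $\vec a_j\leftrightarrow\vec a_{j'}$ for all $j$ defines the matching $M$. Since a tile of type $\{k,\ell\}$ with $k\ne\ell$ carries exactly one $\vv{k}$-line, two de Bruijn lines of the same type are disjoint simple arcs with endpoints on $\partial P$, hence non-crossing in the cyclic order, which is K2; and where a $\vv{k}$-line and a $\vv{\ell}$-line meet they share a genuine (non-degenerate) $\{k,\ell\}$-tile, and translating the existence of that tile into the cyclic positions of the four endpoints is exactly the peripheral monotonicity K4.

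For the \emph{sufficiency} direction (a Kenyon matching yields a tiling) — the hard part — I would argue by induction on the number of tiles predicted by the matching, equivalently the number of pairs of matched pairs of distinct edge type that interleave in the cyclic order (by K4 each such interleaving contributes exactly one crossing, hence one tile). The base case is a point, or a single rhombus. The inductive step is an \emph{ear removal}: one locates a convex vertex $v$ of $P$, with incident edges $\vec a_j$ (incoming) and $\vec a_{j+1}$ (outgoing), such that the rhombus $Q$ with vertex $v$ and sides $-\vec a_j$ and $\vec a_{j+1}$ lies inside $P$; then $P':=P\setminus Q$ is again a simple polygon with the same edge directions (the two boundary edges meeting at $v$ are replaced by the two opposite sides of $Q$), and $M$ induces a matching $M'$ on $P'$ by rerouting the $\vv{k}$- and $\vv{\ell}$-chains through $Q$, which removes exactly one crossing. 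One then checks that $M'$ again satisfies K1--K4, invokes the induction hypothesis to tile $P'$, and adjoins $Q$. The delicate point — the main obstacle — is the ear-removal lemma itself: proving that some convex vertex of $P$ has its corner rhombus $Q$ contained in $P$, and that the contracted matching $M'$ is still Kenyon. This is precisely where K2--K4 are used: K2 and K3 ensure the chains through $v$ do not wind back, so nothing foreign is geometrically forced between $v$ and $Q$, while K4 (peripheral monotonicity) is what makes the corner crossing realisable by an honest rhombus and what survives the contraction so that $M'$ is again peripherally monotone.

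Finally, I note that for the purposes of this paper only the sufficiency direction is actually needed — to tile the interior of a metatile from a combinatorially specified matching — and we will apply it through the counting-function reformulation of~\cite{kari2021}, which repackages conditions K1--K4 as inequalities on the counting functions induced by the edgeword, so that the existence of a Kenyon matching becomes a statement about those counting functions being admissible.
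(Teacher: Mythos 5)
The paper does not prove this statement at all: it is imported wholesale from Kenyon \cite{kenyon1993}, and the only argument present in the text is the informal derivation of conditions K1--K4 from a hypothetical tiling in the paragraph preceding the theorem. Your necessity direction is essentially that discussion made precise (the chain/de Bruijn ribbon through each boundary edge, monotonicity of the ribbon in its transverse direction, same-type ribbons disjoint hence non-interleaving, crossings of ribbons realised by genuine tiles), and it is sound as a sketch.

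The sufficiency direction --- the only direction the paper actually needs, and which it consumes through the counting-function reformulation of Proposition~\ref{prop:tileability_functions} --- is where your proposal has a genuine gap, and you flag it yourself: the ear-removal lemma. The difficulty is not bookkeeping. At a convex vertex $v$ the corner need not be filled by the single rhombus $Q$ spanned by the two incident boundary edges in any tiling compatible with the matching; $Q$ is a removable ``ear'' only if the two chains emanating from the edges at $v$ can be made to cross immediately, i.e.\ if that crossing can be taken innermost at $v$. So the inductive step requires (i) showing that whenever K1--K4 hold some convex vertex with this property exists, and (ii) showing that after deleting $Q$ and rerouting, the contracted matching still satisfies K2--K4 --- in particular K3, since the boundary, and with it the monotone-visibility requirement, has changed. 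Your sketch asserts that ``K2--K4 ensure'' these facts, but that is exactly the content of the hard half of the theorem, not something one may invoke; likewise the induction measure rests on the unproved claim that interleaving matched pairs correspond bijectively to tiles. As written, the proposal is an honest plan with its central lemma missing, which is precisely the part for which the paper defers to Kenyon's original argument (and, for the version actually used here, to the proof in \cite{kari2021}).
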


\begin{figure}[!b]
\center
\includegraphics[width=0.6\textwidth]{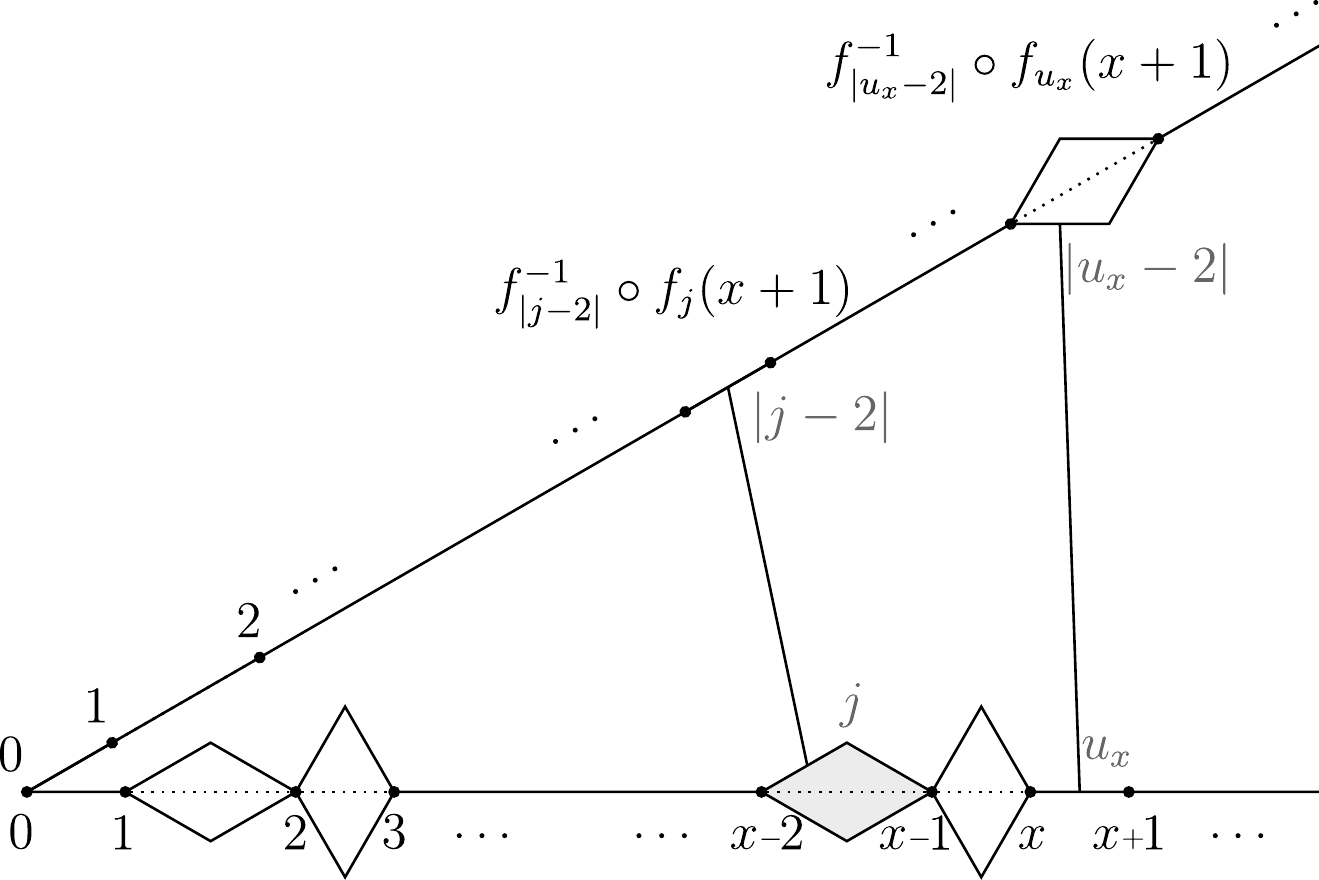}
\caption{The counting functions at the most narrow corner on Planar Rosa 6.
The edge at position $x$ is matched to a rhombus of type $|u_x-2|$ at position $f_{|u_x-2|}^{-1}\circ f_{u_x}(x+1)$. The last occurrence of the rhombus of type $j$ before position $x$ is matched to an edge or rhombus of type $|j-2|$ at position $f_{|j-2|}^{-1}\circ f_{j}(x+1)$. If $u_x < j$, these two matchings cannot cross otherwise we would have an invalid Kenyon crossing, therefore $f_{|j-2|}^{-1}\circ f_{j}(x+1) < f_{|u_x-2|}^{-1}\circ f_{u_x}(x+1)$ is a necessary condition for tileability.}
\label{fig:tileability}
\end{figure}

\begin{definition}[Counting functions]
Let $n\geq 4$ be an even integer and let $u=u_0u_1\dots u_{m-1}$ be an edgeword of length $m$ with letters in the alphabet $\{0,2,\dots n-2\}$.

For an even integer $j \in \{0,2,4,\dots n-2\}$ and a length $x\in\{0,\dots m\}$ we define $f_j(x)$ as the number of occurrences of the letter $j$ in the prefix of length $x$ of the edgeword $u$, \ie
\[f_j(x):= |u_0\dots u_{x-1}|_j.\]
We also define $f_j^{-1}(y)$ as the length of the shortest prefix of $u$ containing at least $y$ occurrences of the letter $j$, \ie
\[f_j^{-1}(y):= \inf\{x\in \{0,\dots m\} |\ f_j(x)\geq y\}\]
\end{definition}
We use these counting functions because we can reformulate the Kenyon criterion for the tileability of a polygon by parallelograms \cite{kenyon1993} in terms of inequalities on the counting functions.
In order to have a cleaner result we add an additional definition: the almost-balancedness of edgewords.

\begin{definition}[Almost-balancedness] Let $n\geq 4$ be an even integer and $k$ a positive integer.\\
We say that a word $u$ on alphabet $\{0,2,4,\dots n-2\}$ is $k$-almost-balanced when for any letters $j_1$, $j_2$ such that the frequency of appearance of $j_1$ in $u$ is at least that of $j_2$, and any factor subword $v$ of $u$ we have $|v|_{j_1} - |v|_{j_2} \geq -k$.
\end{definition}
The idea of almost-balancedness is that when the frequency of appearance of a letter $j_1$ is greater than the frequency of appearance of a letter $j_2$ in the word, then we have a lower bound on $|v|_{j_1}-|v|_{j_2}$ for all finite factor subwords $v$ of $u$.

Note that in the infinite billiard word $w$ defined in Section \ref{sec:prConstruction}, the frequency of appearance of the letters is decreasing: the frequency of appearance of $0$ is greater than that of $2$, which in turn is greater than that of $4$ , etc. So if $j_1 < j_2$ then in $w$ (and in large enough candidate edgewords $\prWord{i}$) the frequency of appearance of $j_1$ is greater than that of $j_2$.

\begin{proposition}[Tileability and counting functions \cite{kari2021}]
\label{prop:tileability_functions}
Let $n\geq 4$ be an even integer and let $u$ be a $2$-almost-balanced edgeword on alphabet $\{0,2,\dots n-2\}$ such that each letter appears at least once in $u$. Suppose that for every position $0\leq x < n$, and for every even $j$ such that $u_x < j < n$, we have
\begin{equation}
    \label{equation:counting_functions_criterion}
    f_{|j-2|}^{-1}\circ f_{j}(x+1) < f_{|u_x-2|}^{-1}\circ f_{u_x}(x+1).
\end{equation}
Then all the metatiles induced by the edgeword $u$ are tileable, implying that the edgeword $u$ actually induces a well-defined substitution.

Conversely if there exists $(x,u_x,j)$ that breaks this condition then the narrow metatile of edgeword $u$ is not tileable.
\end{proposition}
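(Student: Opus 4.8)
The plan is to invoke the Kenyon criterion (Theorem~\ref{thm:kenyon}): a metatile is tileable by rhombuses exactly when the polygon bounding it admits a Kenyon matching, so the statement reduces to understanding which matchings satisfying K1--K4 can exist on that boundary. First I would set up this boundary combinatorially. Read counterclockwise, the boundary of the metatile with angles $\bigl(\tfrac{k\pi}{n},\tfrac{(n-k)\pi}{n}\bigr)$ is a concatenation of copies of $u$ and of its mirror $\overline{u}$ along the four sides, where a letter $0$ contributes one boundary edge in the direction of the current side and a letter $j>0$ contributes the two inner edges of a boundary rhombus of angle $\tfrac{j\pi}{n}$. A short computation with the vectors $\vec v_k$ shows that on a side of direction $\vec v_d$ a letter $j$ produces boundary edges of edge types $\vec v_{d+j/2}$ and $\vec v_{d-j/2}$, so a chain of fixed edge type entering a corner at a letter $j$ must leave it at a letter $|j-2|$. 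Since two chains of the same edge type cannot cross (K2), this forces, in \emph{any} Kenyon matching, the inward edge at a position carrying a letter $j$ to be matched with the correspondingly-indexed occurrence of the letter $|j-2|$; counting occurrences with the functions $f_j$ places this partner at position $f_{|j-2|}^{-1}\circ f_{j}(\cdot)$, as illustrated in Figure~\ref{fig:tileability}. The hypothesis that each letter occurs in $u$ guarantees these counting functions are defined on the ranges that arise, so this candidate matching is total; moreover K1 and K2 together leave at most this one matching.

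For the converse, assume the narrow metatile is tiled and fix a position $x$ and an even $j$ with $u_x<j<n$; I claim \eqref{equation:counting_functions_criterion} holds (it is trivial if $j$ does not occur before $x$, so assume it does). By the rigidity above, the chain leaving the letter $u_x$ at position $x$ terminates at position $f_{|u_x-2|}^{-1}\circ f_{u_x}(x+1)$, and the chain leaving the last occurrence of $j$ before $x$ terminates at $f_{|j-2|}^{-1}\circ f_{j}(x+1)$. Since $u_x<j$, in the cyclic order of boundary edges the start of the first chain comes after that of the second, so peripheral monotonicity K4 --- equivalently, the impossibility of a degenerate ``flat'' rhombus where these two chains cross --- forces $f_{|j-2|}^{-1}\circ f_{j}(x+1)<f_{|u_x-2|}^{-1}\circ f_{u_x}(x+1)$. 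Hence a triple $(x,u_x,j)$ violating \eqref{equation:counting_functions_criterion} rules out every Kenyon matching, and the narrow metatile is not tileable.

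For the forward direction, assume $u$ is $2$-almost-balanced, every letter occurs in $u$, and \eqref{equation:counting_functions_criterion} holds at all positions. For each metatile $\bigl(\tfrac{k\pi}{n},\tfrac{(n-k)\pi}{n}\bigr)$ with $1\leq k\leq\tfrac{n}{2}$, take the total matching identified in the first paragraph and verify the remaining Kenyon properties. K1 and K2 hold by construction. Property K3 --- that the two matched edges see each other along a path monotone in the orthogonal direction, so that no chain circles back --- follows from $2$-almost-balancedness: the transverse drift of a chain between consecutive boundary rhombi is governed by the differences $|v|_{j_1}-|v|_{j_2}$ over factor subwords $v$ of $u$, and the lower bound $-2$ is exactly what keeps the chain monotone. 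Property K4 is \eqref{equation:counting_functions_criterion} itself, read at each position $x$ and each even $j>u_x$: it says the two relevant chains terminate in the cyclic order that lets a genuine rhombus appear at their crossing. Since the narrow metatile is the most constrained and \eqref{equation:counting_functions_criterion} is tailored to it, the wider metatiles pass a fortiori. By Theorem~\ref{thm:kenyon} every metatile induced by $u$ is then tileable, so $u$ defines a well-defined substitution.

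I expect the main obstacle to be the boundary/corner analysis of the first paragraph together with the K3 and K4 verification: translating the geometry of chains that cross corners between the four sides of the metatile into bookkeeping carried out purely on the single word $u$ and its counting functions, and checking that $2$-almost-balancedness together with \eqref{equation:counting_functions_criterion} is precisely strong enough to exclude both a chain folding back on itself (K3) and a degenerate rhombus at a chain crossing (K4). Once this dictionary between the tiling geometry and the counting functions is established, the rest is routine verification.
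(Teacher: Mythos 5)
Your overall route is the same as the paper's: reduce tileability of the metatiles to the Kenyon criterion (Theorem \ref{thm:kenyon}) and translate boundary matchings into the counting functions; your converse argument (a violating triple $(x,u_x,j)$ forces two chains from the same corner to cross invalidly) is exactly the paper's necessity sketch of Figure \ref{fig:tileability}. Keep in mind, however, that the paper itself only sketches this proposition and defers the full proof to \cite{kari2021}, explicitly flagging the sufficiency direction as the hard part.

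It is precisely in that sufficiency direction that your sketch has a genuine gap. The bookkeeping you derive in your first paragraph --- that the boundary edge carried by a letter $j$ must be matched to the correspondingly-indexed occurrence of $|j-2|$, at position $f_{|j-2|}^{-1}\circ f_{j}(\cdot)$ --- is the matching pattern at the \emph{narrow} corner of angle $\tfrac{\pi}{n}$; it is not the matching structure of the other metatiles. At a corner of angle $\tfrac{k\pi}{n}$ the chains couple a letter on one side to letters of a type depending on $k$ on the adjacent or opposite side (this is visible in the paper's own corner-condition analysis, where for various $k$ the edge $0$ is matched to a rhombus of type $2k$ on the adjacent side, or to the opposite side altogether). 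So ``take the total matching identified in the first paragraph'' is not well defined for a general metatile, and the claim that the wider metatiles pass ``a fortiori'' because the narrow metatile is the most constrained is exactly the nontrivial content of the forward implication: condition (\ref{equation:counting_functions_criterion}) is phrased at the narrow corner, and one must actually construct a candidate matching for each metatile angle $1\leq k\leq \tfrac{n}{2}$ and verify K1--K4 for it. Similarly, ``K3 follows from $2$-almost-balancedness because the bound $-2$ is exactly what keeps the chain monotone'' is an assertion rather than an argument. As it stands your proposal establishes (modulo routine details) only the necessity statement; the sufficiency half would need the per-metatile matching construction and verification that the paper delegates to \cite{kari2021}.
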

The full proof can be found in \cite{kari2021}, or in \cite{lutfalla2021these} with more details.
\begin{proof}[Proof sketch.]
The proof of this result is based on the equivalence between a triplet $(x,u_x,j)$ that breaks this condition and an invalid Kenyon crossing. Which means that, in our specific case of metatiles defined by an edgeword, the Kenyon criterion \cite{kenyon1993} is equivalent to our criterion on counting functions.
See Figure \ref{fig:tileability} for a sketch of the proof that our criterion on counting functions is a necessary condition for tileability, the proof that it is also a sufficient condition is harder.
\end{proof}
Before returning to the proof of Proposition \ref{prop:prTileability} let us give one last lemma.

\begin{lemma}[Infinite cone]
\label{lemma:infinite_cone}
The infinite cone of angle $\tfrac{\pi}{n}$ and the billiard word $\bword$ as the edgeword along both sides is tileable with unit rhombus tiles.
\end{lemma}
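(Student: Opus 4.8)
\textbf{Setup.} Put the apex of the cone at the origin, let $L_1$ be the side whose edgeword read outward is $\bword$, and let $L_2$ be the second side, which is the image of $L_1$ under the rotation $\rho$ of angle $\tfrac{\pi}{n}$ about the origin; read outward, its edgeword is $\bword$ as well. Recall that a letter $0$ of $\bword$ contributes one unit boundary edge, while a letter $m=2\ell$ places a rhombus of angle $\tfrac{m\pi}{n}$ bisected by the local spine direction and contributes its two outer edges. The plan is to tile the cone by exhibiting a Kenyon matching (properties K1--K4 of Theorem~\ref{thm:kenyon}) on the boundary edges and realizing the associated chains of rhombuses; since the cone is unbounded, I would exhaust it by an increasing sequence of finite simply connected patches bounded by rhombus edges, tile each one by Theorem~\ref{thm:kenyon} using the restriction of the matching, and take the union.

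\textbf{The matching.} Since $L_2=\rho(L_1)$, the sequence of edge types along $L_2$ is the cyclic shift by one of the sequence of edge types along $L_1$; consequently there is a canonical bijection pairing, for each edge type $t$, the $a$-th occurrence of type $t$ read along $L_1$ with the $a$-th occurrence of type $t$ read along $L_2$, and the two paired edges automatically carry opposite orientations, so K1 holds. By the line-coding nature of $\bword$ (essentially Lemma~\ref{lemma:billiard_word_well_defined} together with the definition of $\bword$ as the billiard word of $\Gamma_{\half}=\spanning{\gamma}+\tuple{\thalf,\dots,\thalf}$), each counting function $f_j$ of $\bword$ stays within bounded distance of a linear function whose slope is the corresponding (normalized) coordinate of $\gamma$; hence the two paired edges occur at ``comparable heights'' along the two sides and the chain of type-$t$ rhombuses joining them runs monotonically across the wedge without closing up, giving K2 and K3.

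\textbf{Main obstacle.} The delicate part is K4, peripheral monotonicity at the crossing of a type-$j$ chain with a type-$j'$ chain: this reduces exactly to the counting inequality \eqref{equation:counting_functions_criterion} applied along the factors of $\bword$, i.e. to the statement that, reading $\bword$, the edge types ``turn in the correct order''. Here one uses that the frequencies of the letters in $\bword$ are strictly decreasing (because the coordinates of $\gamma$ satisfy $\gamma_0>\gamma_1>\dots>\gamma_{\nhalf-1}>0$) together with the $2$-almost-balancedness of $\bword$, which is again a line-coding property; these keep the counting defects bounded and of the right sign. This is precisely where the choice of $\gamma=\tuple{\cos\tfrac{i\pi}{n}}_{0\leq i<\nhalf}$ as the \emph{optimal} frequency vector is used: it is the frequency profile for which the total turning accumulated by the chains across the cone equals its opening angle $\tfrac{\pi}{n}$, so that the far ends of all chains land exactly on $L_2$ --- quantitatively this is the same trigonometric identity underlying Proposition~\ref{prop:eigenvalues_subrosa}. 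Once K1--K4 are verified on every finite exhausting patch, Theorem~\ref{thm:kenyon} tiles it, the tilings are coherent by construction, and their union tiles the whole cone. I expect the verification of K4, equivalently of \eqref{equation:counting_functions_criterion} along all factors of $\bword$, to be the main technical work, while the reduction to a finite Kenyon matching and the exhaustion argument are routine.
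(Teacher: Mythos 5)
Your proof plan has a genuine gap, and in fact its central object does not exist. You propose a perfect matching pairing the $a$-th occurrence of each edge type on $L_1$ with the $a$-th occurrence of the same type on $L_2$. Because $L_2$ is $L_1$ rotated by $\tfrac{\pi}{n}$, the geometric edge types on the two sides are shifted: the direction $\vv{n/2}$ occurs only on $L_2$ (on letter-$(n-2)$ rhombuses) and $\vv{n/2+1}$ only on $L_1$, so chains of these types must begin and end on the same side; and even for types present on both sides the linear densities differ (type-$\vv{0}$ edges come from the letter $0$ on $L_1$, with density proportional to $\cos 0$, but from letter-$2$ rhombuses on $L_2$, with density proportional to $\cos\tfrac{\pi}{n}$), so no bijection between occurrences is possible. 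Indeed, in any tiling of the cone some chains enter through one side and never leave (they correspond to multigrid lines parallel to a side of the cone, as used later in the primitivity proof), so the boundary edges are not perfectly matched at all. This also undermines the exhaustion step: Theorem \ref{thm:kenyon} is a statement about finite polygons, and the restriction of your matching to a finite patch leaves edges whose partners are far away or nonexistent, so it is not a Kenyon matching of any finite polygon; making this step work is not routine.

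The quantitative step is also not carried out. Asserting that decreasing frequencies plus $2$-almost-balancedness ``keep the counting defects bounded and of the right sign'' only yields inequality (\ref{equation:counting_functions_criterion}) for all sufficiently large positions — this is exactly what the frequency argument in the proof of Proposition \ref{prop:prTileability} delivers — and bounded defects do not exclude violations at small or intermediate positions, which is precisely what tileability of the cone requires. The paper's proof runs in the opposite direction: it exhibits the cone inside the de Bruijn multigrid dual tiling $\dualtiling{n}{\tfrac{1}{2}}$, observing that the positions of edges and rhombuses met by the horizontal half-line are exactly the intersections of $\Gamma_{\half}$ with the hyperplanes defining $\bword$, and that the $2n$-fold symmetry of $\dualtiling{n}{\tfrac{1}{2}}$ reproduces $\bword$ on the rotated half-line; the cone is therefore tiled because it sits inside an already existing rhombus tiling, and this fact is then used (not proved) to settle the small-position instances of (\ref{equation:counting_functions_criterion}). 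To salvage your approach you would have to prove (\ref{equation:counting_functions_criterion}) at every position of $\bword$, and determine the correct chain structure including the unmatched infinite chains, which is exactly the work your plan defers.
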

\begin{proof}
The proof of this lemma is based on multigrid dual tilings \cite{debruijn1981}. For a full definition of De Bruijn multigrids and their dual tilings and for the proof of the fact that for any $n$ the multigrid dual tiling $\dualtiling{n}{\tfrac{1}{2}}$ is a rhombus tiling we refer the reader to \cite{lutfalla2021multigrid}. For the rest of this proof we assume that the reader is familiar with multigrids and we use the fact that the multigrid $G_n(\tfrac{1}{2})$ is regular which means that its dual tiling $\dualtiling{n}{\tfrac{1}{2}}$ is a rhombus tiling.

In the multigrid dual tiling $\dualtiling{n}{\tfrac{1}{2}}$ the billiard word $\bword$ of rhombuses and edges appears on the horizontal half-line $\mathbb{R}^+$. Indeed, in the multigrid $\multigrid{n}{\tfrac{1}{2}}$ the horizontal half line $\mathbb{R}^+$ crosses either vertical grid lines (which correspond in the dual tiling to edges on the horizontal half line) or crossing points of two grid lines $\grid{\zeta^i}{\tfrac{1}{2}}\cap\grid{\zeta^{n-i}}{\tfrac{1}{2}}$ with $\zeta = e^{\imag\frac{\pi}{n}}$, which corresponds to a rhombus of angle $\tfrac{2i\pi}{n}$. Edges appears in positions $I_0$ with
\[I_0:= \{ k-\tfrac{1}{2},\ k> 1\},\]
and rhombuses of angle $\tfrac{2i\pi}{n}$ appear in position $I_i$ with
\[I_i = \left\{ \frac{k-\tfrac{1}{2}}{\cos(\frac{i\pi}{n})},\ k> 1 \right\}.\]
These positions are exactly the coordinates of the intersection of the line $\Gamma_{\frac{1}{2}}$ with the hyperplanes $H(i,k)$ (recall the Definition of $\bword$ as the billiard word of line $\Gamma_{\frac{1}{2}}$). So indeed $\bword$ appears on the horizontal half-line in $\dualtiling{n}{\tfrac{1}{2}}$.

By global $2n$-fold rotational symmetry of $\dualtiling{n}{\tfrac{1}{2}}$ we get that $\bword$ also appears on the rotated half line $\zeta \mathbb{R}$. So the infinite cone of angle $\tfrac{\pi}{n}$ and the edge word $\bword$ on both sides indeed appears in the tiling $\dualtiling{n}{\tfrac{1}{2}}$ which means that this infinite cone is tileable with unit rhombus tiles.
\end{proof}

Let us now use Proposition \ref{prop:tileability_functions} to prove the tileability of the metatiles induced by candidate edgewords, without considering the corner condition.
\begin{proof}[Proof of Proposition \ref{prop:prTileability} part 1: tileability]
\label{proof:prTileability_1}
We prove that for sufficiently large $i$, the candidate edgewords $\prWord{i}$ satisfy the conditions of Proposition \ref{prop:tileability_functions} so that the metatiles of edgeword $\prWord{i}$ are tileable and the candidate substitution $\prSubs{i,n}$ exists.

First we prove that the candidate edgewords are 2-almost-balanced.
Remark that $\bword$ is 1-almost-balanced. Indeed, take any $j_1<j_2$. Since almost-balancedness considers only one factor subword at a time we can first project $\bword$ on the alphabet $\{j_1,j_2\}$ just by deleting all the other letters. This projection is a billiard word on two letters, \ie a Sturmian word. This means that it is balanced and hence
also 1-almost-balanced.
Now recall that $\prWord{i} = pref_i(\bword)\overline{pref_i(\bword)}$, so any factor subword $v$ of $\prWord{i}$ is of the form $v=v_0\cdot v_1$, with $v_0$ a factor of $\bword$ and $v_1$ a factor of $\overline{\bword}$. Since $\bword$ is 1-almost-balanced then $\prWord{i}$ is 2-almost-balanced, for every $i$.

We now prove the condition on the counting functions.
We actually prove two statements:
\begin{enumerate}
\item For any $u=\prWord{i}$, and for any position $x\leq i$, the condition (\ref{equation:counting_functions_criterion}) is satisfied.
\item There exists $K$ such that for any $u=\prWord{i}$ and for any position $x> K$ the condition (\ref{equation:counting_functions_criterion}) is satisfied.
\end{enumerate}
Once we prove these two statements then we obtain that for any $i>K$ the metatiles induced by $\prWord{i}$ are tileable. Indeed, for any $x\leq i$ the condition is satisfied, and for any $x> i$ we also have $x> K$ so the condition is also satisfied.

Let us now prove these two statements.
The first statement is a direct corollary to Lemma \ref{lemma:infinite_cone}.
Remark that, when considering the Kenyon matching of the infinite cone of angle $\tfrac{\pi}{n}$, an edge $0$ is matched to a rhombus of type $2$, a rhombus of type $2$ is matched to an edge $0$ and a rhombus of type $2k$ with $k>1$ is matched to a rhombus of type $2(k-1)$.
Recall that, by definition of $\bword$, for any prefix $pref_i(\bword)$, and any $0\leq k < \tfrac{n}{2}$ we have $|pref_i(\bword)|_{2k} \geq |pref_i(\bword)|_{2(k+1)}$.
This implies that any rhombus at position $x$ in $\bword$ is matched to a rhombus or edge at position $x'\leq x$, and only an edge at position $x$ can be matched to a position $x'>x$.
With $\prWord{i}=pref_i(\bword)\overline{pref_i(\bword)}$, this means that the validity of Kenyon matching crossings or the validity of the counting condition in $\prWord{i}$ and in $\bword$ are identical for all positions $x\leq i$.
Overall, if there exists a $i\in \mathbb{N}$ and a $x\leq i$ such that the counting condition is broken with edgeword $\prWord{i}$ at position $x$ then the counting condition is also broken with the infinite edgeword $\bword$ at position $x$ which is impossible because the infinite cone of edgeword $\bword$ is tileable.

Let us now prove the second statement, \ie prove that there exists $K\in \mathbb{N}$ such that, for any integer $i$ and for any position $x> K$, the condition (\ref{equation:counting_functions_criterion}) is satisfied.
For this, we define $g_{j_1,j_2}(x)$ as
\[ g_{j_1,j_2}(x):= f_{|j_1-2|}^{-1}\circ f_{j_1}(x) - f_{|j_2-2|}^{-1}\circ f_{j_2}(x).\]
The idea is that for $j_1<j_2$, $g_{j_1,j_2}(x)$ has a general increasing trend and (if $\prWord{i}$ is long enough) there is $K_{j_1,j_2}$ such that for any $x> K_{j_1,j_2}$ holds $g_{j_1,j_2}(x)>0$.
Let us recall that with $\gamma = (\cos(\tfrac{i\pi}{n}))_{0\leq i < \frac{n}{2}}$ we have $f_{2i}(x)\approx x\cdot \tfrac{\gamma_i}{\| \gamma \|_1}$ by the construction of the word \bword.
For simplicity, we define $\tgamma{2i}:= \tfrac{\gamma_i}{\|\gamma\|_1}$, so now $f_i(x)\approx x\cdot \tgamma{i}$. This approximation is actually quite good in the sense that there exists $\delta$ dependent only on $n$ such that for each $i$ and $x$ we have $|f_i(x) - x\tgamma{i}| < \delta$. This means that
\[ \left|f_{|j_1-2|}^{-1}\circ f_{j_1}(x) - x\frac{\tgamma{j_1}}{\tgamma{|j_1-2|}} \right| < \delta + \frac{\delta}{\tgamma{|j_1-2|}}, \]
so that
\[ \left|g_{j_1,j_2}(x) - x\left( \frac{\tgamma{j_1}}{\tgamma{|j_1-2|}} - \frac{\tgamma{j_2}}{\tgamma{|j_2-2|}}\right) \right| < 2\delta + \frac{\delta}{\tgamma{|j_1-2|}} + \frac{\delta}{\tgamma{|j_2-2|}}.\]

An easy calculation shows that the sequence $\tfrac{\tgamma{0}}{\tgamma{2}}, \tfrac{\tgamma{2}}{\tgamma{0}}, \tfrac{\tgamma{4}}{\tgamma{2}}, \tfrac{\tgamma{6}}{\tgamma{4}},\dots$ is strictly decreasing, indeed $\tgamma{2i} = \cos(\tfrac{i\pi}{n}) / \|\gamma\|_1$ and the function $x\mapsto \cos(x+\tfrac{\pi}{n})/\cos(x)$ is decreasing on $[-\tfrac{\pi}{n}, \tfrac{\pi}{2}-\tfrac{\pi}{n}]$,  so that $\tfrac{\tgamma{j_1}}{\tgamma{|j_1-2|}} - \tfrac{\tgamma{j_2}}{\tgamma{|j_2-2|}}$ is positive when $j_1<j_2$.
Thus, for
\[ x > \frac{2\delta + \frac{\delta}{\tgamma{|j_1-2|}} + \frac{\delta}{\tgamma{|j_2-2|}}}{\frac{\tgamma{j_1}}{\tgamma{|j_1-2|}} - \frac{\tgamma{j_2}}{\tgamma{|j_2-2|}}}, \]
we have $g_{j_1,j_2}(x)>0$. Now take
\[ K:= \max\limits_{0\leq j_1 < j_2 < n} \frac{2\delta + \frac{\delta}{\tgamma{|j_1-2|}} + \frac{\delta}{\tgamma{|j_2-2|}}}{\frac{\tgamma{j_1}}{\tgamma{|j_1-2|}} - \frac{\tgamma{j_2}}{\tgamma{|j_2-2|}}}. \]
For any $x>K$ and for any $j_1< j_2$, we have $g_{j_1,j_2}(x) > 0$ which means that the inequality (\ref{equation:counting_functions_criterion}) holds.
So $\prWord{i}$ induces a well-defined substitution $\prSubs{i,n}$ for all sufficiently large $i$.
\end{proof}

Now that we proved that the metatiles induced by the candidate edgeword $\prWord{i}$ are tileable for all sufficiently large integer $i$ we prove that the corner condition can also be satisfied for all sufficiently large integer $i$.

\begin{proof}[Proof of Proposition \ref{prop:prTileability} part 2: the corner condition]
\label{proof:prTileability_2}
We consider an edgeword $\prWord{i}$ that induces tileable metatiles.
The question is now, can the metatiles be tiled in a way that satisfies the corner condition, \ie with a $\tfrac{\pi}{n}$ rhombus on both sides of all four corners of the metatile?
Let us take $1\leq k < n$ and let us look at the corner of angle $\tfrac{k\pi}{n}$ in the metatile of angles $\tfrac{k\pi}{n}$ and $\tfrac{(n-k)\pi}{n}$.

\begin{figure}[t]
\center
\includegraphics[width=0.5\textwidth]{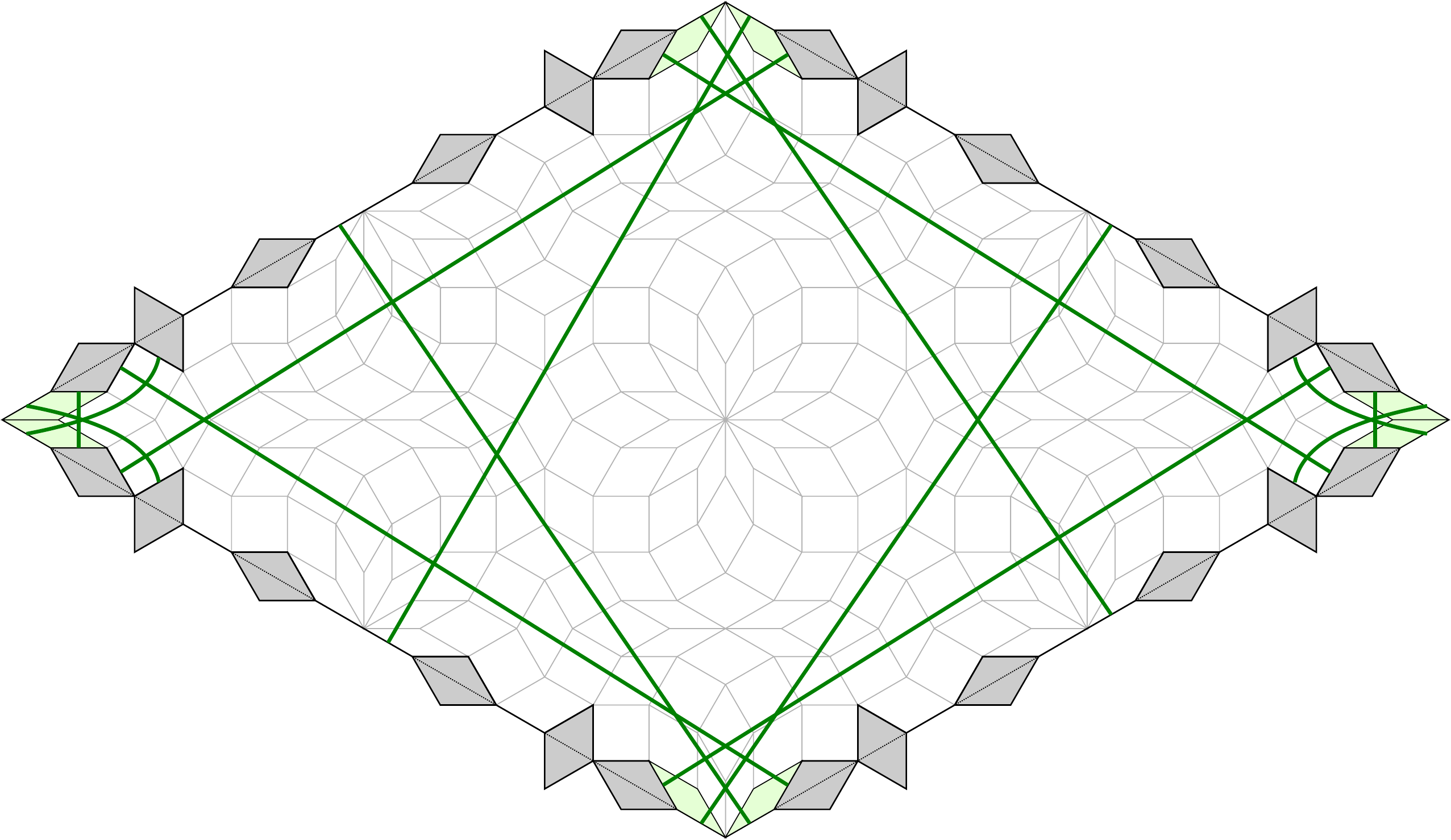}
\caption{For $n=6$ and for the candidate edgeword $\prWord{6}$, the metatile of angles $(\tfrac{\pi}{3}, \tfrac{2\pi}{3})$ can be tiled in such a way that satisfies the corner condition because, in every corner, the Kenyon matchings of the first edge and first rhombus cross.}
\label{fig:corner_crossing}
\end{figure}

Recall that the edgeword $\prWord{i}$ admits $02$ as a prefix, \ie it starts with an unit edge and by a $\tfrac{2\pi}{n}$ rhombus.
So the metatile can be tiled with a $\tfrac{\pi}{n}$ rhombus in the corner if and only if the chains of rhombuses starting from the first edge (letter 0 in the word) and the first rhombus (letter 2 in the word) cross, \ie if the Kenyon matchings of the first edge and the first rhombus cross, see Figure \ref{fig:corner_crossing}.
Let us also assume that $i$ is such that there it at least 1 occurrence of each letter in $pref_i(\bword)$.

There are four cases. In the three first cases, we always have crossing Kenyon matchings as long as there is at least 1 occurrence of each letter in $pref_i(\bword)$. In the fourth case, it is only true for sufficiently large $i$:
\begin{figure}[b]
\includegraphics[width=\textwidth]{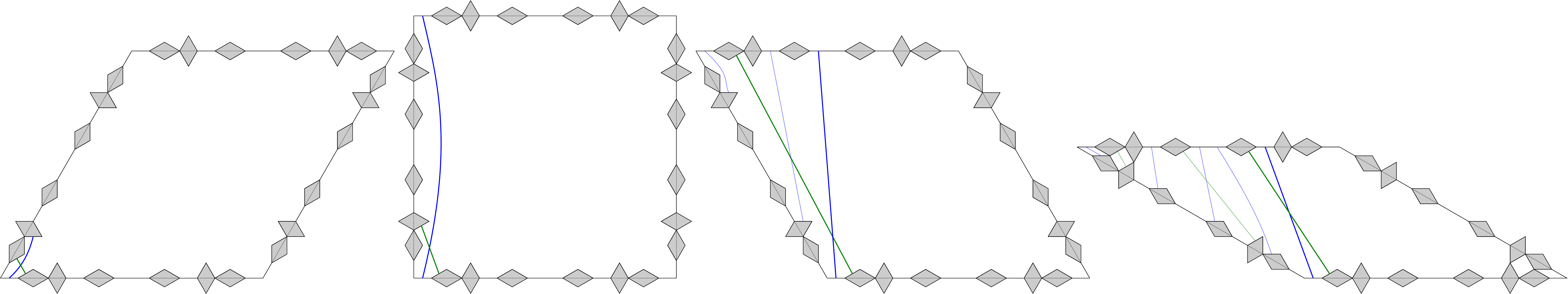}
\caption{For $n=6$ and the candidate edgeword $\prWord{6}$, in the corner of angle $\tfrac{k\pi}{6}$ for $k\in\{2,3,4,5\}$ the Kenyon matching of the first edge and the first rhombus cross.}
\label{fig:corner_condition_cases}
\end{figure}
\begin{enumerate}
\item $1\leq k < \tfrac{n}{2}$, see for example the corner of angle $\tfrac{2\pi}{6}$ in Figure \ref{fig:corner_condition_cases}. In this case, both the edge 0 and the rhombus 2 match to the adjacent side of the metatile. The two chains cross because the first edge 0 is matched to the first rhombus of type $2k$ on the adjacent edge, and the first rhombus $2$ is matched to the first rhombus of type $2k-2$. With $2k > 2k-2$, we know that the first rhombus of type $2k-2$ on the adjacent edge is before the first rhombus of type $2k$ in the edgeword. Which means that the Kenyon matchings of the first edge 0 and first rhombus 2 cross.
\item $k = \tfrac{n}{2}$, see for example the corner of angle $\tfrac{3\pi}{6}$ in Figure \ref{fig:corner_condition_cases}. In this  case, the edge 0 matches to the opposite side of the metatile and the rhombus 2 matches to the adjacent side of the metatile. So the Kenyon matchings cross.
\item $k = \tfrac{n}{2}+1$, see for example the corner of angle $\tfrac{4\pi}{6}$ in Figure \ref{fig:corner_condition_cases}. In this case, both the edge 0 and the rhombus 2 match to the opposite side of the metatile. But there are edges parallel to 0 on the adjacent side of the metatile: the rhombuses of type $2n-2$ on the adjacent side of the metatile have edges parallel to 0. However there is no edge parallel to the edge of rhombus 2 on the adjacent side of the metatile, because the edge of rhombus two is orthogonal to the adjacent side of the metatile. This means that the rhombus 2 matches to the first rhombus 2 on the opposite side of the metatile, but the edge 0 does not match to the first edge 0 of the opposite side of the metatile. Since the first rhombus 2 on the opposite side of the metatile is before the second 0 the Kenyon matchings cross.
\item $\tfrac{n}{2}+1 < k < n$, see for example the corner of angle $\tfrac{5\pi}{6}$ in Figure \ref{fig:corner_condition_cases}. In this case, both the edge 0 and the rhombus 2 match to the opposite side of the metatile and they both have parallel edges on the adjacent side of the metatile. On the adjacent side the edge direction 0 appears on rhombuses $2k'$ with $k':= n-k$, and the edge direction of rhombus 2 appears on rhombuses of type $2(k'+1)$. Let us now look at the Kenyon matchings of the first edge 0 and the first rhombus 2. We now use the counting functions of edgeword $\prWord{i}$, on the adjacent side there are $f_{2k'}(2i)$ rhombuses of type $2k'$ and $f_{2k'+2}(2i)$  rhombuses of type $2k'+2$ because the length of $\prWord{i}$ is $2i$. So the first edge 0 matches to $f_0^{-1}(f_{2k'}(2i)+1)$ and the first rhombus 2 matches to $f_2^{-1}(f_{2k'+2}(2i) + 1)$. For the two chains to cross we need
\[f_0^{-1}(f_{2k'}(2i)+1) > f_2^{-1}(f_{2k'+2}(2i) + 1).\]

Recall that the word $\prWord{i}$ approximates the billiard word $\bword$ and the counting functions $f_i(x)$ approximate $x\cdot \tgamma{i}$. More precisely there exists a bound $\delta>0$, only dependent on $k$ and $n$, such that
\begin{align*}
|f_0^{-1}(f_{2k'}(2i)+1)  - 2i \frac{\tgamma{2k'}}{\tgamma{0}} | &< \delta \\
|f_2^{-1}(f_{2k'+2}(2i)+1)  - 2i \frac{\tgamma{2k'+2}}{\tgamma{2}} | &< \delta
\end{align*}
Moreover, with $0<k' < \tfrac{n}{2}-1$, we have
\[ \frac{\tgamma{2k'}}{\tgamma{0}} - \frac{\tgamma{2k'+2}}{\tgamma{2}} = \cos\left(\frac{k'\pi}{n}\right) - \frac{\cos\left(\frac{(k'+1)\pi}{n}\right)}{\cos\left(\frac{\pi}{n}\right)} > 0\]
So for all sufficiently integer large $i$, we have $f_0^{-1}(f_{2k'}(2i)+1) > f_2^{-1}(f_{2k'+2}(2i) + 1)$, \ie the Kenyon matching of the first edge 0 and of the first rhombus 2 cross.
\end{enumerate}
At fixed $n$, there are only finitely many $k\in \{1,\dots, n-1\}$, so we can intersect the "for all sufficiently large integer $i$".
Overall, for all sufficiently large integer $i$ the edgeword $\prWord{i}$ induces a well-defined substitution that satisfies the corner condition.
\end{proof}

\subsection{Planar Rosa expansions and planarity}
\label{subsec:prPlanarity}
In this section we present the planarity of candidate Planar Rosa substitutions, \ie the fact that the candidate Planar Rosa substitution generate discrete plane tilings. The first step towards planarity is the lifting of the tilings and substitutions.

As we consider substitutions $\prSubs{}$ that are defined by an edgeword $u$, we can consider the induced expansion $\prExp{}{}$, we can lift the tilings, substitutions and expansions as described in Sections \ref{subsec:lifted_substitutions} and \ref{subsec:srLifting} and we can apply the propositions therein to obtain the following results:
\begin{enumerate}
\item the expansion matrix $\prMat{}$ is pseudo-circulant (Proposition \ref{prop:expansion_subrosa_even}),
\item it can be decomposed in a linear combination of the elementary matrices of Definition \ref{def:elementarymatrices} (Proposition \ref{prop:expansiondecomposition}),
\item with $\lambda = \tuple{\lambda_i}_{0\leq i < \nhalf}$ the vector of eigenvalues of $\prMat{}$ we have
\[ \eigenMatrix{n}\cdot [u]\trans = \lambda\trans,\]
with $\eigenMatrix{n}$ the matrix of Definition \ref{def:eigenvalue_matrix} %at page \pageref{def:eigenvalue_matrix}  
(Proposition \ref{prop:eigenvalue}),
\item if $|\lambda_0| > 1$ and $|\lambda_i| < 1$ for all $1\leq i < \tfrac{n}{2}$, then the expansion $\prExp{}{}$ and the substitution $\prSubs{}$  are planar of slope $\slope_n^0$  (Proposition \ref{prop:planarity_non-planarity}).
\end{enumerate}

\begin{proposition}[Planarity]
\label{prop:prPlanarity}
There exist infinitely many $i\in\mathbb{N}$ such that the candidate edgeword $\prWord{i}$ defines a planar expansion of slope $\slope_n^0$, \ie assuming that the candidate substitution $\prSubs{i,n}$ exists, it generates discrete plane tilings of slope $\slope_n^0$.
\end{proposition}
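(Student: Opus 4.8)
The plan is to combine the eigenvalue formula $\lambda\trans = \eigenMatrix{n}\cdot\abel{u}\trans$ of Proposition~\ref{prop:eigenvalue} with the last item of the list of consequences stated just before this proposition: it is enough to produce infinitely many $i$ for which the abelianized candidate edgeword $\abel{\prWord{i}} = 2\,\abel{pref_i(\bword)}$ yields an eigenvalue vector $\lambda$ with $|\lambda_0|>1$ and $|\lambda_k|<1$ for every $1\leq k<\nhalf$. The structural fact that drives everything — and that explains the name \emph{optimal frequency vector} — is the identity
\[ \eigenMatrix{n}\cdot\gamma\trans = \tuple{\nhalf, 0,\dots,0}\trans. \]
Thus an edgeword whose letter frequencies were exactly proportional to $\gamma$ would give an expansion that expands only along $\slope_n^0$ (with eigenvalue proportional to the edgeword length) and is exactly zero along every other $\slope_n^k$; the actual edgeword $\prWord{i}$ only approximates this, and the rounding error is what must be controlled.

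\emph{First} I would establish the identity. Its $k$-th coordinate equals $\sum_{0\leq j<\nhalf}\eta_j\cos\tfrac{(2k+1)j\pi}{n}\cos\tfrac{j\pi}{n} = \thalf\,(T_{k+1}+T_k)$ after expanding the product of cosines, where $T_m := 1 + 2\sum_{1\leq j<\nhalf}\cos\tfrac{2mj\pi}{n}$. The Dirichlet-kernel evaluation $1 + 2\sum_{j=1}^{\nhalf-1}\cos(j\theta) = \sin\bigl((\nhalf-\thalf)\theta\bigr)/\sin\tfrac{\theta}{2}$ taken at $\theta=\tfrac{2m\pi}{n}$ gives $T_0 = n-1$ and $T_m = (-1)^{m+1}$ for $1\leq m\leq\nhalf$, whence $\thalf(T_{k+1}+T_k)$ is $\nhalf$ when $k=0$ and $0$ when $1\leq k<\nhalf$.

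\emph{Next} I would relate prefixes of $\bword$ to integer multiples of $\gamma$. Fix a positive integer $N$ with $\mathrm{dist}(N\gamma_j,\mathbb{Z})\neq\thalf$ for every $0\leq j<\nhalf$, and let $\rho_j\in(-\thalf,\thalf)$ be the signed distance from $N\gamma_j$ to the nearest integer (so $\rho_0=0$, using $\gamma_0=1$). Then $N\gamma+\tuple{\thalf,\dots,\thalf}$ lies on no hyperplane $\hyperplane{i}{k}$, so for $i_N := \sum_{0\leq j<\nhalf}\lfloor N\gamma_j+\thalf\rfloor$ one has $\abel{pref_{i_N}(\bword)}_j = \lfloor N\gamma_j+\thalf\rfloor = N\gamma_j+\rho_j$. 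Writing $\rho=(\rho_j)_{0\leq j<\nhalf}$ and using the identity,
\[ \lambda\trans = \eigenMatrix{n}\cdot\abel{\prWord{i_N}}\trans = 2N\,\eigenMatrix{n}\gamma\trans + 2\,\eigenMatrix{n}\rho\trans = \tuple{nN,0,\dots,0}\trans + 2\,\eigenMatrix{n}\rho\trans, \]
so $\lambda_0 \geq nN-(n-1)$, while $|\lambda_k|\leq 2(n-1)\max_j|\rho_j|$ for $1\leq k<\nhalf$ since every entry of $\eigenMatrix{n}$ has modulus at most $2$ and there are $\nhalf$ columns. \emph{Finally} I would invoke Dirichlet's theorem on simultaneous Diophantine approximation: $0$ is an accumulation point of $\bigl\{(N\gamma_1,\dots,N\gamma_{\nhalf-1})\bmod 1 : N\in\N\bigr\}$ in the torus, so there are arbitrarily large integers $N$ with $\max_{1\leq j<\nhalf}\mathrm{dist}(N\gamma_j,\mathbb{Z}) < \tfrac{1}{2(n-1)}$. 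For each such $N$ we get $|\lambda_k|<1$ for all $1\leq k<\nhalf$ and $\lambda_0 \geq nN-(n-1)>1$; since $i_N\geq N$, these $N$ produce infinitely many distinct values of $i$, and by the last item of the list preceding this proposition the candidate substitution $\prSubs{i,n}$, whenever it exists, is planar of slope $\slope_n^0$.

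The main obstacle is conceptual rather than computational: planarity can hold only for \emph{infinitely many} $i$, not for all large $i$, because the rounding errors $\rho_j$ do not decay with $i$ — they merely oscillate in $(-\thalf,\thalf)$ — so the eigenvalues $\lambda_k$ with $k\geq 1$ oscillate within a band of width $O(n)$ and fall inside $(-1,1)$ only occasionally; pinning down the $i$'s at which all of them do so simultaneously is exactly what simultaneous Diophantine approximation delivers. In particular no irrationality of $\gamma$ is required, which is essential since for instance at $n=6$ the optimal frequency vector $\gamma=\bigl(1,\tfrac{\sqrt3}{2},\tfrac12\bigr)$ is not totally irrational.
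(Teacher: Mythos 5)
Your proposal is correct, and it reaches the conclusion by a genuinely different route than the paper, although both arguments pivot on the same identity $\eigenMatrix{n}\cdot\gamma\trans=\tuple{\nhalf,0,\dots,0}\trans$ and on the planarity criterion of Proposition~\ref{prop:planarity_non-planarity}. The paper splits the work into two lemmas: Lemma~\ref{lemma:gammaopt} converts ``$\abel{u}$ is $\epsilon$-close to the line $\spanning{\gamma}$ and $u$ is long enough'' into the eigenvalue bounds by uniform continuity of $x\mapsto \eigenMatrix{n}x\trans$ (with the identity proved via DCT orthogonality in Lemma~\ref{lemma:eigenMatrix_orthogonal}), and Lemma~\ref{lemma:approximates} shows that $\abel{\prWord{i}}=2p_i$ returns infinitely often near $\spanning{\gamma}$ by applying Poincar\'e recurrence to the projected billiard sequence (Theorem~\ref{th:folk}). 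You instead prove the identity directly with a Dirichlet-kernel computation, parametrize prefixes at integer times $N$ along $\Gamma_\half$ so that $\abel{pref_{i_N}(\bword)}_j=\lfloor N\gamma_j+\thalf\rfloor=N\gamma_j+\rho_j$ (legitimate, since $\gamma_j>0$, $\gamma_0=1$, and the no-simultaneous-crossing Lemma~\ref{lemma:billiard_word_well_defined} makes the crossing count exact once $\max_j\mathrm{dist}(N\gamma_j,\mathbb{Z})<\thalf$), and then control the error term $2\eigenMatrix{n}\rho\trans$ by simultaneous Diophantine approximation. What your approach buys is explicitness: concrete prefix lengths $i_N$, quantitative eigenvalue bounds, and an elementary pigeonhole input in place of measure-theoretic recurrence; what the paper's buys is modularity (the $\epsilon$--$l$ lemma is reused verbatim and no bookkeeping of rounding errors is needed). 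Two harmless slips you may want to polish: with $\rho_0=0$ the error bound is really $|\lambda_k|\leq 2(n-2)\max_j|\rho_j|$ (your stated $2(n-1)$ is a valid but loosely justified weakening), and $nN-(n-1)>1$ fails at $N=1$, so say $N\geq 2$ (which costs nothing since Dirichlet supplies arbitrarily large $N$). Like the paper, you apply Proposition~\ref{prop:planarity_non-planarity} without addressing primitivity here; that is consistent with how the paper sequences the argument, since primitivity is established separately in Section~\ref{subsec:prPrimitivity}.
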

Before proving this proposition, let us first present two intermediate results on the optimal frequency vector $\gamma$ and the candidate edgewords $\prWord{i}$.

\begin{lemma}[$\gamma$ is indeed the optimal frequency vector]
\label{lemma:gammaopt}
There exists $\epsilon>0$ and $l\in\mathbb{N}$ such that,
for any substitution $\sigma$ defined by an edgeword $u$ of length at least $l$,
we have
\[ d([u],\spanning{\gamma})<\epsilon \Rightarrow \sigma \text{ is planar of slope } \slope_n^0.\]
\end{lemma}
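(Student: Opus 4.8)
The plan is to reduce the lemma to the eigenvalue criterion for planarity and then to a single linear-algebra identity about $\eigenMatrix{n}$. By Proposition~\ref{prop:eigenvalue}, the eigenvalues $\lambda=(\lambda_i)_{0\leq i<\nhalf}$ of the expansion matrix $\expansionMatrix{\phi}$ of $\sigma$ (on the eigenspaces $\slope_n^i$) satisfy $\lambda\trans=\eigenMatrix{n}\cdot\abel{u}\trans$, and by the first item of Proposition~\ref{prop:planarity_non-planarity}, once $\lambda_0>1$ and $|\lambda_i|<1$ for all $1\leq i<\nhalf$ the substitution $\sigma$ is planar of slope $\slope_n^0$. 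So it suffices to control $\eigenMatrix{n}\cdot\abel{u}\trans$ when $\abel{u}$ is within $\epsilon$ of the line $\spanning{\gamma}$.

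The heart of the argument is the identity
\[ \eigenMatrix{n}\cdot\gamma\trans=\tuple{\tfrac{n}{2},\,0,\,0,\,\dots,\,0}\trans, \]
\ie $\gamma$ spans the preimage of the first basis direction under $\eigenMatrix{n}$. To prove it I would expand the $i$-th coordinate $\sum_{0\leq j<\nhalf}\eta_j\cos\tfrac{(2i+1)j\pi}{n}\cos\tfrac{j\pi}{n}$, use $2\cos A\cos B=\cos(A-B)+\cos(A+B)$ to rewrite it as $\tfrac12\big(T(i)+T(i+1)\big)$ where $T(m):=\sum_{0\leq j<\nhalf}\eta_j\cos\tfrac{2mj\pi}{n}$, and then evaluate $T(m)$ by a roots-of-unity / telescoping computation of the same kind as in the proof of Proposition~\ref{prop:eigenvalues_subrosa} (pairing $j$ with $\nhalf-j$ and summing a geometric series): this gives $T(0)=n-1$, $T(m)=1$ for odd $m$, and $T(m)=-1$ for even $m$ with $0<m<n$. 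The $0$-th coordinate is then $\tfrac12\big((n-1)+1\big)=\nhalf$, and for each $1\leq i<\nhalf$ the two indices $i,i+1$ lie strictly between $0$ and $n$ and have opposite parities, so $T(i)+T(i+1)=0$.

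It then remains a routine perturbation argument. Fix $C=C(n)$ with $\|\eigenMatrix{n}\cdot v\trans\|\leq C\|v\|$ for all $v\in\R{\nhalf}$, set $\epsilon:=\tfrac1{2C}$, and let $l$ be chosen large afterwards. Given an edgeword $u$ of length $|u|=\|\abel{u}\|_1\geq l$ with $d(\abel{u},\spanning{\gamma})<\epsilon$, write the orthogonal decomposition $\abel{u}=t\gamma+r$ with $r\perp\gamma$, $\|r\|<\epsilon$ and $t=\abel{u}\cdot\gamma/\|\gamma\|^2>0$ (strictly positive because $\abel{u}$ is a nonzero non-negative integer vector and every entry of $\gamma$ is positive). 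Then $\lambda\trans=\eigenMatrix{n}\cdot\abel{u}\trans=t\cdot\tuple{\nhalf,0,\dots,0}\trans+\eigenMatrix{n}\cdot r\trans$, hence $|\lambda_i|<C\epsilon=\thalf<1$ for $1\leq i<\nhalf$, and $\lambda_0>t\nhalf-C\epsilon=t\nhalf-\thalf$. Since $\|\abel{u}\|\geq t\|\gamma\|$ and $\|\abel{u}\|\geq\|\abel{u}\|_1/\sqrt{\nhalf}$, we get $t\geq|u|/(\sqrt{\nhalf}\,\|\gamma\|)\geq l/(\sqrt{\nhalf}\,\|\gamma\|)$, so choosing $l$ with $\nhalf\cdot l/(\sqrt{\nhalf}\,\|\gamma\|)>\tfrac32$ forces $\lambda_0>1$. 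Thus $\expansionMatrix{\phi}$ is expanding along $\slope_n^0$ and strictly contracting along its orthogonal complement $\bigoplus_{1\leq i<\nhalf}\slope_n^i$, and the first item of Proposition~\ref{prop:planarity_non-planarity} gives that $\sigma$ is planar of slope $\slope_n^0$.

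The only step with real content is the identity $\eigenMatrix{n}\cdot\gamma\trans=(\nhalf,0,\dots,0)\trans$; everything else is bookkeeping. The one point to be careful about is the quantifier order: $\epsilon$ must be fixed first, small enough that the perturbed eigenvalues $\lambda_i$ for $i\geq1$ stay below $1$, and only then $l$ is taken large — depending on $\epsilon$ and $n$ — so that the already-fixed perturbation cannot drag $\lambda_0$ down to $1$.
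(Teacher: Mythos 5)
Your proposal follows essentially the same route as the paper: reduce to the eigenvalue criterion (Propositions~\ref{prop:eigenvalue} and \ref{prop:planarity_non-planarity}), establish the identity $\eigenMatrix{n}\cdot\gamma\trans=(\nhalf,0,\dots,0)\trans$, and finish with a perturbation argument in which $\epsilon$ is fixed first and $l$ afterwards. The one real difference is how the identity is proved: the paper delegates it to Lemma~\ref{lemma:eigenMatrix_orthogonal} in the appendix, deducing it from the orthogonality of a DCT-type matrix, whereas you prove it directly by product-to-sum and geometric sums of cosines; your claimed values $T(0)=n-1$, $T(m)=1$ for odd $m$ and $T(m)=-1$ for even $m$ with $0<m<n$ are correct, and the pairing of consecutive indices $i,i+1$ of opposite parity does give the zero coordinates, so this is a perfectly good (and more self-contained) substitute for the appendix lemma. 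One bookkeeping step is stated backwards: since $r\perp\gamma$, Pythagoras gives $\|\abel{u}\|^2=t^2\|\gamma\|^2+\|r\|^2$, so $\|\abel{u}\|\geq t\|\gamma\|$ yields an \emph{upper} bound $t\leq\|\abel{u}\|/\|\gamma\|$, not the lower bound $t\geq|u|/(\sqrt{\nhalf}\,\|\gamma\|)$ you assert. The fix is immediate: $t\|\gamma\|\geq\|\abel{u}\|-\|r\|>\|\abel{u}\|-\epsilon\geq |u|/\sqrt{\nhalf}-\epsilon$, so $t\to\infty$ as $l\to\infty$ with $\epsilon$ fixed, and choosing $l$ slightly larger than you wrote still forces $\lambda_0>1$ while $|\lambda_i|<C\epsilon<1$ for $1\leq i<\nhalf$; with that correction the argument is complete and matches the paper's conclusion that $\sigma$ is planar of slope $\slope_n^0$.
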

\begin{proof}[Proof of Lemma \ref{lemma:gammaopt}]
Let us first remark that \[ \eigenMatrix{n}\cdot \gamma\trans = \tuple{\nhalf, 0, 0,\dots 0}\trans .\]
This is due to the fact that $\eigenMatrix{n}$ is almost a Discrete Cosine Transform matrix and is almost orthogonal, see Lemma \ref{lemma:eigenMatrix_orthogonal} in the Appendices for more details.

By uniform continuity of the matrix-vector product, there exist $\epsilon>0$ such that for any two vector $x$ and $y$, $\|x-y\|<\epsilon \Rightarrow \|\eigenMatrix{n}\cdot x\trans - \eigenMatrix{n}\cdot y\trans\| < 1$.

Due to the facts that $\gamma$ has all positive coefficients and that $|u|$ is the 1-norm (sum of the absolute value of the coefficients) of $[u]$, there exists a length $l\in\mathbb{N}$ such that for any word $u$ of length at least $l$, if $d([u],\spanning{\gamma})<\epsilon$ then there exists $t> \tfrac{4}{n}$ such that $d([u],t\gamma)<\epsilon$.

Now take a substitution $\sigma$ of edgeword $u$ of length at least $l$ such that $d([u],\spanning{\gamma})<\epsilon$.
By Proposition \ref{prop:eigenvalue}, we know that $\sigma$ admits eigenspaces $\slope_n^k$ for $0\leq k < \tfrac{n}{2}$ with eigenvalues $\lambda = (\lambda_0,\cdots \lambda_{\nhalf -1}) = \eigenMatrix{n}\cdot [u]\trans $.

Since $|u|\geq l$, there exists $t> \tfrac{4}{n}$ such that $d([u],t\gamma)<\epsilon$,
so overall we have \\
$\| \lambda - t\eigenMatrix{n}\cdot \gamma\trans \| < 1$, \ie
\[ \left\| \lambda - \left(t\nhalf, 0,0, \dots 0\right)\right\| < 1,\]
in particular we get $\lambda_i < 1$ for any $1\leq i <\tfrac{n}{2}$, and with $t\tfrac{n}{2}> 2$ we get $\lambda_0 > 1$.
We now apply Proposition \ref{prop:planarity_non-planarity} to get that $\sigma$ is planar of slope $\slope_n^0$.

Note that the condition $|u|\geq l$ is only necessary for $\lambda_0>1$, \ie to ensure that the substitution has a scaling factor greater than 1 along the tiling plane $\slope_n^0$.
\end{proof}

\begin{lemma}[ $\abel{\prWord{i}}$ approximates $\spanning{\gamma}$]
\label{lemma:approximates}
For any $\epsilon > 0$, there exist infinitely many $i\in\mathbb{N}$ such that $d([\prWord{i}],\spanning{\gamma}) < \epsilon$.
\end{lemma}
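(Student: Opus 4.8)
The plan is to extract, from the definition of the billiard word $\bword$, a precise comparison between the abelianized prefix $\abel{pref_i(\bword)}$ and the direction $\gamma$, and then to translate this into a statement about $\abel{\prWord{i}}=2\abel{pref_i(\bword)}$ (since $\prWord{i}=pref_i(\bword)\overline{pref_i(\bword)}$, so that the two halves contribute equally to each letter count). Concretely, recall that $\bword$ is the billiard word of the line $\Gamma_{\half}=\spanning{\gamma}+(\thalf,\dots,\thalf)$, so that after walking a parameter length $t$ along the line the number of crossings of the hyperplanes $\hyperplane{i}{k}$ is, up to an additive error of $1$, equal to $t\gamma_i+\thalf$; equivalently $\bigl|\,|pref_i(\bword)|_{2k}-(\text{parameter})\cdot\gamma_k\,\bigr|$ stays bounded by a constant $\delta=\delta(n)$ as already used in the proof of Proposition~\ref{prop:prTileability}. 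Hence $\abel{pref_i(\bword)}$ stays within bounded distance of the ray $\spanning{\gamma}$, and this bounded distance does \emph{not} grow with $i$.

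The key steps, in order, are: \textbf{(1)} Fix $\epsilon>0$. \textbf{(2)} Use the billiard construction to produce a constant $\delta=\delta(n)$ and, for each $i$, a parameter value $t_i\in\mathbb{R}$ (the arclength-type parameter at which the $i$-th crossing occurs) such that $\|\abel{pref_i(\bword)}-t_i\gamma\|<\delta$ for all $i$; here $t_i\to\infty$ as $i\to\infty$ because each coordinate of $\abel{pref_i(\bword)}$ tends to infinity (every letter appears with positive frequency $\gammaopt_{2k}>0$). \textbf{(3)} Deduce $d(\abel{pref_i(\bword)},\spanning{\gamma})\leq\delta$ for all $i$, and therefore $d(\abel{\prWord{i}},\spanning{\gamma})=2\,d(\abel{pref_i(\bword)},\spanning{\gamma})\leq 2\delta$. \textbf{(4)} This alone gives a \emph{uniform} bound $2\delta$, not a bound below an arbitrary $\epsilon$; to get $d(\abel{\prWord{i}},\spanning{\gamma})<\epsilon$ for infinitely many $i$ we argue that, as $t_i$ ranges over an unbounded set, the residual vector $\abel{pref_i(\bword)}-t_i\gamma$ (the transverse displacement of the billiard trajectory) does not converge away from $\spanning{\gamma}$: the orthogonal component of the billiard point returns arbitrarily close to the line $\Gamma_\half$ infinitely often. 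This is where one invokes equidistribution / Weyl-type recurrence of the billiard flow transverse to $\gamma$ — when $\gamma$ has an irrational coordinate ratio the transverse component is equidistributed on a torus and returns near $0$ infinitely often; when some ratios are rational (as for $n=6$) the transverse motion is periodic in the corresponding sub-block and returns \emph{exactly} to its start infinitely often, which is even better. Combining both regimes, there are infinitely many $i$ with the transverse displacement within $\epsilon/(2\|\gamma\|)$ of a multiple of $\gamma$, giving $d(\abel{pref_i(\bword)},\spanning{\gamma})<\epsilon/2$ and hence $d(\abel{\prWord{i}},\spanning{\gamma})<\epsilon$.

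The main obstacle is step~\textbf{(4)}: turning the uniform-but-not-small bound into ``arbitrarily small for infinitely many $i$''. The clean way is to describe the sequence $\bigl(\abel{pref_i(\bword)}\bmod \spanning{\gamma}\bigr)_i$ as the orbit of a translation on the quotient torus $\R{\nhalf}/\bigl(\spanning{\gamma}+\Z{\nhalf}\bigr)$ and observe that the subsequence of indices $i$ corresponding to crossings of the particular hyperplane family $\hyperplane{0}{k}$ (the most frequent letter) lands the orbit in a specific coset that accumulates at $0$; the Kronecker/Weyl equidistribution theorem then supplies the required returns, with the degenerate rational case handled separately by pure periodicity. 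One should also double-check the bookkeeping that $\abel{\prWord{i}}=2\abel{pref_i(\bword)}$ exactly (each letter of $pref_i(\bword)$ is counted once in the prefix and once in its mirror image), so that distances to the ray $\spanning{\gamma}$ scale by the factor $2$ and nothing is lost. Everything else — the existence of $\delta(n)$, positivity of the frequencies $\gammaopt_{2k}$, and unboundedness of $t_i$ — is already in hand from the earlier parts of the paper.
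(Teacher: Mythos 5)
Your overall route is the same as the paper's: reduce to the lattice path $p_i:=\abel{pref_i(\bword)}$ traced by the billiard word (using $\abel{\prWord{i}}=2p_i$, and the factor-$2$ scaling of distances to the linear subspace $\spanning{\gamma}$ is indeed harmless), note that the uniform bound $\|p_i-t_i\gamma\|<\delta$ is not by itself enough, and conclude by a recurrence statement for the component of $p_i$ transverse to $\spanning{\gamma}$. The paper packages that last step as the folklore Theorem~\ref{th:folk} (every projected point of a billiard sequence is an accumulation point of the projected sequence, proved in~\cite{kari2021} via Poincar\'e recurrence) applied to $\pi(p_0)=0$; no case distinction on rationality is needed.

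Your step~(4), which is the heart of the matter, is not rigorous as formulated and would need repair. First, $\R{\nhalf}/\left(\spanning{\gamma}+\Z{\nhalf}\right)$ is not a torus: when $\gamma$ has an irrational direction the subgroup $\spanning{\gamma}+\Z{\nhalf}$ is not closed, so ``a translation on the quotient torus'' is not meaningful; moreover the sequence of transverse displacements is not the orbit of a single translation, since the gaps between consecutive hyperplane crossings are not constant. Second, the dichotomy ``totally irrational $\Rightarrow$ Weyl equidistribution on the torus; some rational ratio $\Rightarrow$ periodic in that sub-block'' is too coarse: already for $n=6$ one has $\gamma=(1,\sqrt{3}/2,1/2)$, which is neither totally irrational nor transversally periodic; the correct general statement is equidistribution on a coset of a proper subtorus (the orbit closure), and your claim about the coset attached to crossings of $\hyperplane{0}{k}$ is asserted rather than proved. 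A clean repair in your spirit: work with the genuine linear flow $t\mapsto t\gamma+\tuple{\thalf,\dots,\thalf} \bmod \Z{\nhalf}$ on $\mathbb{T}^{\nhalf}$. By recurrence of translation flows (Kronecker simultaneous approximation, or Poincar\'e recurrence) there are arbitrarily large $t$ with $\mathrm{dist}(t\gamma,\Z{\nhalf})<\epsilon'$; since the number of crossings of the $j$-th hyperplane family up to time $t$ equals $\left\lfloor t\gamma_j+\thalf\right\rfloor$, any $\epsilon'<\thalf$ forces the corresponding lattice point $p_i$ to equal the nearby $q\in\Z{\nhalf}$, whence $d(p_i,\spanning{\gamma})\leq\|q-t\gamma\|$ is small, for infinitely many distinct $i$. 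This avoids both the ill-defined quotient and the case analysis, and is essentially the content of the paper's folk theorem.
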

\begin{proof}[Proof of Lemma \ref{lemma:approximates}]
To prove this result we define the billiard line or billiard sequence $(p_i)_{i\in\mathbb{N}}$ associated to the billiard word $\bword$ with starting point $0$ as the sequence of points of $\mathbb{Z}^n$ such that:
\begin{itemize}
\item $p_0 = 0$,
\item for any $i$, $\bword_i = 2k \Rightarrow p_{i+1} = p_{i} + \ee{k}$.
\end{itemize}

By definition of $\prWord{i}$ we have $[\prWord{i}] = 2p_{i}$. So $([\prWord{i}])_{i\in\mathbb{N}}$ approximates $\spanning{\gamma}$ arbitrarily well if and only if $(p_{i})_{i\in\mathbb{N}}$ approximates it arbitrarily well.
To prove that we use a known result on billiard words:
\begin{theorem}[folk.]
  Let $(p_i)_{i\in\mathbb{N}}$ be a billiard sequence of line $\Gamma$ and $\pi$ be the orthogonal projection onto $\Gamma^\bot$.
  Every projected point $\pi(p_i)$ is an accumulation point of the projected sequence $\left( \pi(p_i) \right)_{i \in \mathbb{N}}$.
  \label{th:folk}
\end{theorem}
Though this result is considered known by some, a quick proof using Poincaré's Recurrence Theorem can be found in \cite{kari2021} where it was also used in the same setting.

By Theorem \ref{th:folk}, $0=\pi(p_0)$ is an accumulation point of the projected sequence $\left(\pi(p_i)\right)_{i\in\mathbb{N}}$, so there exists a subsequence $\left(p_{\alpha(i)}\right)_{i\in\mathbb{N}}$ such that $\pi(p_{\alpha(i)}) \underset{i\to\infty}{\longrightarrow} 0$ which implies that $d(p_{\alpha(i)}, \spanning{\gamma}) \underset{i\to\infty}{\longrightarrow} 0$, \ie the sequence $\left(p_i\right)_{i\in\mathbb{N}}$ approximates $\spanning{\gamma}$ arbitrarily well.
\end{proof}
We can now combine these two lemmas to prove Proposition \ref{prop:prPlanarity}.
\begin{proof}[Proof of Proposition \ref{prop:prPlanarity}]
From Lemma \ref{lemma:gammaopt}, there exists $\epsilon>0$ such that for any susbtitution $\sigma$ defined by an edgeword $u$, if $d(\abel{u}, \langle\gamma\rangle) < \epsilon$ then $\sigma$ is planar of slope $\slope_n^0$.

From Lemma \ref{lemma:approximates}, and with $\epsilon$ provided by Lemma \ref{lemma:gammaopt}, there exist infinitely many $i\in\mathbb{N}$ such that $d(\abel{\prWord{i}}, \spanning{\gamma}) < \epsilon$.

Therefore there exist infinitely many $i\in\mathbb{N}$ such that, if the candidate substitution $\prSubs{i,n}$ of edgeword $\prWord{i}$ exists (if the metatiles are tileable), then it is planar of slope $\slope_n^0$.
\end{proof}

%%%%%%%%%%%%%%%%%%%%%%%%%%%%%%%%%%%%%%%%%%%%%%%%%%%%%%%%%%%%%%%%%%%%%%%%%%%%%%%%%%%%%%%%
%%%%%%%%%%%%%%%%%%%%%%%%%%%%%%%%%%%%%%%%%%%%%%%%%%%%%%%%%%%%%%%%%%%%%%%%%%%%%%%%%%%%%%%%
%%%%%%%%%%%%%%%%%%%%%%%%%%%%%%%%%%%%%%%%%%%%%%%%%%%%%%%%%%%%%%%%%%%%%%%%%%%%%%%%%%%%%%%%
%%%%%%%%%%%%%%%%%%%%%%%%%%%%%%%%%%%%%%%%%%%%%%%%%%%%%%%%%%%%%%%%%%%%%%%%%%%%%%%%%%%%%%%%
%%%%%%%%%%%%%%%%%%%%%%%%%%%%%%%%%%%%%%%%%%%%%%%%%%%%%%%%%%%%%%%%%%%%%%%%%%%%%%%%%%%%%%%%
%%%%%%%%%%%%%%%%%%%%%%%%%%%%%%%%%%%%%%%%%%%%%%%%%%%%%%%%%%%%%%%%%%%%%%%%%%%%%%%%%%%%%%%%

\subsection{Primitivity of the Planar Rosa substitution}
\label{subsec:prPrimitivity}

In this section we prove the primitivity of the Planar Rosa candidate substitutions. Recall that a substitution  $\sigma$ is called primitive when there exists an integer $k$ such that for any tile $t$, the metatile $\sigma^k(t)$ of order $k$ contains all the tiles in all possible orientations.

\begin{proposition}[Primitivity]
\label{prop:prPrimitivity}

There exists an integer $K'$ such that for all $i>K'$, if the candidate substitution $\prSubs{i,n}$ exists then it is primitive.
\end{proposition}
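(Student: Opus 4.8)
The plan is to recognise primitivity as the statement that the \emph{substitution digraph} $G$ on the prototiles --- one vertex per prototile $\tile{a}{b}$, an arc $s\to t$ whenever a translate of $t$ occurs in $\prSubs{i,n}(s)$ --- is strongly connected and aperiodic; by the Perron--Frobenius theorem this is exactly the condition that the prototile substitution matrix has a strictly positive power, \ie that $\prSubs{i,n}$ is primitive. Aperiodicity will be free. The edgeword $\prWord{i}$ begins, and being a palindrome also ends, with the letter $0$, so each corner of each metatile is flanked by a unit boundary edge in each of the two directions meeting there; at the acute corner of $\prSubs{i,n}(\tile{a}{a+1})$ this forces the corner tile to have angle $\tfrac{\pi}{n}$ and edges along $\vv a$ and $\vv{a+1}$, \ie to be a copy of $\tile{a}{a+1}$. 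Hence every narrow prototile carries a loop in $G$, and a strongly connected graph with a loop has period one; so the proposition reduces to strong connectivity.

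The one external input is that for all large enough $i$ the word $\prWord{i}$ contains every letter $0,2,\dots,n-2$: since $\prWord{i}=pref_i(\bword)\overline{pref_i(\bword)}$ and the direction $\gamma$ of the billiard line $\Gamma_{\half}$ has strictly positive coordinates, $\bword$ crosses each hyperplane family $\{\hyperplane{k}{\cdot}\}$, so each letter $2k$ occurs in $\bword$; choose $K'$ large enough that $pref_{K'}(\bword)$ already contains all of them. For $G$ this has strong consequences. Along an edge of $\prSubs{i,n}(s)$ in direction $\vv a$, the letter $2k$ produces (exactly as in the pseudo-circulant computation of Section~\ref{subsec:srLifting}) a rhombus with edges $\vv{a+k}$ and $\vv{a-k}$, indices read among the $2n$ directions; normalising the pair into $\{0,\dots,n-1\}$ yields explicit arcs $s\to\tile{a-k}{a+k}$ and $s\to\tile{b-k}{b+k}$ for all $1\le k<\nhalf$. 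Using $k=1$ on consecutive edges shifts the centre of a fixed-even-width prototile by $\pm1$, and larger $k$ moves between the distinct even widths, so all even-width prototiles lie in one strongly connected class $E$. The corner condition, available for large $i$ by Proposition~\ref{prop:prTileability}, places a narrow rhombus in every corner of every metatile: at a narrow metatile this realises the ``rotate by $\pm1$'' arcs joining the $n$ narrow prototiles, and at an even-width metatile it supplies an arc into the narrow prototiles; together with the edgeword arcs from narrow to even-width prototiles, all narrow and all even-width prototiles lie in a single class.

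What remains is to put the prototiles of odd width at least $3$ --- and, when $\nhalf$ is odd, the square --- into this class; I expect this to be the main obstacle, since those rhombuses never occur on a metatile boundary and must be exhibited in metatile interiors. The mechanism is that in a tiled metatile the chain of rhombuses issuing from a boundary rhombus of angle $\tfrac{2k\pi}{n}$ is made of tiles all sharing one edge direction $\vv j$, and at a crossing of two chains of directions $\vv j$ and $\vv{j'}$ there sits a rhombus with edges $\vv j$ and $\vv{j'}$, whose width $|j-j'|$ (or $n-|j-j'|$) can be made any prescribed value by choosing the two boundary rhombuses suitably. The plan is to fix one sufficiently wide metatile shape (equivalently, a bounded number of substitution steps), prove via the Kenyon criterion (Theorem~\ref{thm:kenyon}) and the counting-function estimates of Section~\ref{subsec:prTileability} that chains with the two required directions are forced to cross inside it, and conclude that every remaining prototile appears in some $\prSubs{i,n}^{m}(s)$ with $m$ bounded. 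Once all prototiles are mutually reachable $G$ is strongly connected, with the loops it is aperiodic, and hence $\prSubs{i,n}$ is primitive for every $i>K'$.
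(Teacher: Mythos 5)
Your reduction of primitivity to strong connectivity plus a loop in the substitution digraph is fine in principle, and your boundary bookkeeping (a letter $2k$ on a side of direction $\vv{a}$ producing a rhombus with edges $\vv{a+k}$ and $\vv{a-k}$, cf.\ Proposition~\ref{prop:expansion_subrosa_even}) is correct; the loop at the acute corner of the narrow metatile and the connectivity among narrow and even-angle prototiles are plausible, though stated rather than verified (the orientation bookkeeping for the corner-condition rhombuses in the wide corners is asserted, not checked). The genuine gap is exactly where you announce ``the main obstacle'': every letter of $\prWord{i}$ is even, so boundary-generated arcs can never reach the prototiles of odd angle at least $\tfrac{3\pi}{n}$; these can only arise at crossings of two rhombus chains in the \emph{interior} of a metatile, and you never prove that such crossings with the required pair of directions are forced in the tiling actually chosen for the metatile. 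The counting-function estimates of Section~\ref{subsec:prTileability} only yield tileability (existence of a valid Kenyon matching); they do not, by themselves, force a crossing of two prescribed chains, so your ``plan'' is precisely the unproved core of the proposition and the strong-connectivity argument does not close.

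For comparison, the paper settles this point by a different mechanism and gets a stronger conclusion with less machinery: by Lemma~\ref{lemma:infinite_cone} the infinite cone of angle $\tfrac{\pi}{n}$ with edgeword $\bword$ on both sides is tiled inside the multigrid dual tiling $\dualtiling{n}{\tfrac{1}{2}}$, which by uniform recurrence contains every prototile in every orientation. For each oriented prototile one takes its occurrence in the cone closest to the apex, records the maximal boundary position $x_t$ at which the two chains through it meet the cone boundary, and checks (distinguishing chains that cross the cone from chains parallel to a side) that for $i>x_t$ the Kenyon matching of the narrow metatile with edgeword $\prWord{i}=pref_i(\bword)\overline{pref_i(\bword)}$ reproduces the same crossing, so the narrow metatile contains every prototile for all sufficiently large $i$. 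Combined with the corner condition, which places a narrow rhombus in the image of every tile, this gives primitivity of order $2$ directly, with no digraph analysis. Unless you import this cone/multigrid comparison (or an equivalent forcing statement for interior chain crossings), your approach cannot account for the odd-angle prototiles, so as written the proposal does not prove the proposition.
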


\begin{proof}
We prove here that for all sufficiently large $i$, the edgeword $\prWord{i}$ induces a substitution $\prSubs{i,n}$ that is primitive of order 2, \ie for any tile $t$ the patch ${\prSubs{i,n}}^2(t)$ contains all tiles in all orientations. Note that with similar strategy (and longer proof) we could also prove order 1 primitivity.

Let $i$ be an integer, we assume that the candidate substitution $\prSubs{i,n}$ exists and satisfies the corner condition, which by Proposition \ref{prop:prTileability} is true for all sufficiently large $i$.
Since $\prSubs{i,n}$ satisfies the corner condition we know that the image of any tile contains at least one occurrence of the narrow rhombus (in at least one orientation).
We now prove that for all sufficiently large $i$, the narrow metatile contains all tiles (in all orientations), which then implies that $\prSubs{i,n}$ is primitive of order 2.

Remark that the infinite cone of angle $\tfrac{\pi}{n}$ and edgeword $\bword$ in $\dualtiling{n}{\tfrac{1}{2}}$ contains every tile (in every orientation). This can be proved for example by uniform recurrence of the multigrid dual tiling $\dualtiling{n}{\tfrac{1}{2}}$.

Now take an oriented prototile $\mathbf{t}$ in the infinite cone. Let $t$ be the occurrence of $\mathbf{t}$ in the infinite cone closest to the corner. The tile
$t$ is the dual of an intersection of two lines in the  multigrid $\multigrid{n}{\tfrac{1}{2}}$. For all sufficiently large $i$, the same type of intersection appears in the Kenyon matching of the narrow metatile of $\prSubs{i,n}$ which means that there exist a tile $t' \equiv \mathbf{t}$ in the narrow metatile of $\prSubs{i,n}$.

\begin{figure}[t]
\center
\includegraphics[width=0.3\textwidth]{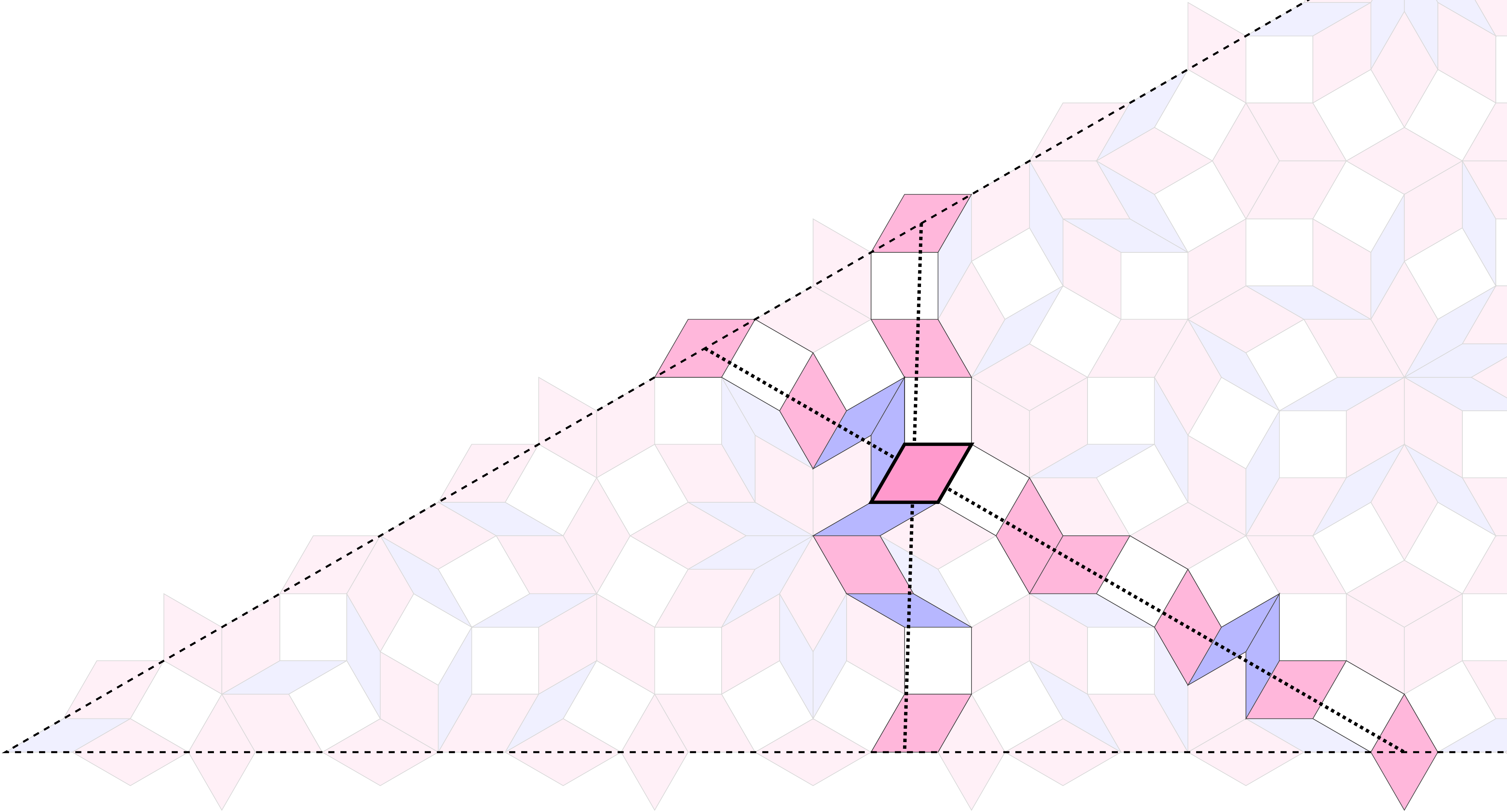} \hfill
\includegraphics[width=0.3\textwidth]{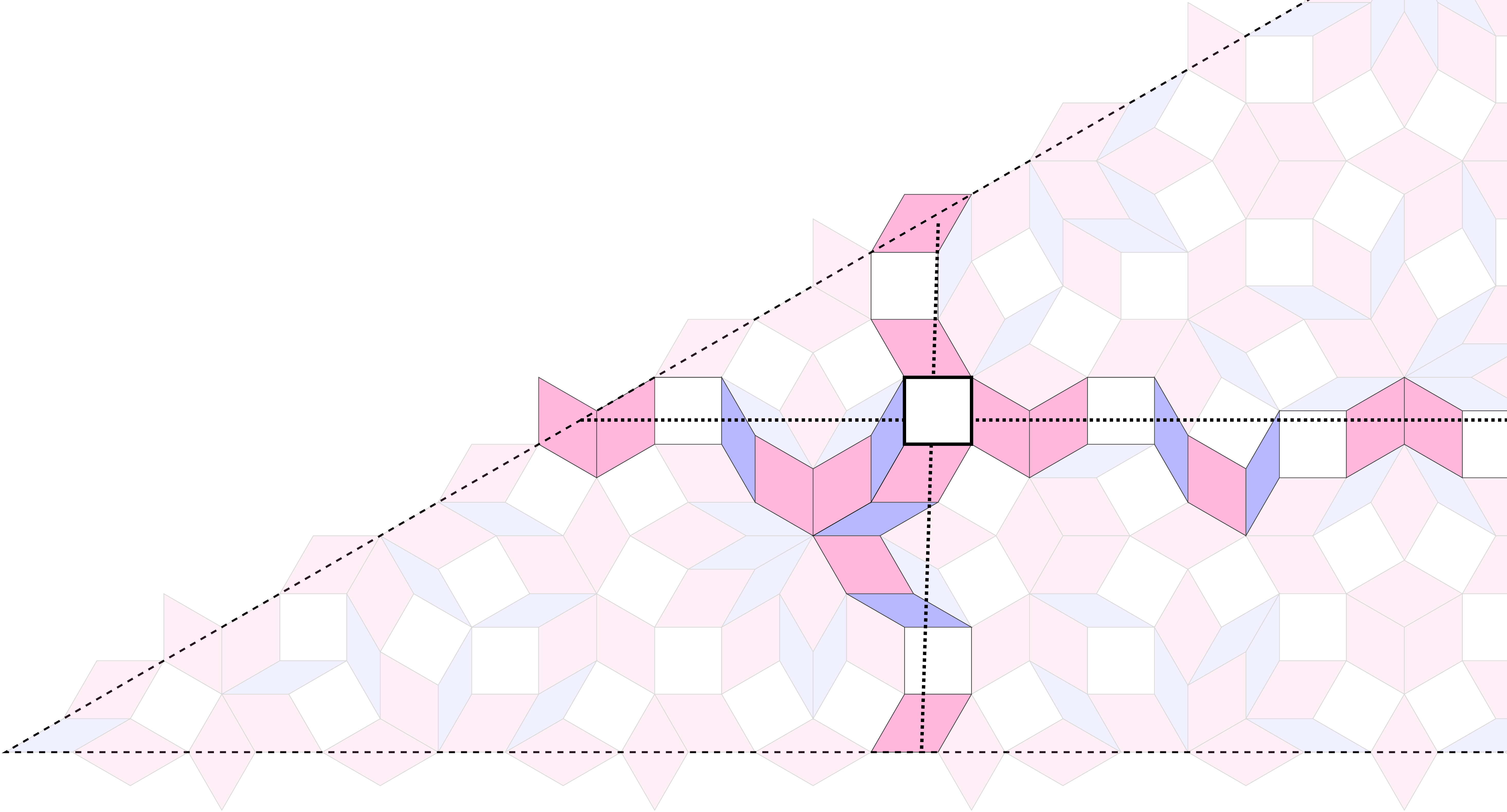} \hfill
\includegraphics[width=0.3\textwidth]{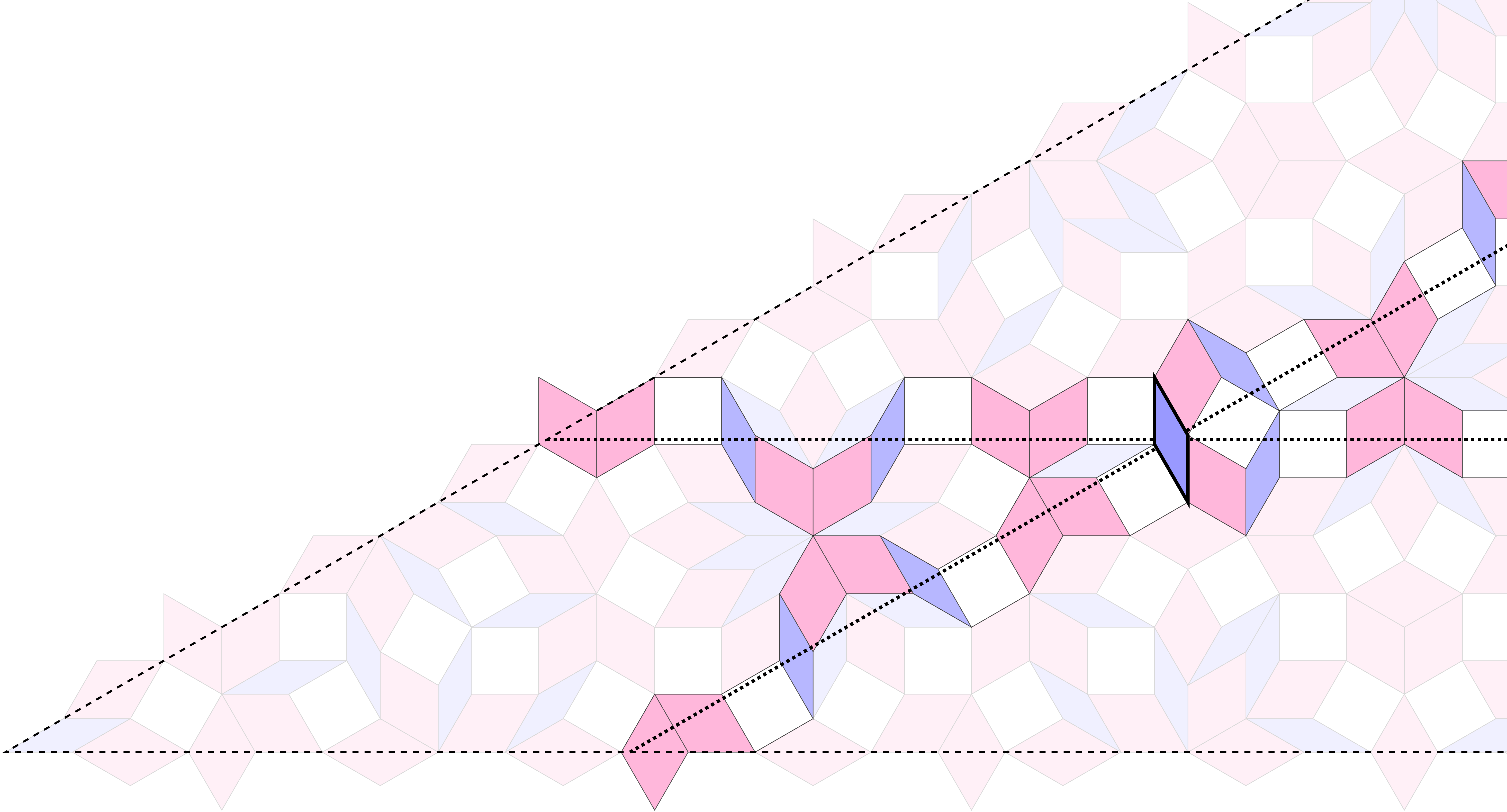}
\caption{Rhombuses in the infinite cone and chain intersections for $n=6$.}
\label{fig:chains_infinite_cone}
\end{figure}

In the cone, $t$ is the dual of an intersection of two lines, \ie at the intersection of two chains of rhombuses. For both chains of rhombuses let us look at the intersections of the chain with the boundary of the cone, there are two cases: either the chain crosses the cone, or the chain enters the cone and never leaves it, see Figure \ref{fig:chains_infinite_cone}. In the second case this mean that in the multigrid the line is parallel to one side of the cone. Indeed, because the cone has angle $\tfrac{\pi}{n}$ there is no line in the multigrid that is both non-parallel to the sides of the cone and has only one intersection point with the cone.

Denote by $x_t$ the maximal position (in the edgeword $\bword$) of the intersection points of the chains that cross at $t$ and the cone. For all $i> x_t$, if the narrow metatile induced by $\prWord{i}$ is tileable then it contains a tile $t'\equiv \mathbf{t}$. Indeed if the two chains cross the cone at positions $x_0,x_0'$ and $x_1,x_1'$ then the Kenyon matching of the narrow metatiles has matches $x_0$ to $x_0'$ and $x_1$ to $x_1'$ because $i> x_t$ and $\prWord{i}=pref_i(\bword)\overline{pref_i(\bword)}$, so we have an equivalent crossing, which means that the oriented prototile $\mathbf{t}$ appears. If one chain crosses the cone at positions $x_0,x_0'$ and one enters the cone at position $x_1$ and never leaves it, then we have an equivalent crossing because $x_1$ is matched to an edge or rhombus on the opposite side of the metatile so the two chains still cross. And if both chains enter at positions $x_0$ and $x_1$ and never leave the cone then they are both matched to edges or rhombuses on the opposite side of the metatile and the two chains still cross.
\end{proof}

%%%%%%%%%%%%%%%%%%%%%%%%%%%%%%%%%%%%%%%%%%%%%%%%%%%%%%%%%%%%%%%%%%%%%%%%%%%%%%%%%%%%%%%%
%%%%%%%%%%%%%%%%%%%%%%%%%%%%%%%%%%%%%%%%%%%%%%%%%%%%%%%%%%%%%%%%%%%%%%%%%%%%%%%%%%%%%%%%
%%%%%%%%%%%%%%%%%%%%%%%%%%%%%%%%%%%%%%%%%%%%%%%%%%%%%%%%%%%%%%%%%%%%%%%%%%%%%%%%%%%%%%%%
%%%%%%%%%%%%%%%%%%%%%%%%%%%%%%%%%%%%%%%%%%%%%%%%%%%%%%%%%%%%%%%%%%%%%%%%%%%%%%%%%%%%%%%%
%%%%%%%%%%%%%%%%%%%%%%%%%%%%%%%%%%%%%%%%%%%%%%%%%%%%%%%%%%%%%%%%%%%%%%%%%%%%%%%%%%%%%%%%
%%%%%%%%%%%%%%%%%%%%%%%%%%%%%%%%%%%%%%%%%%%%%%%%%%%%%%%%%%%%%%%%%%%%%%%%%%%%%%%%%%%%%%%%

\subsection{A seed for the Planar Rosa substitution}
\label{subsec:prSeed}

In this section we present the fact that the Star pattern is a regular seed for the Planar Rosa candidate substitutions.
Recall that the Star pattern consists of $n$ rhombuses of angle $\tfrac{\pi}{n}$ around a vertex, see Definition \ref{def:star} 
%at page \pageref{def:star}
and Figure \ref{fig:star}.

\begin{proposition}[Seed]
\label{prop:prSeed}
Let $n$ be an even integer greater than 2.\\
For any $i$, if $\prSubs{i,n}$ is a well-defined substitution that satisfies the corner condition, then the star pattern $\star{n}$ is a regular seed for $\prSubs{i,n}$.
\end{proposition}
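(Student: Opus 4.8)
The plan is to verify the two conditions in the definition of a (legal, here called \emph{regular}) seed for $\prSubs{i,n}$: (i) that $\star{n}$ lies at the centre of its image $\prSubs{i,n}(\star{n})$ and the circle inscribed in $\prSubs{i,n}(\star{n})$ strictly contains the circle circumscribed to $\star{n}$, and (ii) that $\star{n}$ is a legal patch, \ie it occurs inside some metatile ${\prSubs{i,n}}^{k}(t)$.

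For (i) the corner condition does the geometric work. Write $\star{n}=R_0\cup\dots\cup R_{2n-1}$, the $2n$ narrow rhombuses arranged cyclically around a common vertex $v$, each with angle $\tfrac{\pi}{n}$ at $v$. Since $\prExp{i,n}$ is a similitude, the metatile $\prSubs{i,n}(R_j)$ has at the image of $v$ a corner of angle exactly $\tfrac{\pi}{n}$; by the corner condition this corner is filled by a single narrow rhombus, and since consecutive metatiles $\prSubs{i,n}(R_j)$ and $\prSubs{i,n}(R_{j+1})$ are glued along the image of the edge that $R_j$ and $R_{j+1}$ share, these $2n$ narrow rhombuses are pairwise adjacent and together form a copy of $\star{n}$ centred at the image of $v$; hence $\star{n}\subseteq\prSubs{i,n}(\star{n})$ centrally. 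For the circle condition, recall from Subsections \ref{subsec:lifted_substitutions} and \ref{subsec:srLifting} that $\prExp{i,n}$ is conjugate to the restriction of $\prMat{i,n}$ to $\slope_n^0$, so its planar scaling ratio is the eigenvalue $\lambda_0=\abel{\prWord{i}}_0+\sum_{1\leq k<\nhalf}\abel{\prWord{i}}_k\,2\cos\tfrac{k\pi}{n}$. The boundary vertices of $\lambda_0\cdot\star{n}=\prExp{i,n}(\star{n})$ are extremal boundary vertices of $\prSubs{i,n}(\star{n})$, and along each straight edge of $\lambda_0\cdot\star{n}$ the edgeword $\prWord{i}$ -- a prefix of the billiard word $\bword$ followed by its mirror -- stays within a distance bounded only in terms of $n$ (its balancedness, as used in the proof of Lemma \ref{lemma:approximates}), so, as the inscribed radius of $\star{n}$ is $1$, the inscribed radius of $\prSubs{i,n}(\star{n})$ is at least $\lambda_0$ minus this bounded deviation, whereas the circumradius of $\star{n}$ is the fixed number $2\cos\tfrac{\pi}{2n}<2$. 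Because the billiard frequencies point along $\gamma$ and $\eigenMatrix{n}\cdot\gamma\trans=\tuple{\nhalf,0,\dots,0}\trans$, the ratio $\lambda_0$ grows linearly in $|\prWord{i}|=2i$ at a rate strictly above $2$, so the strict circle inclusion holds once $i$ is large, and is checked directly for the finitely many small admissible $i$ (there are none once $n\geq 6$, the corner condition then forcing $i$ large by the proof of Proposition \ref{prop:prTileability}). Thus $\star{n}$ is a seed.

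The delicate part is legality, \ie showing $\star{n}$ occurs inside a metatile. The obstacle is that, $\prExp{i,n}$ being a similitude, a metatile vertex surrounded by $2n$ narrow rhombuses is necessarily an \emph{interior} vertex of the subdivision -- the total angle there is $2\pi$, while every rhombus corner is a proper angle -- so the self-similar inclusion $\prSubs{i,n}(\star{n})\supseteq\star{n}$ proved above does not by itself suffice, and a genuine interior occurrence must be produced, while the interiors of the metatiles are only loosely constrained by the corner condition. The plan is to produce it using the multigrid dual tiling $\dualtiling{n}{\tfrac{1}{2}}$: running the argument of Lemma \ref{lemma:infinite_cone} on all $2n$ half-lines $\zeta^{j}\mathbb{R}^{+}$ around the origin (with $\zeta=e^{\imag\pi/n}$) shows that the origin of $\dualtiling{n}{\tfrac{1}{2}}$ is surrounded by $2n$ cones of angle $\tfrac{\pi}{n}$ whose first tiles are narrow rhombuses, so $\star{n}$ occurs at the origin of $\dualtiling{n}{\tfrac{1}{2}}$. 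One then embeds this configuration into a metatile of $\prSubs{i,n}$: near each of its corners a metatile is already tiled as the infinite cone (Proposition \ref{prop:prTileability}, part 1), and iterating the corner condition lets one control the fans of narrow rhombuses produced around the images of corners under repeated substitution -- the fan of two narrow rhombuses at a $\tfrac{2\pi}{n}$ corner and of three at a $\tfrac{3\pi}{n}$ corner are forced immediately (a $\tfrac{\pi}{n}$ angular gap admits only a narrow rhombus), and one pushes such a fan, inside a high-order metatile, round a full turn. Forcing the complete star rather than a proper fan -- closing this last gap using the finer multigrid structure of the interiors -- is the main obstacle of the argument.

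Once legality is established, combining it with part (i) shows that $\star{n}$ is a regular seed for $\prSubs{i,n}$. Since $\star{n}$ also has global $2n$-fold rotational symmetry about its centre vertex, the earlier lemma on limit tilings then produces a legal, globally $2n$-fold symmetric tiling $\prTiling{n}$, as needed for Definition \ref{def:planar-rosa} and Theorem \ref{th:planar-rosa}.
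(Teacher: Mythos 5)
Your part (i) is essentially the paper's argument (corner condition gives the central copy of $\star{n}$, expansion ratio $>2$ gives the circle inclusion), though your detour through inscribed-radius estimates and ``finitely many small $i$'' is unnecessary for a statement that is supposed to hold for \emph{any} $i$ with the stated hypotheses. The genuine gap is in part (ii), and you acknowledge it yourself: your plan is to locate $\star{n}$ at the origin of the multigrid tiling $\dualtiling{n}{\tfrac{1}{2}}$ and then ``push a fan round a full turn'' inside a high-order metatile, and you explicitly leave open how to close the last angular gap. That closing step is exactly what your approach cannot deliver, because at a \emph{corner} of a metatile the corner condition can only ever force a partial fan of angle $\tfrac{k\pi}{n}$, never a full turn, and the multigrid tiling is not itself a $\prSubs{i,n}$-tiling, so an occurrence of $\star{n}$ in $\dualtiling{n}{\tfrac{1}{2}}$ says nothing about legality for $\prSubs{i,n}$.

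The idea you are missing is to anchor the construction at an \emph{interior} vertex rather than at a corner. The paper first shows, by iterating the corner condition, that for a tile $t'$ with corner angle $\tfrac{k\pi}{n}$, the $\left\lfloor \tfrac{k}{2}\right\rfloor$'th image ${\prSubs{i,n}}^{\lfloor k/2\rfloor}(t')$ carries a complete fan of $k$ narrow rhombuses filling that corner: each application of the substitution forces one narrow rhombus on each side of the corner and shrinks the undetermined middle wedge by $\tfrac{2\pi}{n}$. Then, for any prototile $t$, one takes an interior vertex $v$ of the first-order metatile $\prSubs{i,n}(t)$; the tiles meeting $v$ have corner angles $\tfrac{k_0\pi}{n},\tfrac{k_1\pi}{n},\dots$ summing to $2\pi$, i.e.\ $k_0+k_1+\dots=2n$. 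Applying the substitution $\tfrac{n}{2}$ more times, each of those corners develops its complete fan around the image of $v$, and since the angles close up to a full turn the fans concatenate into the full star $\star{n}$ centred at that image. Hence $\star{n}$ appears in ${\prSubs{i,n}}^{\frac{n}{2}+1}(t)$, which is the legality you need, using only the corner condition and no multigrid input at all.
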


\begin{proof}
Let $n\geq 4$ be an even integer and let $i$ be an integer such that the candidate substitution $\prSubs{i,n}$ exists and satisfies the corner condition.

We prove the two conditions:
\begin{enumerate}
\item $\star{n}$ is a seed for $\prSubs{i,n}$, \ie $\star{n}$ appears at the centre of $\prSubs{i,n}(\star{n})$ and $\prSubs{i,n}$ expands $\star{n}$ in all directions,
\item $\star{n}$ is regular for $\prSubs{i,n}$, \ie there exists a prototile $t$ and an integer $m$ such that $\star{n}$ appears in ${\prSubs{i,n}}^m(t)$.
\end{enumerate}

The fact that $\star{n}$ is a seed is an immediate corollary of the corner condition. Indeed in the narrow corner of the narrow metatile we have a narrow rhombus, so when we apply $\prSubs{i,n}$ to $\star{n}$ we have $\star{n}$ at the centre. Moreover since the expansion of $\prSubs{i,n}$ is a linear application that has scaling factor strictly greater than 2 the substitution expands $\star{n}$ in all direction.

\begin{figure}[t]
\center
\includegraphics[width=0.6\textwidth]{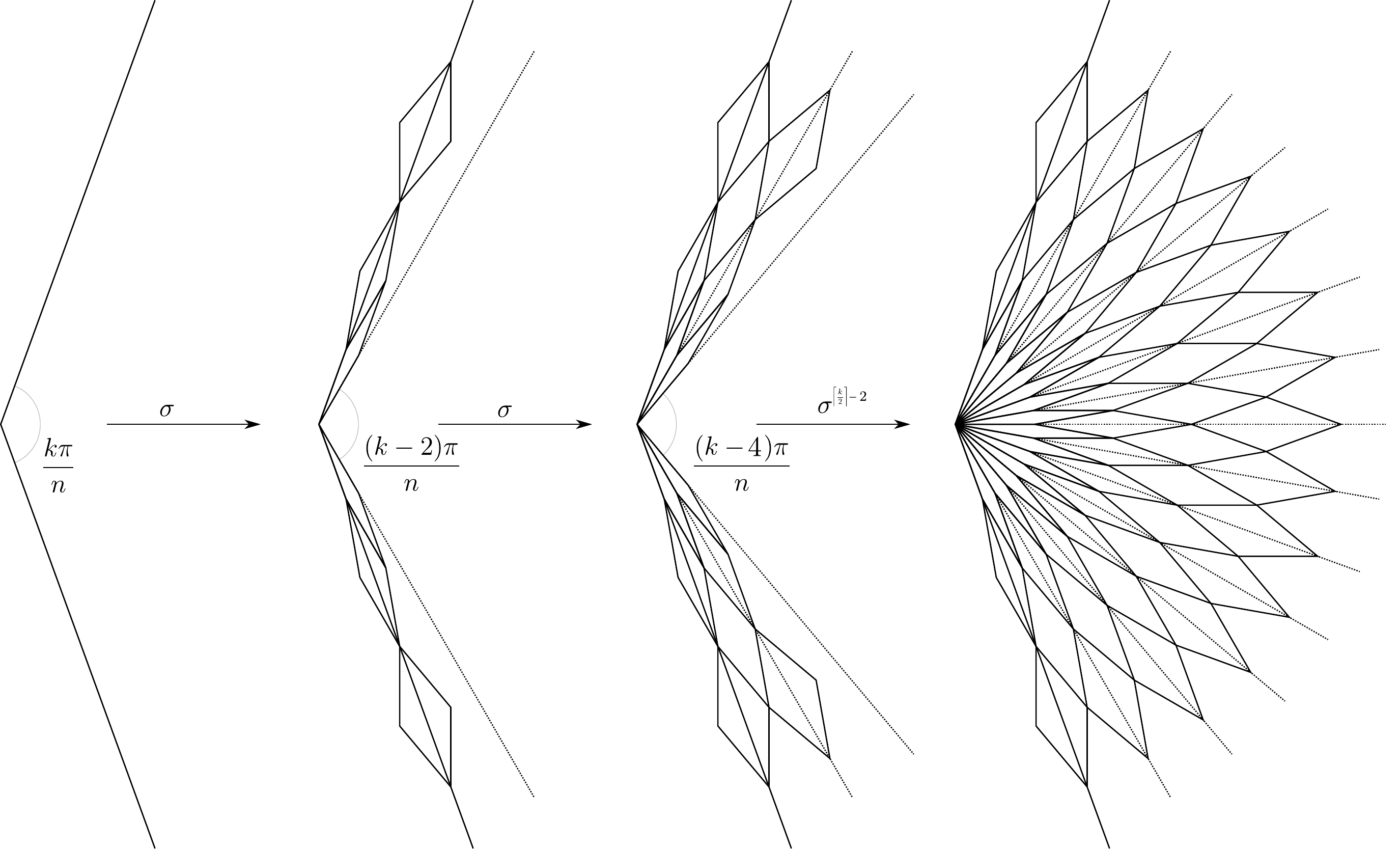}
\caption{A corner of metatile when applying a substitution that satisfies the corner condition.}
\label{fig:corners_metatile}
\end{figure}

The fact that $\star{n}$ is regular is a bit more tricky but is also a corollary of the corner condition. We prove that for any prototile $t$, $\star{n}$ appears in the $(\tfrac{n}{2}+1)$'th image of $t$.

First, consider a tile $t'$ of angles $\tfrac{k\pi}{n}$ and $\tfrac{(n-k)\pi}{n}$, in the $\left\lfloor \tfrac{k}{2}\right\rfloor$'th image of $t'$ by the substitution $\prSubs{i,n}$ in the two opposite corners of angle $\tfrac{k\pi}{n}$ we have a $\tfrac{k\pi}{n}$ portions of the star pattern $\star{n}$.
Indeed by the corner condition, in the corner of angle $\tfrac{k\pi}{n}$ of the metatile $\prSubs{i,n}(t')$ there is a narrow $\tfrac{\pi}{n}$ rhombus of both sides and a remaining undetermined portion of angle $\tfrac{(k-2)\pi}{n}$ which can be composed of one or more tiles. Now in ${\prSubs{i,n}}^2(t')$ with the same arguments in the $\tfrac{k\pi}{n}$ corner there are at least two narrow $\tfrac{\pi}{n}$ rhombuses on each side and a remaining undertermined portion of angle $\tfrac{(k-4)\pi}{n}$ which can be composed of one or more tiles. By repeating this process we obtain that in the $\tfrac{k\pi}{n}$ corner of the $\left\lfloor \tfrac{k}{2}\right\rfloor$'th image of $t'$ by the substitution $\prSubs{i,n}$ we have $k$ narrow $\tfrac{\pi}{n}$ rhombuses, \ie a $\tfrac{k\pi}{n}$ portion of the star pattern $\star{n}$. See Figure \ref{fig:corners_metatile} for an illustration of this construction.

Now take a prototile $t$, in $\prSubs{i,n}(t)$ there is at least one interior vertex. This interior vertex is surrounded by at least 3 tiles, for simplicity we assume that it is surrounded by exactly three tiles of angles $\tfrac{k_0\pi}{n}$, $\tfrac{k_1\pi}{n}$ and $\tfrac{k_2\pi}{n}$ with $1\leq k_0,k_1,k_2 \leq n-1$,  so we apply the previous construction and in the $(\tfrac{n}{2}+1)$'th image of $t$ we have a full star $\star{n}$ centred on the $\tfrac{n}{2}$'th image of that interior vertex.
\end{proof}

%%%%%%%%%%%%%%%%%%%%%%%%%%%%%%%%%%%%%%%%%%%%%%%%%%%%%%%%%%%%%%%%%%%%%%%%%%%%%%%%%%%%%%%%
%%%%%%%%%%%%%%%%%%%%%%%%%%%%%%%%%%%%%%%%%%%%%%%%%%%%%%%%%%%%%%%%%%%%%%%%%%%%%%%%%%%%%%%%
%%%%%%%%%%%%%%%%%%%%%%%%%%%%%%%%%%%%%%%%%%%%%%%%%%%%%%%%%%%%%%%%%%%%%%%%%%%%%%%%%%%%%%%%
%%%%%%%%%%%%%%%%%%%%%%%%%%%%%%%%%%%%%%%%%%%%%%%%%%%%%%%%%%%%%%%%%%%%%%%%%%%%%%%%%%%%%%%%
%%%%%%%%%%%%%%%%%%%%%%%%%%%%%%%%%%%%%%%%%%%%%%%%%%%%%%%%%%%%%%%%%%%%%%%%%%%%%%%%%%%%%%%%
%%%%%%%%%%%%%%%%%%%%%%%%%%%%%%%%%%%%%%%%%%%%%%%%%%%%%%%%%%%%%%%%%%%%%%%%%%%%%%%%%%%%%%%%

\subsection{The Planar Rosa substitution and canonical tiling}
\label{subsec:prConclusion}
In this section we combine the results of Sections \ref{subsec:prTileability} to \ref{subsec:prSeed} to prove the correctness of the Planar Rosa construction.

Recall that the Planar Rosa substitution $\prSubs{n}$ is defined as the first candidate substitution $\prSubs{i,n}$ (defined by the candidate edgeword $\prWord{i}$) such that it is a a primitive substitution that satisfies the corner condition and it is planar substitution of slope $\slope_n^0$.

\begin{proposition}[The construction of Planar Rosa is correct]
The Planar Rosa substitution $\prSubs{n}$ exists.
\end{proposition}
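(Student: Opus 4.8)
The plan is to show that the set of indices $i$ for which the candidate substitution $\prSubs{i,n}$ meets all four requirements of Definition~\ref{def:planar-rosa} is non-empty; being a set of natural numbers it then has a least element, which is precisely the $i$ that defines $\prSubs{n}$.

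First I would collect the three bounds obtained in the previous subsections. Let $K$ be the integer of Proposition~\ref{prop:prTileability}, so that for every $i\geq K$ the metatiles of the candidate edgeword $\prWord{i}$ are tileable in a way that satisfies the corner condition; in particular $\prSubs{i,n}$ exists and satisfies the corner condition for all $i\geq K$. Let $K'$ be the integer of Proposition~\ref{prop:prPrimitivity}, so that for every $i>K'$, whenever $\prSubs{i,n}$ exists it is primitive. Finally, Proposition~\ref{prop:prPlanarity} provides an infinite set $I\subseteq\mathbb{N}$ of indices such that $\prWord{i}$ defines a planar expansion of slope $\slope_n^0$.

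The key step is then to use that $I$ is infinite: it therefore contains some index $i_0$ with $i_0\geq\max(K,K'+1)$. For this $i_0$ we have, simultaneously, that $\prSubs{i_0,n}$ exists and satisfies the corner condition (Proposition~\ref{prop:prTileability}, since $i_0\geq K$); that $\prSubs{i_0,n}$ is primitive (Proposition~\ref{prop:prPrimitivity}, since $i_0>K'$ and the substitution exists); and that $\prSubs{i_0,n}$ is planar of slope $\slope_n^0$ (since $i_0\in I$). Hence the set
\[ \left\{\, i\in\mathbb{N} \;\middle|\; \prSubs{i,n}\ \text{exists, satisfies the corner condition, is primitive, and is planar of slope}\ \slope_n^0 \,\right\} \]
is non-empty, so it has a minimum, and $\prSubs{n}$ is well defined.

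I do not expect a real obstacle: this statement is purely a matter of assembling Propositions~\ref{prop:prTileability}, \ref{prop:prPrimitivity} and \ref{prop:prPlanarity}. The only point needing care is that planarity is guaranteed only along an infinite subsequence of indices (Proposition~\ref{prop:prPlanarity}), whereas tileability with the corner condition and primitivity hold for all sufficiently large indices; since the intersection of an infinite set with a cofinite set is still infinite, a common index exists. To close the correctness of the whole construction I would then invoke Proposition~\ref{prop:prSeed} to conclude that the star pattern $\star{n}$ is a legal seed for $\prSubs{n}$, so that the fixpoint $\prTiling{n}$ is well defined; combined with the lemma on limit tilings from legal seeds this gives that $\prTiling{n}$ is a legal, hence uniformly recurrent, substitution tiling with global $2n$-fold rotational symmetry that is a discrete plane of slope $\slope_n^0$, which is Theorem~\ref{th:planar-rosa} for even $n$.
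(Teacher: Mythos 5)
Your proposal is correct and follows essentially the same route as the paper: it intersects the cofinite sets of indices from Propositions~\ref{prop:prTileability} and \ref{prop:prPrimitivity} with the infinite set from Proposition~\ref{prop:prPlanarity}, notes the intersection is non-empty, and takes the minimal index. The concluding remarks about the seed via Proposition~\ref{prop:prSeed} match what the paper does immediately after this proposition.
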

\begin{proof}
We combine the previous results:
\begin{itemize}
\item from Proposition \ref{prop:prTileability} there exists an integer $K$ such that the set of integers $I_1:= [K, +\infty[$ satisfies $\forall i \in I_1,$ the candidate substitution $\prSubs{i,n}$ exists, \ie the metatiles induced by the candidate edgeword $\prWord{i}$ are tileable in a way that satisfies the corner condition,
\item from Proposition \ref{prop:prTileability} there exists an infinite set of integers $I_2$ such that $\forall i\in I_2$, if the candidate substitution $\prSubs{i,n}$ exists then it is planar of slope $\slope_n^k$,
\item from Proposition \ref{prop:prPrimitivity} there exists an integer $K'$ such that the sef of integers $I_3:= [K',+\infty[$ satisfies $\forall i \in I_1,$ if the candidate substitution $\prSubs{i,n}$ exists then it is primitive.
\end{itemize}
Remark that $I_1\cap I_2 \cap I_3$ is non-empty because any infinite subset of $\mathbb{N}$ has a non-empty intersection with any infinite interval $[k,+\infty[$.

Take $i \in I_1 \cap I_2 \cap I_3$, by construction the candidate substitution $\prSubs{i,n}$ has all the desired properties.
Since $I_1 \cap I_2 \cap I_3$ is a non-empty subset of $\mathbb{N}$ it admits a minimum. So the Planar Rosa substitution $\prSubs{n}$ exists.
\end{proof}
Recall that by Proposition \ref{prop:prSeed} the star pattern $\star{n}$ is a legal seed for $\prSubs{n}$, so the canonical Planar Rosa tiling $\prTiling{n}$ defined as the fixpoint of $\prSubs{n}$ from the seed $\star{n}$ is a substitution discrete plane of slope $\slope_n^0$ with global $2n$-fold rotational symmetry. This concludes the proof of Theorem \ref{th:planar-rosa}.

%%%%%%%%%%%%%%%%%%%%%%%%%%%%%%%%%%%%%%%%%%%%%%%%%%%%%%%%%%%%%%%%%%%%%%%%%%%%%%%%%%%%%%%%
%%%%%%%%%%%%%%%%%%%%%%%%%%%%%%%%%%%%%%%%%%%%%%%%%%%%%%%%%%%%%%%%%%%%%%%%%%%%%%%%%%%%%%%%
%%%%%%%%%%%%%%%%%%%%%%%%%%%%%%%%%%%%%%%%%%%%%%%%%%%%%%%%%%%%%%%%%%%%%%%%%%%%%%%%%%%%%%%%
%%%%%%%%%%%%%%%%%%%%%%%%%%%%%%%%%%%%%%%%%%%%%%%%%%%%%%%%%%%%%%%%%%%%%%%%%%%%%%%%%%%%%%%%
%%%%%%%%%%%%%%%%%%%%%%%%%%%%%%%%%%%%%%%%%%%%%%%%%%%%%%%%%%%%%%%%%%%%%%%%%%%%%%%%%%%%%%%%
%%%%%%%%%%%%%%%%%%%%%%%%%%%%%%%%%%%%%%%%%%%%%%%%%%%%%%%%%%%%%%%%%%%%%%%%%%%%%%%%%%%%%%%%

\appendix
\section{Technical details}

\begin{lemma}[Decompposition of $\mathbb{R}^n$ as the orthogonal direct sum of the spaces $\slope_n^k$]
\label{lemma:decomposition_Rn}
For any even integer $n$.
Define the planes $\slope_n^k$ for $0\leq k < \tfrac{n}{2}$ as
\[ \slope_n^k := \left\langle \left(\cos\frac{(2k+1)i\pi}{n}\right)_{0\leq i < n}, \left(\sin\frac{(2k+1)i\pi}{n}\right)_{0\leq i < n} \right\rangle. \]
The planes $\slope_n^k$ are pairwise orthogonal and their direct sum is $\mathbb{R}^n$.
\end{lemma}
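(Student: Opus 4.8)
The plan is to complexify. For $0\leq k<\tfrac{n}{2}$ set $\zeta_k:=e^{\imag(2k+1)\pi/n}$ and let $v_k:=\tuple{\zeta_k^i}_{0\leq i<n}\in\mathbb{C}^n$; writing $a_k,b_k\in\R{n}$ for its real and imaginary parts we have $v_k=a_k+\imag b_k$ and, by the definition of the plane, $\slope_n^k=\spanning{a_k,b_k}_{\R{}}$. The key computation is that of the two sums
\[ v_k\cdot\conjugate{v_l}=\sum_{i=0}^{n-1}\tuple{\zeta_k\conjugate{\zeta_l}}^i\qquad\text{and}\qquad v_k\cdot v_l=\sum_{i=0}^{n-1}\tuple{\zeta_k\zeta_l}^i, \]
which are geometric series whose ratios $\zeta_k\conjugate{\zeta_l}=e^{\imag 2(k-l)\pi/n}$ and $\zeta_k\zeta_l=e^{\imag 2(k+l+1)\pi/n}$ are $n$'th roots of unity. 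Hence each sum vanishes unless its ratio equals $1$; and since $0\leq k,l<\tfrac{n}{2}$ forces $|k-l|<n$ and $1\leq k+l+1\leq n-1$, the ratio $\zeta_k\conjugate{\zeta_l}$ equals $1$ exactly when $k=l$, while $\zeta_k\zeta_l$ is never $1$. Therefore $v_k\cdot v_l=0$ for all $k,l$, and $v_k\cdot\conjugate{v_l}=0$ whenever $k\neq l$.

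Next I would read these complex identities off as real inner products. Expanding the products coordinatewise gives
\[ v_k\cdot\conjugate{v_l}=(a_k\cdot a_l+b_k\cdot b_l)+\imag(b_k\cdot a_l-a_k\cdot b_l),\qquad v_k\cdot v_l=(a_k\cdot a_l-b_k\cdot b_l)+\imag(a_k\cdot b_l+b_k\cdot a_l). \]
For $k\neq l$ the vanishing of both quantities yields $a_k\cdot a_l=b_k\cdot b_l=a_k\cdot b_l=b_k\cdot a_l=0$, \ie every generator of $\slope_n^k$ is orthogonal to every generator of $\slope_n^l$, so $\slope_n^k\perp\slope_n^l$. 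Taking $k=l$ in $v_k\cdot v_k=0$ gives $\|a_k\|^2=\|b_k\|^2$ and $a_k\cdot b_k=0$; combined with $\|a_k\|^2+\|b_k\|^2=\sum_{i=0}^{n-1}|\zeta_k^i|^2=n$ this shows $\|a_k\|^2=\|b_k\|^2=\tfrac{n}{2}\neq0$, so $a_k$ and $b_k$ are nonzero and orthogonal and $\slope_n^k$ is genuinely two-dimensional.

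Finally, having produced $\tfrac{n}{2}$ pairwise orthogonal two-dimensional subspaces of $\R{n}$, their sum is automatically direct and of dimension $2\cdot\tfrac{n}{2}=n$, hence equals $\R{n}$. The main (and mild) obstacle is the bookkeeping of the middle step, namely matching the real and imaginary parts of the two complex inner products to the four real inner products between generators, but this is merely an invertible $2\times 2$ linear system; everything else rests on the elementary identity $\sum_{i=0}^{n-1}\omega^i=0$ for any $n$'th root of unity $\omega\neq1$.
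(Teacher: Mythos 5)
Your proof is correct and is essentially the paper's argument in complexified form: the two geometric sums $\sum_i(\zeta_k\conjugate{\zeta_l})^i$ and $\sum_i(\zeta_k\zeta_l)^i$ encode exactly the four trigonometric sums that the paper evaluates term by term via product-to-sum identities (both reduce to the vanishing of a full sum of $n$'th roots of unity), and both proofs conclude with the same dimension count. A small bonus of your packaging is the case $k=l$ of $v_k\cdot v_k=0$, which gives $\|a_k\|^2=\|b_k\|^2=\tfrac{n}{2}$ and $a_k\cdot b_k=0$, so each $\slope_n^k$ is genuinely two-dimensional --- a fact the paper's dimension count uses only implicitly.
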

\begin{proof}
Remark that the fact that the direct sum of the spaces $\slope_n^k$ is $\mathbb{R}^n$ is a consequence of the fact that they are pairwise orthogonal. Indeed if they are pairwise orthogonal then their sum is direct (trivial intersection) and the dimension of the sum is the sum of the dimensions so the dimension is $n$ and the sum is $\mathbb{R}^n$.

Now we prove that they are pairwise orthogonal. Take $0\leq k_1 < k_2 < \tfrac{n}{2}$.
$\slope_n^{k_1}$ and $\slope_n^{k_2}$ are orthogonal, indeed :
\begin{itemize}
\item $\left\langle \left(\cos\frac{(2k_1+1)i\pi}{n}\right)_{0\leq i < n} \bigg|  \left(\cos\frac{(2k_2+1)i\pi}{n}\right)_{0\leq i < n} \right\rangle = 0$, indeed we have
\begin{align*}
&\left\langle \left(\cos\frac{(2k_1+1)i\pi}{n}\right)_{0\leq i < n} \bigg|  \left(\cos\frac{(2k_2+1)i\pi}{n}\right)_{0\leq i < n} \right\rangle \\
&\qquad = \sum\limits_{i=0}^{n-1} \cos\frac{(2k_1+1)i\pi}{n}\cos\frac{(2k_2+1)i\pi}{n}\\
&\qquad = \sum\limits_{i=0}^{n-1}\frac{1}{2}\left( \cos\frac{2i(k_1+k_2+1)\pi}{n}+ \cos\frac{2i(k_2-k_1)\pi}{n}\right)\\
&\qquad = \left(\sum\limits_{i=0}^{n-1}\frac{1}{2} \cos\frac{2i(k_1+k_2+1)\pi}{n}\right)+ \left(  \sum\limits_{i=0}^{n-1}\frac{1}{2}\cos\frac{2i(k_2-k_1)\pi}{n}\right)\\
&\qquad = 0
\end{align*}
Note that the important hypothesis is $0\leq k_1 < k_2 < \tfrac{n}{2}$ so $0< (k_2 - k_1), (k_1+k_2+1) < n$ and $\sum\limits_{i=0}^{n-1}\frac{1}{2} \cos\frac{2i(k_1+k_2+1)\pi}{n} = 0$ (and same with $(k_2-k_1$)).
\item similarly $\left\langle \left(\sin\frac{(2k_1+1)i\pi}{n}\right)_{0\leq i < n} \bigg|  \left(\sin\frac{(2k_2+1)i\pi}{n}\right)_{0\leq i < n} \right\rangle = 0$ with \[2\sin\frac{(2k_1+1)i\pi}{n}\sin\frac{(2k_2+1)i\pi}{n} = \cos\frac{2i(k_1+k_2+1)\pi}{n}- \cos\frac{2i(k_2-k_1)\pi}{n}.\]
\item similarly we have $\left\langle \left(\cos\frac{(2k_1+1)i\pi}{n}\right)_{0\leq i < n} \bigg|  \left(\sin\frac{(2k_2+1)i\pi}{n}\right)_{0\leq i < n} \right\rangle = 0$ with
\[2\cos\frac{(2k_1+1)i\pi}{n}\sin\frac{(2k_2+1)i\pi}{n} = \sin\frac{2i(k_1+k_2+1)\pi}{n}- \sin\frac{2i(k_2-k_1)\pi}{n}.\]
\item similarly we have $\left\langle \left(\sin\frac{(2k_1+1)i\pi}{n}\right)_{0\leq i < n} \bigg|  \left(\cos\frac{(2k_2+1)i\pi}{n}\right)_{0\leq i < n} \right\rangle = 0$.
\end{itemize}

\end{proof}

\begin{lemma}[The billiard word $\bword$ of slope $\gamma$ is well-defined ]
Let $n$ be an even integer at least 4.
Let $\gamma:=(\cos\tfrac{i\pi}{n})_{0\leq i < n/2}$.
Let $\Gamma_\half := \langle \gamma \rangle + (\half,\half, \dots ,\half) = \{ (\half + t\cos\tfrac{i\pi}{n})_{0\leq i < n/2} |\ t \in \mathbb{R}^+\}$.
Let $\hyperplane{j}{k}:= \{ x \in \mathbb{R}^{n/2}|\ x_j = k\}$ with $j<\nhalf$ and $k> 1$.

The billiard word $\bword$ of line $\Gamma_\half$ is well-defined , \ie the line $\Gamma_\half$ never intersects two distinct hyperplanes at once, \ie \\ for any $j,j'\in \{0,1, \dots, \nhalf -1\}$, $k,k' \in \mathbb{N}$, if $(j,k)\neq (j',k')$ then $\Gamma_\half \cap \hyperplane{j}{k} \cap \hyperplane{j'}{k'} = \emptyset$.
\label{lemma:billiard_word_well_defined}
\end{lemma}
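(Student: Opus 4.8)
The plan is to convert the geometric non-incidence condition into an arithmetic one about ratios of odd integers. First I would parametrize the line as $\Gamma_{\half}=\{(\half+t\cos\tfrac{i\pi}{n})_{0\le i<\nhalf}:t\in\mathbb{R}^{+}\}$; since $0\le j<\nhalf$ forces $\cos\tfrac{j\pi}{n}\in(0,1]$, the line meets $\hyperplane{j}{k}$ in the single point at parameter $t_{j,k}=\tfrac{2k-1}{2\cos(j\pi/n)}$. If $j=j'$ then $\hyperplane{j}{k}$ and $\hyperplane{j}{k'}$ are parallel, hence disjoint for $k\ne k'$, and there is nothing to prove; so I may assume $j\ne j'$, and by symmetry $j<j'$, so $0<\cos\tfrac{j'\pi}{n}<\cos\tfrac{j\pi}{n}$. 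Then $\Gamma_{\half}\cap\hyperplane{j}{k}\cap\hyperplane{j'}{k'}\ne\emptyset$ is equivalent to $t_{j,k}=t_{j',k'}$, i.e.\ to
\[ (2k-1)\cos\tfrac{j'\pi}{n}=(2k'-1)\cos\tfrac{j\pi}{n}, \]
i.e.\ to $\tfrac{\cos(j'\pi/n)}{\cos(j\pi/n)}=\tfrac{2k'-1}{2k-1}$. So the lemma reduces to the following arithmetic statement: for $0\le j<j'<\nhalf$, the ratio $\tfrac{\cos(j'\pi/n)}{\cos(j\pi/n)}\in(0,1)$ is never a quotient of two odd integers; equivalently, if it is rational its $2$-adic valuation is nonzero.

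Next I would dispatch this by cases on the rationality of the cosines, using Niven's theorem: the only rational values of $\cos(q\pi)$, $q\in\mathbb{Q}$, are $0,\pm\half,\pm1$, so for $0\le m<\nhalf$ the value $\cos\tfrac{m\pi}{n}\in(0,1]$ is rational only when $m=0$ (value $1$) or $m=\tfrac n3$ (value $\half$, when $3\mid n$). If the ratio is irrational it cannot equal $\tfrac{2k'-1}{2k-1}$, so assume it is rational; then both cosines are rational or both irrational (if exactly one were rational the ratio would be irrational). If both are rational, Niven forces $j=0$, $j'=\tfrac n3$ and the ratio is $\half$, which is not a quotient of two odd integers. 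If both are irrational, then $\cos\tfrac{j'\pi}{n}=r\cos\tfrac{j\pi}{n}$ with $r\in\mathbb{Q}^{*}$, so the two cosines generate the same field $F\subseteq\mathbb{Q}(\zeta_{2n})\cap\mathbb{R}$. Writing $2\cos\tfrac{m\pi}{n}=\zeta_{2n}^{m}+\zeta_{2n}^{-m}$ (an algebraic integer of $F$, with $F=\mathbb{Q}(\zeta_{N})^{+}$ for $N=2n/\gcd(m,2n)$), its Galois conjugates are the numbers $2\cos\tfrac{m\ell\pi}{n}$, $\gcd(\ell,2n)=1$, the one of largest modulus being $2\cos\tfrac{2\pi}{N}$; applying a common embedding of $F$ that realizes these maxima to $\cos\tfrac{j'\pi}{n}=r\cos\tfrac{j\pi}{n}$ yields $\cos\tfrac{2\pi}{N'}=r\cos\tfrac{2\pi}{N}$ with $N'=2n/\gcd(j',2n)$. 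Since $F=\mathbb{Q}(\zeta_N)^{+}=\mathbb{Q}(\zeta_{N'})^{+}$, standard facts about real cyclotomic fields force $N=N'$ (whence $r=1$) or $\{N,N'\}=\{M,2M\}$ with $M$ odd; in the latter case $\{\cos\tfrac{2\pi}{N},\cos\tfrac{2\pi}{N'}\}=\{\cos\tfrac{2\pi}{M},\cos\tfrac{\pi}{M}\}$, and $r$ is eliminated by comparing magnitudes, with the finitely many small $M$ where the degrees drop checked by a direct minimal-polynomial computation (e.g.\ $4X^{2}-2X-1$ for $\cos\tfrac{\pi}{5}$). Thus $r\in\{\pm1\}$ (or we already have a contradiction), and $r=\pm1$ contradicts $r\in(0,1)$.

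The genuine obstacle is this last case — ruling out a rational ratio of two irrational cosines of rational multiples of $\pi$ — where the cyclotomic-field input (equivalently, the Conway--Jones classification of vanishing sums of cosines) is needed; everything else is bookkeeping. I would therefore also record a shorter alternative that avoids this entirely: the statement is exactly the regularity (simplicity) of the de~Bruijn multigrid $\multigrid{n}{\half}$ proved in \cite{lutfalla2021multigrid}, because a point of $\Gamma_{\half}$ lying on two distinct hyperplanes $\hyperplane{j}{k},\hyperplane{j'}{k'}$ translates, under the multigrid/dual-tiling correspondence used in the proof of Lemma~\ref{lemma:infinite_cone}, into three or more grid lines of $\multigrid{n}{\half}$ passing through a common point of the positive horizontal half-line, which regularity forbids.
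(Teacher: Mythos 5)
Your reduction is exactly the paper's: parametrize $\Gamma_\half$, equate the two crossing parameters, and arrive at the two-term relation $(k-\half)\cos\tfrac{j'\pi}{n}=(k'-\half)\cos\tfrac{j\pi}{n}$ between cosines of rational multiples of $\pi$ with rational coefficients. The paper then finishes in one stroke by citing the Conway--Jones classification of such trigonometric diophantine equations \cite{conway1976}, which leaves only the value pair $(1,\half)$, i.e.\ your $j=0$, $j'=n/3$ case, killed by the same parity argument you give (an odd/odd ratio cannot equal $\half$). So your ``both cosines rational'' branch already reproduces the whole of the paper's proof; the difference is that you try to replace the Conway--Jones citation by a self-contained cyclotomic argument for the ``both irrational'' branch, and that sketch has concrete gaps. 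The conjugate of $2\cos\tfrac{2\pi}{N}$ of largest modulus is not $2\cos\tfrac{2\pi}{N}$ in general: for odd $N$ it is $-2\cos\tfrac{\pi}{N}$, attained at $a=(N\pm1)/2$ (and odd $N$ does occur here, e.g.\ $n=10$, $j=4$, $N=5$), so the ``common embedding realizing the maxima'' does not produce the relation $\cos\tfrac{2\pi}{N'}=r\cos\tfrac{2\pi}{N}$ as stated. The coincidence of real cyclotomic subfields needs the conductor argument, not just ``standard facts''; and in the $\{N,N'\}=\{M,2M\}$ case ``comparing magnitudes'' only shows $r\neq1$, which is no contradiction with $r$ being a ratio of odd integers --- eliminating it requires a further conjugation/sign argument or, once more, Conway--Jones. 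Since you explicitly note that the required input is equivalent to Conway--Jones, your proof closes correctly if you simply cite it, and then it coincides with the paper's.

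Your second alternative is genuinely different and is valid: a point of $\Gamma_\half\cap\hyperplane{j}{k}\cap\hyperplane{j'}{k'}$ with $(j,k)\neq(j',k')$ would correspond to at least three grid lines of $\multigrid{n}{\half}$ through a single point of the positive horizontal half-line, contradicting the regularity of $\multigrid{n}{\half}$ that the paper already imports from \cite{lutfalla2021multigrid} in the proof of Lemma~\ref{lemma:infinite_cone}; there is no circularity, since that regularity is established independently of $\bword$. What this buys is brevity and consistency with the paper's toolbox; what it costs is that the arithmetic content is merely outsourced, as the regularity proof rests on the same Conway--Jones-type input the paper invokes directly.
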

\begin{proof}
If $j=j'$ the result is trivial because in that case $k\neq k'$ and therefore $\hyperplane{j}{k} \cap \hyperplane{j'}{k'} = \emptyset$.

If $j\neq j'$, without loss of generality we assume $j<j'$. For contradiction, assume that the intersection is non-empty, \ie assume that there exists $x\in \Gamma_\half \cap \hyperplane{j}{k} \cap \hyperplane{j'}{k'}$.
By definition of $\Gamma_\half$, $x = t\gamma + (\half,\half,\dots,\half)$ for some $t \in \mathbb{R}^+$.
By definition of $\hyperplane{j}{k}$ and $\hyperplane{j'}{k'}$, $x_j = k$ and $x_{j'} = k'$.

So we get $t = (k-\half)/\cos(\tfrac{j\pi}{n}) = (k'-\half)/\cos(\tfrac{j'\pi}{n})$ which implies $(k-\half)\cos(\tfrac{j'\pi}{n}) - (k'-\half)\cos(\tfrac{j\pi}{n}) = 0$.
Since the angles are rational with $\pi$ with $0\leq \tfrac{j\pi}{n} < \tfrac{j'\pi}{n} < \tfrac{\pi}{2}$ and the coefficients are rational, this is a trigonometric diophantine equation with two terms so we can apply known results on trigonometric diophantine equations \cite{conway1976}.
Using this known result we obtain that the only possible case is $\tfrac{j\pi}{n}=0$ and $\tfrac{j'\pi}{n}=\tfrac{\pi}{3}$.
We now have $(k-\half)\cdot \half - (k'-\half) = 0$, \ie $\tfrac{k}{2} + \tfrac{1}{4} = k'$ which is impossible since $k$ and $k'$ are integers.
We get the expected contradiction so this means that the intersection is empty.
\end{proof}

\begin{lemma}[The eigenvalue matrix $\eigenMatrix{n}$ is almost orthogonal]
Let $n$ be an even integer greater than 2. Let $\eigenMatrix{n}$ be defined as
\[\eigenMatrix{n} := \left(\elementaryEigenvalue{n}{j}{i} \right)_{0\leq i,j < \frac{n}{2}} = \left(\eta_j \cos \tfrac{(2i+1)j\pi}{n}\right)_{0\leq i,j < \frac{n}{2}},\]
with $\eta_0:=1$ and $\eta_j:= 2$ for $0<j<\tfrac{n}{2}$ (Definition \ref{def:eigenvalue_matrix}).% at page \pageref{def:eigenvalue_matrix}).

The matrix $\eigenMatrix{n}$ is almost orthogonal in the sense that $\eigenMatrix{n}\cdot D\trans = D\cdot \eigenMatrix{n}\trans = \frac{n}{2}\mathrm{Id}_{\frac{n}{2}}$ with the matrix $D$ defined below.

In particular, $\eigenMatrix{n}\cdot \gamma\trans = (\frac{n}{2},0,0,\dots 0)\trans$ with $\gamma:= \tuple{\cos(\tfrac{i\pi}{n})}_{0\leq i < \frac{n}{2}}$ the optimal rhombus frequency vector (Definition \ref{def:gamma}). 
%at page \pageref{def:gamma}).
\label{lemma:eigenMatrix_orthogonal}
\end{lemma}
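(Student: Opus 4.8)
The plan is to exhibit the matrix $D$ explicitly and then verify the identity by a direct trigonometric computation, exploiting the fact (already noted in the excerpt) that $\eigenMatrix{n}$ is, up to a diagonal rescaling, a Discrete Cosine Transform (DCT-II) matrix. Concretely, I would set $N := \tfrac{n}{2}$ and define
\[ D := \left(\cos\tfrac{(2i+1)j\pi}{n}\right)_{0\leq i,j < N}, \]
so that $\eigenMatrix{n}$ is obtained from $D$ by scaling the $j$-th column by $\eta_j$; equivalently $\eigenMatrix{n} = D\cdot\mathrm{diag}(\eta_0,\dots,\eta_{N-1})$. Then $\eigenMatrix{n}\cdot D\trans = D\cdot\mathrm{diag}(\eta_0,\dots,\eta_{N-1})\cdot D\trans$, whose $(i,k)$ entry is $\sum_{j=0}^{N-1}\eta_j\cos\tfrac{(2i+1)j\pi}{n}\cos\tfrac{(2k+1)j\pi}{n}$, and it suffices to show this equals $\tfrac{n}{2}\delta_{i,k}$. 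Since $\mathrm{Id}_{n/2}$ is symmetric, the second identity $D\cdot\eigenMatrix{n}\trans = \tfrac{n}{2}\mathrm{Id}_{n/2}$ then follows immediately by transposing.

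The computational core is an elementary Dirichlet-kernel evaluation: $\sum_{j=0}^{N-1}\cos\tfrac{mj\pi}{N}$ equals $N$ when $2N\mid m$, equals $1$ when $m$ is odd, and equals $0$ otherwise. I would prove this by summing the geometric series $\sum_{j} z^j = \tfrac{z^N-1}{z-1}$ with $z=e^{\imag m\pi/N}$ (noting $z^N = (-1)^m$) and taking real parts. With this in hand, I would apply the product-to-sum formula $2\cos a\cos b = \cos(a+b)+\cos(a-b)$ (handling $j=0$ by halving both sides) to rewrite the $(i,k)$ entry as $\sum_{j=0}^{N-1}\cos\tfrac{(i+k+1)j\pi}{N} + \sum_{j=0}^{N-1}\cos\tfrac{(i-k)j\pi}{N} - 1$, and evaluate each of the two sums with the Dirichlet formula.

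The key bookkeeping, which is also where I expect the only real friction, is in the last step: writing $m_1 := i+k+1$ and $m_2 := i-k$, one uses that $m_1+m_2 = 2i+1$ is odd so $m_1$ and $m_2$ have opposite parity, and that $1\leq m_1\leq 2N-1$ and $|m_2|\leq N-1$ keep both out of the "$2N\mid m$" case except when $m_2=0$, i.e. when $i=k$. For $i=k$ the two sums contribute $1$ (from the odd $m_1=2i+1$) and $N$ (from $m_2=0$), giving $1 + N - 1 = N$; for $i\neq k$ exactly one of $m_1,m_2$ is odd and contributes $1$ while the other is even and nonzero (and not a multiple of $2N$) and contributes $0$, giving $1 - 1 = 0$. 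Hence $\eigenMatrix{n}\cdot D\trans = N\,\mathrm{Id}_{n/2} = \tfrac{n}{2}\mathrm{Id}_{n/2}$.

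For the final ``in particular'' claim, I would observe that $\gamma = \bigl(\cos\tfrac{i\pi}{n}\bigr)_{0\leq i<N}$ is exactly the $j=0$ column of $D\trans$, so $\gamma\trans = D\trans\cdot(1,0,\dots,0)\trans$, and therefore $\eigenMatrix{n}\cdot\gamma\trans = \bigl(\eigenMatrix{n}\cdot D\trans\bigr)\cdot(1,0,\dots,0)\trans = \tfrac{n}{2}(1,0,\dots,0)\trans = \bigl(\tfrac{n}{2},0,\dots,0\bigr)\trans$. Nothing here is conceptually deep once the DCT-II structure is recognized; the care needed is entirely in the parity-and-range analysis of the two cosine sums.
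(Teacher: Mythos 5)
Your proof is correct: the factorization $\eigenMatrix{n} = D\cdot\mathrm{diag}(\eta_0,\dots,\eta_{\frac{n}{2}-1})$, the product-to-sum rewriting with the $-1$ correction for the $j=0$ term, the geometric-series (Dirichlet-kernel) evaluation of $\sum_{j=0}^{N-1}\cos\tfrac{mj\pi}{N}$, and the parity-and-range analysis of $m_1=i+k+1$ and $m_2=i-k$ all check out, and the two final claims do follow by transposing and by reading off that the $i=0$ row of $D$ is exactly $\gamma$. Your route, however, is genuinely different from the paper's. The paper never computes these cosine sums directly: it takes as known the orthogonality of the full $n\times n$ DCT-III matrix $A=\left(a_i\cos\frac{i(2j+1)\pi}{2n}\right)_{0\leq i,j<n}$, extracts its even-indexed rows to get a semi-orthogonal $\frac{n}{2}\times n$ matrix $B$ with $B\cdot B\trans=\identity{\frac{n}{2}}$, folds the columns using the symmetry $\cos\frac{i(2(n-1-j)+1)\pi}{n}=\cos\frac{i(2j+1)\pi}{n}$ to obtain an orthogonal $\frac{n}{2}\times\frac{n}{2}$ matrix $C$, and then observes that $(C\trans\cdot C)_{i,j}=\tfrac{2}{n}\left(\eigenMatrix{n}\cdot D\trans\right)_{i,j}$, which gives the claim. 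So the paper buys brevity by outsourcing the trigonometric orthogonality to a cited standard fact plus two structural observations (row restriction and column folding), whereas your argument is self-contained: your Dirichlet-kernel computation is in effect a from-scratch proof of the same finite DCT orthogonality relations, at the cost of the careful bookkeeping you rightly identify as the only delicate point. Both are valid; yours does not ask the reader to know or verify DCT-III orthogonality, while the paper's derivation makes the link to the classical transform explicit.
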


\begin{proof}
Let us first recall that here $n$ is an even integer.

\emph{Definitions:} We define four matrices $A$, $B$, $C$ and $D$ by:
\begin{align*}
A&:= \left(a_i \cos\frac{i(2j+1)\pi}{2n}\right)_{0\leq i,j < n} \qquad \qquad \qquad a_i = \begin{cases} \sqrt{\frac{1}{n}} \text{ if } i=0 \\ \sqrt{\frac{2}{n}} \text{ otherwise}\end{cases}\\
B&:= \left(b_i \cos\frac{i(2j+1)\pi}{n}\right)_{0\leq i< \frac{n}{2},0\leq j < n} \qquad \qquad b_i = \begin{cases} \sqrt{\frac{1}{n}} \text{ if } i=0 \\ \sqrt{\frac{2}{n}} \text{ otherwise}\end{cases}\\
C&:= \left(c_i \cos\frac{i(2j+1)\pi}{n}\right)_{0\leq i,j< \frac{n}{2}} \qquad \qquad \qquad c_i = \begin{cases} \sqrt{\frac{2}{n}} \text{ if } i=0 \\ \frac{2}{\sqrt{n}} \text{ otherwise}\end{cases}\\
D&:= \left(\cos\frac{(2i+1)j\pi}{n}\right)_{0\leq i,j< \frac{n}{2}}
\end{align*}

\emph{Overview of the proof:}  the matrix $A$ is a Discrete Cosine Transform matrix, sometimes called DCT-III, which is know to be orthogonal. From this we prove that $B$ is \emph{semi-orthogonal} and $C$ is orthogonal. From the fact that $C$ is orthogonal we get $\eigenMatrix{n}\cdot D\trans = D\cdot \eigenMatrix{n}\trans = \tfrac{n}{2}\mathrm{Id}_{\frac{n}{2}}$ because $(\eigenMatrix{n}\cdot D\trans)_{i,k} = \tfrac{n}{2}(C\cdot C\trans)_{i,k}$.

\emph{Details and computations:} We consider as known the fact that $A$ (DCT-III) is orthogonal.
As $B$ is a rectangular matrix and not a square matrix, it cannot be orthogonal, however we say it is semi-orthogonal for $B\cdot B\trans = \mathrm{Id}_{\frac{n}{2}}$.
This holds due to the fact that $A$ is orthogonal and
\[(B\cdot B\trans)_{i,k} = \sum\limits_{j=0}^{n-1}B_{i,j}B_{k,j} = \sum\limits_{j=0}^{n-1}A_{2i,j}A_{2k,j} = (A\cdot A\trans)_{2i,2k}.\]
% So $B$ is semi-orthogonal.
Now remark that $B_{i,j}=B_{i,n-1-j}$ because
\[
\cos\left(\frac{i(2(n-1-j)+1)\pi}{n}\right)
= \cos\left(2i\pi - \frac{i(2j+1)\pi}{n}\right)
= \cos\left(\frac{i(2j+1)\pi}{n}\right),
\]

So $(B\cdot B\trans)_{i,k} = (C\cdot C\trans)_{i,k}$ because $C_{i,j}=\sqrt{2}B_{i,j}$ and
\[
(B\cdot B\trans)_{i,k}
= \sum\limits_{j=0}^{n-1} B_{i,j}B_{k,j}
= 2\sum\limits_{j=0}^{\frac{n}{2}-1} B_{i,j}B_{k,j}
= \sum\limits_{j=0}^{\frac{n}{2}-1} \sqrt{2}B_{i,j}\sqrt{2}B_{k,j}
= \sum\limits_{j=0}^{\frac{n}{2}-1} C_{i,j}C_{k,j}.
\]
So $C$ is orthogonal. Now remark that $C$ is somewhat similar to $\eigenMatrix{n}\trans$.\\
We actually have $(C\trans\cdot C)_{i,j} = \tfrac{2}{n} (\eigenMatrix{n}\cdot D\trans)_{i,j}$ , for any $0\leq i,j < n$, indeed
\begin{align*}
\left(C\trans \cdot C\right)_{i,j}
&= \sum\limits_{k=0}^{\frac{n}{2}-1} C_{k,i}C_{k,j}
= \sum\limits_{k=0}^{\frac{n}{2}-1} c_k^2 \cos\left(\frac{k(2i+1)\pi}{n}\right)\cos\left(\frac{k(2j+1)\pi}{n}\right)\\
&= \sum\limits_{k=0}^{\frac{n}{2}-1} \eta_k\frac{2}{n} \cos\left(\frac{k(2i+1)\pi}{n}\right)\cos\left(\frac{k(2j+1)\pi}{n}\right)
= \tfrac{2}{n} \left(\eigenMatrix{n}\cdot D\trans\right)_{i,j},
\end{align*}
recall that $\eigenMatrix{n}=(\eta_j \cos\tfrac{(2i+1)j\pi}{n})_{0\leq i,j < \frac{n}{2}}$ with $\eta_0 = 1$ and $\eta_j= 2$ for $j>0$, so we indeed have $c_j^2 = \tfrac{2 \eta_j}{n}$.

In particular we have $\eigenMatrix{n}\cdot \gamma\trans = (\tfrac{n}{2},0,0\dots, 0)\trans$.
\end{proof}

\bibliography{planar-rosa_article}

\end{document}